\definecolor{purple}{rgb}{.5,0,1}
\definecolor{orange}{rgb}{1,.5,0}
\definecolor{pink}{rgb}{1,0,.5}
\newcommand{\nn}{\notag}
\numberwithin{equation}{section}
\newtheorem{theorem}{Theorem}[section]
\newtheorem{proposition}[theorem]{Proposition}
\newtheorem{lemma}[theorem]{Lemma}
\newtheorem{corollary}[theorem]{Corollary}
\newtheorem{definition}[theorem]{Definition}
\newtheorem{remark}[theorem]{Remark}
\newtheorem*{theorem*}{Theorem}
\newtheorem*{corollary*}{Corollary}
\DeclareMathOperator{\dist}{dist}
\DeclareMathOperator{\diam}{diam}
\DeclareRobustCommand\widecheck[1]{{\mathpalette\@widecheck{#1}}}
\def\@widecheck#1#2{%
    \setbox\z@\hbox{\m@th$#1#2$}%
    \setbox\tw@\hbox{\m@th$#1%
       \widehat{%
          \vrule\@width\z@\@height\ht\z@
          \vrule\@height\z@\@width\wd\z@}$}%
    \dp\tw@-\ht\z@
    \@tempdima\ht\z@ \advance\@tempdima2\ht\tw@ \divide\@tempdima\thr@@
    \setbox\tw@\hbox{%
       \raise\@tempdima\hbox{\scalebox{1}[-1]{\lower\@tempdima\box
\tw@}}}%
    {\ooalign{\box\tw@ \cr \box\z@}}}
\renewcommand\H{\mathcal{H}}
\newcommand\R{\mathbb R}
\newcommand\N{\mathbb N}
\newcommand\C{\mathbb C}
\newcommand\Z{\mathbb Z}
\newcommand\G{\mathbb{G}}
\newcommand\e{\mathrm{e}}
\newcommand{\la}{\langle}
\newcommand{\ra}{\rangle}
\renewcommand\P{\mathbb P}
\newcommand\E{\mathbb E}
\newcommand\cE{\mathcal{E}}
\newcommand\cL{\mathcal{L}}
\newcommand\cB{\mathcal{B}}
\newcommand\cG{\mathcal{G}}
\newcommand\cV{\mathcal{V}}
\newcommand\cA{\mathcal{A}}
\newcommand\cF{\mathcal{F}}
\newcommand\cS{\mathcal{S}}
\newcommand\cY{\mathcal{Y}}
\newcommand\eps{\varepsilon}
\newcommand\vphi{\varphi}
\newcommand\vrho{\varrho}
\newcommand{\vs}{\varsigma}
\newcommand{\pr}{\prime}
\newcommand\wtilde{\widetilde}
\newcommand\ttau{\widetilde{\tau}}
\newcommand\tzeta{\widetilde{\zeta}}
\newcommand{\partialin}{\partial_{\mathrm{in}}}
\newcommand{\partialex}{\partial_{\mathrm{ex}}}
\newcommand{\bom}{{\boldsymbol{{\omega}}}}
\newcommand\beq{\begin{equation}}
\newcommand\eeq{\end{equation}}
\newcommand{\abs}[1]{\left\lvert #1 \right\rvert}
\newcommand{\norm}[1]{\left\lVert #1 \right\rVert}
\newcommand{\scal}[1]{\left\langle #1 \right\rangle}
\newcommand{\set}[1]{\left\{ #1 \right\}}
\newcommand{\pa}[1]{\left( #1 \right)}
\newcommand{\fl}[1]{\left\lfloor #1 \right\rfloor}
\newcommand{\br}[1]{\left [ #1 \right]}
\newcommand\La{\Lambda}
\newcommand\Th{\Theta}
\newcommand\Ups{\Upsilon}
\newcommand{\eq}[1]{\eqref{#1}}
\newcommand{\up}[1]{^{(#1)}}
\newcommand{\qtx}[1]{\quad\text{#1}\quad}
\newcommand{\mqtx}[1]{\; \ \text{#1}\; \  }
\newcommand{\sqtx}[1]{\;\text{#1}\;}
\newcommand\Chi{\raisebox{.2ex}{$\chi$}}
\begin{document}

\title[Eigensystem multiscale analysis via the Wegner estimate]{Eigensystem multiscale analysis for the Anderson model  via the Wegner estimate}

\author{Alexander Elgart}
\address[A. Elgart]{Department of Mathematics; Virginia Tech; Blacksburg, VA, 24061, USA}
 \email{aelgart@vt.edu}

\author{Abel Klein}
\address[A. Klein]{University of California, Irvine;
Department of Mathematics;
Irvine, CA 92697-3875,  USA}
 \email{aklein@uci.edu}

\thanks{A.E. was  supported in part by the NSF under grant DMS-1907435 and by  the Simons Foundation under grant 443529.}
\thanks{A.K. was  supported in part by the NSF under grant  DMS-1301641.}


\begin{abstract}
We present a new approach to the  eigensystem multiscale analysis (EMSA) for random Schr\"odinger operators that relies on the Wegner estimate.  The EMSA  treats all energies  of the finite volume operator in an energy interval at the same time,  simultaneously establishing localization of  all eigenfunctions with eigenvalues in the energy interval  with high probability. 
 It implies all the usual manifestations of localization
(pure point spectrum with exponentially decaying eigenfunctions, dynamical localization).
The new method removes the restrictive level spacing hypothesis used in the previous versions of the EMSA.   The method is presented in the context of the
 Anderson model, allowing    for  single site probability distributions that are  H\"older continuous of order  $\alpha \in (0,1]$.
  \end{abstract}
  
  \maketitle
%

\setcounter{tocdepth}{1}
\tableofcontents

\section*{Introduction}

In \cite{EK,EK2} we developed 
an eigensystem multiscale analysis (EMSA)  for proving  localization (pure point spectrum with exponentially decaying eigenfunctions, dynamical localization)    for  random Schr\"odinger operators.
The EMSA  treats all energies  of the finite volume operator in an energy interval at the same time,  simultaneously establishing localization of  all eigenfunctions with eigenvalues in the energy interval  with high probability. 
The analysis in \cite{EK,EK2} (and  its bootstrap  enhancement in \cite{KlT})  relies on a probability estimate for level spacing.  For the Anderson model with a   H\"older continuous single site probability distribution  of order  $\alpha \in (\frac 12,1]$  such an estimate is provided by   \cite[Lemma~2]{KlM}, where it is  derived from Minami's estimate \cite{M}.
(This is the  level spacing probability estimate used in \cite{EK,EK2,KlT}.) A weaker level spacing estimate is proven for the continuous Anderson model in \cite[Theorem~2.2]{DE};  it requires a covering condition for the random potential, it holds only in a certain interval at the bottom of the spectrum,  it requires the single site probability distribution to be absolutely continuous with a density that is  uniformly Lipschitz continuous and bounded below on its support, and provides weak probability estimates.  The fact that   level spacing probability estimates are not widely known, and where known require extra hypotheses,   imposes a strong limitation on the applicability of the EMSA.

The well known methods   previously developed  for proving localization for random  Schr\"o\-dinger operators are  the multiscale analysis  (MSA) (see \cite{FS,FMSS,Dr,DK,Sp,CH,FK1,FK,GKboot,Kle,BK,GKber}) and the fractional moment method (FMM) (see \cite{AM,A,ASFH,AENSS,AW}).   As opposed to the EMSA, these methods are  based on the study of   finite volume  Green's functions, and the analysis is performed  either at a fixed energy in a single box, or for all energies in an interval at once but with two boxes with an `either or' statement for each energy.   Green's functions-based  methods  do not rely on  level spacing. 
Rather, they use either explicitly   (MSA) or implicitly (FMM) a  more widely available bound,  the Wegner estimate (e.g., \cite{Weg,CH,CHK,Ki,CGK2,Kl2}).  This estimate is proven for a large family of both lattice and continuum random Schr\"odinger operators, making it possible to establish localization in these contexts.

Unfortunately, the Green's function quickly becomes an inadequate tool in the study of many-body localization,
 rending the traditional approaches to localization ineffective. The EMSA approach to localization shows more flexibility in this regard: In   a forthcoming paper, \cite{EK3}, we use the EMSA to establish  many-body localization results in the context of  random XXZ spin quantum  chains. However, as we already mentioned, the previously available version of the method uses  the  level spacing hypothesis, which (although expected) has never been proven for 
  many-body  systems so far. The main innovation of the present work is the removal of this restrictive condition, replacing it by  an argument based
   on the Wegner estimate. More precisely, the new approach uses Wegner estimates between boxes, as in \cite{FMSS,DK,GKboot,Kle}.   To illustrate the method we consider here its application to a single particle lattice Anderson model.   In this  context   it applies    when the  single-site probability distribution  is H\"older continuous of order  $\alpha \in (0,1]$,  in contrast to  the  EMSA with level spacing  of  \cite{EK,EK2} that requires  $\alpha \in (\frac 12,1]$.  Moreover, this 
 version of EMSA is expected to admit extensions  to random Schr\"odinger operators where a suitable Wegner estimate is  available, such as the  continuum Anderson model. 
\section{Definitions and results}  

A discrete  Schr\"odinger operator is   an operator  of the form 
 $H=-\Delta +V$ on $\ell^2(\Z^d)$, where $\Delta$ is the  (centered) discrete  Laplacian:  
 \begin{equation}\label{defDelta}
  (\Delta \varphi)(x):=  \sum_{\substack{y\in\Z^d\\ |y-x|=1}} \varphi(y)  \qtx{for} \varphi\in\ell^2(\Z^d),
\end{equation} 
 and  $V$ is a  bounded potential.

  \begin{definition}\label{defineAnd} The Anderson model is the
 random discrete  Schr\"odinger  operator
\beq \label{defAnd}
H_{\bom} :=  -\Delta + V_{\bom} \quad \text{on} \quad  \ell^2(\Z^d), 
\eeq 
where $V_{\bom}$ is a random potential:      $V_{\bom}(x)= \omega_x$ for  $ x \in \Z^d$, where
$\bom=\{ \omega_x \}_{x\in
\Z^d}$ is a family of independent 
identically distributed random
variables,  whose  common probability 
distribution $\mu$ has bounded support and is assumed to be H\"older continuous of order  $\alpha \in (0,1]$: 
 \beq\label{Holdercont}
S_\mu(t) \le K t^\alpha \qtx{for all} t \in [0,1],
\eeq
where $K$ is  a constant and  $S_\mu(t):= \sup_{a\in \R} \mu \set{[a, a+t]} $ is the concentration function of the measure $\mu$.
\end{definition}

 To formulate our main result we need to introduce  some additional notation. Given  $\Theta\subset \Z^d$, we let $H_\Theta$ be the restriction of $ \Chi_\Theta H\Chi_\Theta$ to $\ell^2(\Theta)$. We write $\norm{\vphi}=\norm{\vphi}_{ \ell^2(\Theta)}$ for $\vphi \in  \ell^2(\Theta)$. 
 We call $(\vphi,\lambda)$ an eigenpair for $H_\Th$ if $\vphi\in \ell^2(\Theta)$ with $\norm{\vphi}=1$,  
 $\lambda \in \R$, and $H_\Theta \vphi=\lambda \vphi$.  (In other words,    $\lambda$ is an eigenvalue for $H_\Th$ and $\vphi$ is a corresponding normalized eigenfunction.)  A collection $\set{(\vphi_j,\lambda_j)}_{j\in J}$ of eigenpairs for $H_\Th$ will be called an {\it eigensystem} for $H_\Th$ if   $\set{\vphi_j}_{j\in J}$ is an orthonormal basis for $\ell^2(\Th)$.
 If  $\Theta\subset \Z^d$ is finite,  we let
$\widetilde\sigma(H_{\Theta})$ denote the eigenvalues of $H_{\Theta}$ repeated according to multiplicity (and thought of as different points in $\widetilde\sigma(H_{\Theta})$), so an eigensystem for $H_{\Theta}$ can be rewritten as $\set{(\vphi_\lambda,\lambda)}_{\lambda\in \wtilde\sigma(H_{\Theta})}$, i.e., it can be  labeled by $\widetilde\sigma(H_{\Theta})$.

If $x=(x_1,x_2,\ldots, x_d)\in \R^d$, we set $\abs{x}=\abs{x}_2= \pa{\sum_{j=1}^dx_j^2}^{\frac 12}$   and    $\norm{x}=\abs{x}_\infty= \max_{j=1,2,\ldots,d} \abs{x_j}$. 
We consider boxes in $\Z^d$ centered at points of $\R^d$. The box in $\Z^d$  of side $L>0$  centered at $x\in \R^{d}$ is given by
\begin{align} \label{defbox}
\La_L(x)&=\La^\R_L(x)\cap \Z^d, \qtx{where}
\La^\R_L(x)= \set{y \in \R^d;\  \norm{y-x} \le  \tfrac{L}{2}}.
\end{align}
By a box  $\La_L$ we will mean a box $\La_L(x)$ for some $x\in \R^d$.  We have 
\beq
 (L-2)^{d}<  \abs  {\La_L(x)}\le  (L+1)^{d}\qtx{for all} L\ge 2 \qtx{and} x\in \R^d.
\eeq

The     EMSA  is based on the study of localized eigensystems. The relevant definitions are stated in terms of exponents  $\tau,\kappa^\pr \in (0,1)$  that will be chosen later. We use the notation $ L_{\tau}=\fl{L^{\tau}}$ for $L\ge 1$.

 \begin{definition}\label{defxmloc} Let $\La_L$ be a box, $x\in \La_L$, and  $m\ge 0$.  Then   $\vphi\in \ell^2(\La_L)$  is said to be $(x,m)$-localized if  $\norm{\vphi}=1$ and\beq\label{hypdec}
\abs{\vphi(y)}\le \e^{-m\norm{y-x}}\qtx{for all} y \in \La_L \qtx{with} \norm{y-x}\ge L_\tau.
\eeq
\end{definition}

We consider energy intervals   $I(E,A)=(E-A,E+ A)$ with center $E\in \R$ and radius  $A>0$.    (When we write  $ I(E,A)$ it will be implicit that  $E\in \R$ and  $A>0$.)
Given an interval $I=I(E,A)$, we set 
\beq\label{eq:h_I}
h_{ I}(t) =h\pa{{\tfrac{{t-E}}{{A}}}} \mqtx{for} t \in \R, \mqtx{with} h(s)=\begin{cases} 
1-s^2&\text{if} \; \; s\in (-1,1)\\0 & \text{otherwise}\end{cases}.
\eeq
Note that   $h_{ I}(t)  >0 \iff t\in I$,  which implies  $h_I=\Chi_Ih_I$.

\begin{definition}  \label{defmIloc} 
Given an energy interval  $ I=I(E,A)$,   a box $\La_L$ will be called $(m,I)$-localizing for $H$ if 
\beq\label{upbm}
L^{-\kappa^\pr} \le m   \le  \tfrac 1 2 \log \pa{1 + \tfrac {A}{4d}},
\eeq
and 
 there exists an  $(m,I)$-localized eigensystem for $H_{\La_L}$, that is,   an  eigensystem   $\set{(\vphi_\nu, \nu)}_{\nu \in \wtilde\sigma(H_{\La_L})}$ for $H_{\La_L}$ such that for all $\nu \in \wtilde\sigma(H_{\La_L})$ there is $x_\nu\in \La_L$ so $\vphi_\nu$ is $(x_\nu, m h_{ I}(\nu))$-localized.
 \end{definition}

 Given  a box $\La_\ell \subset \Theta$,   a  crucial step in our analysis
 shows that if $(\psi,\lambda)
 $ is an eigenpair for $H_{\Theta}$,  with  $\lambda \in I$  not too close to the  eigenvalues of $H_{\La_\ell}$, and  the box $\La_\ell $ is $(m,I)$-localizing for $H$, then $\psi$ is exponentially small  deep inside $\La_\ell $ (see  Lemma~\ref{lemdecay2}.).  This is    proven by expanding the values of $\psi$ inside $\La_\ell $ in terms of an $(m,I)$-localizing eigensystem  for $H_{\La_\ell}$. The problem is  we only know decay for the eigenfunctions with eigenvalues in $I$; we have no information whatsoever concerning  eigenfunctions with eigenvalues that lie outside the interval $I$.  As  in \cite{EK2},   the decay of the term containing the latter eigenfunctions  comes from the distance from the eigenvalue $\lambda$ to  the complement of the interval $I$,  and consequently   the decay rate for the  localization of an eigenfunction goes to zero  as the corresponding eigenvalue approaches the edges of the interval $I$. The introduction of the modulating function $h_I$ in the decay rate 
 models this phenomenon. 

 The control  of the term containing  eigenfunctions corresponding to eigenvalues that lie outside the interval $I$ is given by  \cite[Lemma~3.2(ii)]{EK2}, which requires the upper bound in \eq{upbm}.
 The lower bound in \eq{upbm} is a requirement for the multiscale analysis, as in  \cite{FK,GKboot,Kle,GKber}.

Our main result pertaining to  the eigensystem multiscale analysis in an energy interval is given in  the following theorem.   To state the theorem, given  exponents $ 0<\xi<\zeta<1$,  we choose
the exponents $\tau,\kappa^\pr\in (0,1)$ that appear in Definitions~\ref{defxmloc} and \ref{defmIloc},
as well as  exponents $\beta,\kappa,\varrho\in (0,1)$   and  $\gamma >1$,
satisfying the relations described in Appendix~\ref{apexp}.   In what follows, once  the exponents
$ 0<\xi<\zeta<1$  are fixed,  we always assume we choose and fix  the other exponents as in Appendix~\ref{apexp}.

 \begin{theorem}\label{thmMSA} Let $H_\bom$ be an Anderson model.   Given  $ 0<\xi<\zeta<1$,
   there exists a 
 a finite scale   $\cL= \cL(d,\xi,\zeta) $ and a constant  $C_{d}=C_{d,\xi,\zeta} >0$  with the following property:  Suppose for some scale 
$L_0 \ge \cL$ and    interval  $I_0=I(E,A_0)$  we have
  \begin{align}\label{initialconinduc9932}
\inf_{x\in \R^d} \P\set{\La_{L_0} (x) \sqtx{is}  (m_0,I_0) \text{-localizing for} \; H_{\bom}} \ge 1 -  \e^{-L_0^\zeta}.
\end{align}
 Then for all   $ L\ge L_0^\gamma $ we have 
\begin{align} \label{MSALnok2}
\inf_{x\in \R^d} \P\set{\La_{L} (x) \sqtx{is} ( m_\infty , I_\infty^L)  \text{-localizing for} \; H_{\bom}} \ge 1 -  \e^{-L^\xi}  ,
\end{align}
where
\begin{align}\label{Aminfty} 
  I_\infty^L& =  I_\infty^L(L_0)= I(E,{A_\infty}(1-L^{-\kappa})^{-1}), \\ \nn
  A_\infty &=A_\infty (L_0)= A_0 \prod_{k=0}^\infty \pa{1- L_0^{-\kappa\gamma^k}}, \\ \nn
   L_0^{-\gamma \kappa^\pr} \le \  m_\infty&=m_\infty (L_0)= m_0 \prod_{k=0}^\infty \pa{1- C_{d} L_0^{-\vrho\gamma^k}} <   \tfrac 1 2 \log \pa{1 + \tfrac {A_\infty}{4d}}.   \end{align}
In particular,  
$\lim_{L_0\to \infty} A_\infty (L_0) = A_0$ and $\lim_{L_0\to \infty} m_\infty (L_0) = m_0 $.
 \end{theorem}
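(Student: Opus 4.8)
I would run an induction over the scales $L_k=\fl{L_0^{\gamma^k}}$, $k\ge 0$, carrying the intervals $I_k=I(E,A_k)$ with $A_k=A_0\prod_{j<k}\pa{1-L_j^{-\kappa}}$ and rates $m_k=m_0\prod_{j<k}\pa{1-C_dL_j^{-\vrho}}$, and proving at scale $L_k$ a statement of the form \eqref{MSALnok2} with $L,A_\infty,m_\infty$ replaced by $L_k,A_k,m_k$ and a probability exponent that the relations of Appendix~\ref{apexp} keep bounded below by $\xi$. Since $\sum_j L_0^{-\kappa\gamma^j}$ and $\sum_j C_dL_0^{-\vrho\gamma^j}$ converge, the products converge to $A_\infty,m_\infty$; the two-sided bound on $m_\infty$ is inherited from \eqref{upbm} for $m_0$ using that both products are decreasing (and the appendix relations), and $A_\infty\to A_0$, $m_\infty\to m_0$ as $L_0\to\infty$ is immediate. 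Because the parameters $(m_\infty,I^L_\infty)$ claimed at scale $L$ are weaker (smaller rate, smaller interval) than those produced at each scale, it suffices to establish a single \emph{scale step}: if $\inf_x\P\set{\La_\ell(x)\text{ is }(m,I)\text{-localizing for }H_\bom}\ge 1-\e^{-\ell^{s}}$ for $I=I(E,A)$ and some $s\ge\xi$, then for every $L$ with $\ell^\gamma\le L\le\ell^{\gamma^2}$ (this overlapping range of admissible targets, which an appendix relation guarantees, is what lets the iteration reach every $L\ge L_0^\gamma$, not only the powers $L_0^{\gamma^k}$) one has $\inf_x\P\set{\La_L(x)\text{ is }(m',I')\text{-localizing}}\ge 1-\e^{-L^{s'}}$ with $I'=I(E,A(1-L^{-\kappa}))$, $m'=m(1-C_dL^{-\vrho})$ and $s'\ge\xi$. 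The overall scheme is that of the EMSA of \cite{EK2}, with the Wegner estimate taking over the role the level-spacing hypothesis played there.

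\textbf{The deterministic core.} Cover $\La_L$ by sub-boxes $\La_\ell(z)$, $z$ ranging over a bounded number of translates of $\ell\Z^d\cap\La_L$ (disjointness within one translate is needed for independence below). Call $\La_\ell(z)$ \emph{good} if it is $(m,I)$-localizing for $H_\bom$, and, given an eigenpair $(\psi,\lambda)$ of $H_{\La_L}$ with $\lambda\in I'$, call it \emph{$\lambda$-nonresonant} if $\dist\pa{\lambda,\wtilde\sigma(H_{\La_\ell(z)})}\ge\eta$ with $\eta=\e^{-\ell^{\beta}}$. The observation replacing level spacing is: if no two \emph{disjoint} sub-boxes have $\wtilde\sigma$'s within $2\eta$ of each other, then for every $\lambda$ at most one sub-box per translate fails to be $\lambda$-nonresonant, since two $\lambda$-resonant disjoint sub-boxes would have $2\eta$-close spectra. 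Assuming in addition that the non-good sub-boxes satisfy the multiscale clustering condition of \cite{EK2,GKboot} at scale $L$, all non-good-or-resonant sub-boxes are confined to a region $\Upsilon=\Upsilon(\psi)$ of diameter $o(L_\tau)$; outside $\Upsilon$ and an inner collar of $\partial\La_L$ every point lies deep inside a good $\lambda$-nonresonant sub-box, where Lemma~\ref{lemdecay2} applies. Expanding $\psi$ on such a sub-box in its $(m,I)$-localized eigensystem, the resonant terms drop out, the terms with eigenvalue in $I$ decay at rate $\ge m\,h_I(\lambda)$ away from their centers (using $\lambda\in I'$, so $h_I(\lambda)\ge L^{-\kappa}$), the terms with eigenvalue outside $I$ are controlled through $\dist(\lambda,\R\setminus I)$ by \cite[Lemma~3.2(ii)]{EK2} and the upper bound in \eqref{upbm}, and each expansion costs only a resolvent factor $\eta^{-1}=\e^{\ell^{\beta}}$. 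Chaining these estimates from a point $y$ with $\norm{y-\Upsilon}\ge L_\tau$ inward to $\Upsilon$ in steps of size $\sim\ell/2$ propagates the decay over all of $\La_L$; the accumulated prefactors are absorbed since $m\,h_I(\lambda)\,\ell\gg\ell^{\beta}$ (an appendix relation) and the net rate loss is, by the appendix relations, at most the factor $1-C_dL^{-\vrho}$. As $\diam\Upsilon<L_\tau$, any $x_\psi\in\Upsilon$ makes $\psi$ $(x_\psi,m'h_{I'}(\lambda))$-localized; doing this for every eigenpair exhibits an $(m',I')$-localized eigensystem for $H_{\La_L}$.

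\textbf{The probabilistic count.} One then bounds the probability that $\La_L$ violates either hypothesis of the core. The event that the non-localizing sub-boxes break the multiscale clustering condition is estimated via the inductive bound $\P\set{\La_\ell(z)\text{ not }(m,I)\text{-localizing}}\le\e^{-\ell^{s}}$ and the combinatorial lemmas of \cite{EK2,GKboot}, and the relations among $\gamma,\zeta,\xi$ in the appendix (in particular $\gamma$ small enough relative to $\zeta/\xi$) make the resulting sum $\le\tfrac12\e^{-L^{s'}}$. The event that two disjoint sub-boxes have $2\eta$-close spectra is where the Wegner estimate enters, the only new ingredient: conditioning by independence on the potential in $\La_\ell(z)$ and applying the finite-volume Wegner estimate for $H_{\La_\ell(z')}$ (valid since $\mu$ is H\"older continuous of order $\alpha$) gives $\P\set{\dist(\nu,\wtilde\sigma(H_{\La_\ell(z')}))<2\eta}\le C\abs{\La_\ell}S_\mu(4\eta)\le C\ell^{d}(4\eta)^\alpha$ for each of the $\le C\ell^d$ eigenvalues $\nu$ of $H_{\La_\ell(z)}$, hence $\le C\ell^{2d}\eta^\alpha$ per disjoint pair and $\le CL^{2d}\eta^\alpha$ after summing. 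With $\eta=\e^{-\ell^{\beta}}$ this is $CL^{2d}\e^{-\alpha\ell^{\beta}}\le\tfrac12\e^{-L^{s'}}$ provided $\beta/\gamma>\xi$ (an appendix relation), and it is harmless in the core because $\beta<1$ forces $\log(1/\eta)=\ell^{\beta}=o(\ell)$. Crucially, \emph{any} $\alpha\in(0,1]$ works, precisely because we may take $\eta$ stretched-exponentially small; this is what removes the $\alpha>\tfrac12$ restriction of the level-spacing approach. Adding the two contributions yields $1-\e^{-L^{s'}}$, closing the step and the induction.

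\textbf{Main obstacle.} The crux is the deterministic core together with its interface with the count. One must show that a bad or resonant sub-box — or an entire clustered family of them — is \emph{invisible} at scale $L$: the decay of each eigenfunction can be routed around it without the accumulated resolvent factors $\e^{\ell^{\beta}}$, combinatorial factors and finite-$\ell$ corrections spoiling the rate, \emph{while simultaneously} the number of such clustered defect configurations — governed by the inductive probability — stays compatible with the target $\e^{-L^{s'}}$ \emph{uniformly in the scale}. The delicate point is the single free parameter $\eta$: it must be small enough for the between-boxes Wegner probability to beat $\e^{-L^{s'}}$ and large enough, $\log(1/\eta)=o(\ell)$, not to degrade the rate below $m'h_{I'}$. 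The existence of such an $\eta$ for every $\alpha\in(0,1]$ is exactly what the Wegner estimate buys in place of level spacing, and checking that this, the summability of the losses in $A$ and $m$, and the propagation of the probability exponent can all be met at once is what the exponent bookkeeping of Appendix~\ref{apexp} is designed to do.
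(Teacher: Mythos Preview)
Your overall architecture is right and matches the paper's: a scale induction in which the Wegner estimate between disjoint finite-volume operators replaces the level-spacing hypothesis of \cite{EK2}. But there is a real gap in your deterministic core.

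You assert that ``all non-good-or-resonant sub-boxes are confined to a region $\Upsilon=\Upsilon(\psi)$ of diameter $o(L_\tau)$.'' This is not justified. The between-box Wegner estimate does force the $\lambda$-\emph{resonant} sub-boxes to overlap pairwise and hence cluster in a region of diameter $O(\ell)$, but the \emph{non-localizing} sub-boxes are governed only by the probabilistic clustering hypothesis and may sit \emph{anywhere} in $\La_L$, far from the resonant cluster. When you chain from a point $y$ with $\norm{y-x_\lambda}\ge L_\tau$ toward $x_\lambda$, the chain may land in such a distant bad cluster, and you give no mechanism to cross it: Lemma~\ref{lemdecay2} requires a localizing box, and applying Lemma~\ref{outbad} to individual bad boxes one at a time lets the chain ping-pong indefinitely inside the cluster, each step paying a factor $\e^{\ell^\beta}$ with no guaranteed exit.

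The paper's fix is to enlarge the family on which the Wegner separation is imposed. Around each cluster of non-localizing boxes it builds a \emph{buffered subset} $\Upsilon_r$ (Definition~\ref{defbuff}), and then demands $L$-separation for the whole family $\mathbb{S}_{\bom}=\{\Lambda_\ell(a)\}_{a\in\cG}\cup\{\Upsilon_r\}_{r=1}^R$; that is, Lemma~\ref{lemSep} is applied not only to pairs of sub-boxes but also to pairs involving the $\Upsilon_r$'s. Then for each $\lambda$ at most one element $S_\lambda\in\mathbb{S}_{\bom}$ is resonant, and every buffered subset disjoint from $S_\lambda$ satisfies $\dist(\lambda,\sigma(H_{\Upsilon_r}))\ge\tfrac12\e^{-L^\beta}$, which is exactly the hypothesis \eqref{distUps} of Lemma~\ref{buffcontrol}. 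That lemma lets the chain jump across $\Upsilon_r$ in a single step: one application of Lemma~\ref{outbad} to the \emph{whole} $\Upsilon_r$ (this is why separation from $\sigma(H_{\Upsilon_r})$ is needed) followed by Lemma~\ref{lemdecay2} on a good buffer box. The extra combinatorics---counting all possible buffered subsets, \eqref{cFN}--\eqref{probSN}---is absorbed in the probabilistic step because the separation threshold is $\e^{-L^\beta}$ (not $\e^{-\ell^\beta}$), matching the hypothesis of Lemma~\ref{lemdecay2}. Your proposal applies Wegner only between sub-boxes and never to the clusters themselves, so it lacks precisely the ingredient that makes the chaining go through.
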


We now state a corollary of Theorem~\ref{thmMSA} that encapsulates the usual forms of Anderson localization (pure point spectrum with exponentially decaying eigenfunctions, dynamical localization, etc.) on the interval $I_\infty=I(E,A_\infty)$, as in  \cite{GKsudec,GKber,EK}.  
We fix $\nu > \frac d 2$, and  given $a \in \Z^{d}$ we define
 $T_{a} $ as  the operator on   $\ell^2(\Z^d)$ given by multiplication by the function
$T_{a}(x):= \la x-a\ra^{\nu}$, where  $ \la x\ra=  \sqrt{1 + \norm{x}^2}$. Since   $\langle a +b \rangle \le \sqrt{2}\langle a \rangle
\langle b\rangle$, we have $
\| T_{a} T_{b}^{-1} \| \le 2^{\frac  {\nu}  2} \la  a -b \ra^{\nu}$.
A function
$\psi\colon \Z^d \to \C$ is a $\nu$-generalized eigenfunction for  the discrete  Schr\"odinger operator $H$ if $\psi$ is a generalized eigenfunction 
and $\norm{T_0^{-1} \psi}<\infty$.  ($\norm{T_0^{-1} \psi}<\infty$ if and only if  $\norm{T_a^{-1} \psi}<\infty$ for all $a\in \Z^d$.)
We let 
$\cV({\lambda})$ denote the collection of  $\nu$-generalized eigenfunctions for $H$ with generalized eigenvalue ${\lambda} \in \R$.
Given   ${\lambda} \in \R$ and $a,b \in \Z^{d}$, we set
 \begin{align} \label{defGWx}
W_{\lambda}\up{a}({b}):=\begin{cases} 
\sup_{\psi \in\cV({\lambda}) }
\ \frac {\abs{\psi(b)}}
{\|T_{a}^{-1}\psi \|}&
 \text{if $\cV({\lambda})\not=\emptyset$}\\0 & \text{otherwise}\end{cases}.
\end{align}
For  all $a,b,c \in \Z^d$ we have 
\begin{equation}\label{boundGW}
W\up{a}_{\lambda}({a})\le 1,\; W\up{a}_{\lambda}({b})\le\la b-a\ra^\nu,  \sqtx{and} W\up{a}_{\lambda}({c})\le 2^{\frac \nu 2} \la b-a\ra^\nu 
W\up{b}_{\lambda}({c}).
\end{equation}

 \begin{corollary}\label{thmloc}   Suppose the conclusions of  Theorem~\ref{thmMSA} hold  for an  Anderson model $H_{\bom}$, and  let $I=I_\infty$, $m=m_\infty$.   There is a finite scale   $\cL=\cL_{d,\nu} $ such that, given  $\cL\le L \in 2\N$ and  $a\in \Z^d$,    there exists an event  $\cY_{L,a}$ with the following properties:
  
  \begin{enumerate}
\item $\cY_{L,a}$  depends only on the random variables $\set{\omega_{x}}_{x \in \Lambda_{5L}(a)}$  and  
\beq\label{cUdesiredint}
  \P\set{\cY_{L,a} }\ge  1 - C \e^{-L^\xi}.
  \eeq

\item Given  $\bom \in \cY_{L,a}$,   for all  $\lambda \in I$ we have that  
\beq \label{locimpl}
\max_{b\in \La_{\frac L 3}(a)} W\up{a}_{\bom,\lambda}(b)>\e^{-\frac 1 4 m h_{ I^L} (\lambda) L} \  \Longrightarrow \ \max_{y\in A_L(a)} W\up{a}_{\bom,\lambda}(y)\le \e^{-\frac 7 {132}m  h_{ I^L}  (\lambda) \norm{y-a}},
\eeq
where
 \beq\label{Aell}
 A_L(a):=  \set{y\in \Z^d; \ \tfrac 8 7 L \le   \norm{y-a}\le \tfrac {33}{14} L}.
  \eeq
In particular, for all $\bom \in \cY_{L,a}$ and   $\lambda \in I$ we have
\beq  \label{WW}
W\up{a}_{\bom,\lambda}(a)W\up{a}_{\bom,\lambda}(y)\le 
 \e^{- \frac 7 {132}  m h_{ I^L}  (\lambda)  \norm{y-a}}\mqtx{for all} y\in A_L(a).
\eeq

   \end{enumerate}
   \end{corollary}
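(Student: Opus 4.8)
The plan is to assemble $\cY_{L,a}$ from the two inputs now available: localization of finitely many side-$L$ boxes, furnished by Theorem~\ref{thmMSA}, and separation between the spectra of \emph{disjoint} boxes, furnished by the Wegner estimate --- this is the one place where the H\"older continuity of $\mu$ genuinely enters, through a Wegner bound between boxes as in \cite{FMSS,DK,GKboot,Kle}. Fix $a$ and $L\in 2\N$ large, and write $I^L=I_\infty^L$. Elementary geometry produces a family $\mathcal F$ of at most $C_d$ boxes, each of side $L$, each contained in $\Lambda_{5L}(a)$ and disjoint from $\Lambda_{L}(a)$, such that every $y\in A_L(a)$ sits deep inside some $\widehat\Lambda=\widehat\Lambda(y)\in\mathcal F$, at distance $\ge\tfrac L7$ from its complement; the extreme point $y$ with $\norm{y-a}=\tfrac{33}{14}L$ forces its box to reach exactly to $\partial\Lambda_{5L}(a)$, which is why $A_L(a)$ is cut off at $\tfrac{33}{14}L$ and why $\cY_{L,a}$ may be taken measurable with respect to $\set{\omega_x}_{x\in\Lambda_{5L}(a)}$. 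Set $\vartheta_L=2\,\e^{-L^{\beta}}$ with $\beta$ as in Appendix~\ref{apexp}, and let $\cY_{L,a}$ be the event on which $\Lambda_L(a)$ and every $\widehat\Lambda\in\mathcal F$ is $(m,I^L)$-localizing for $H_\bom$ and $\dist\bpa{\wtilde\sigma(H_{\Lambda_L(a)}),\wtilde\sigma(H_{\widehat\Lambda})}\ge\vartheta_L$ for every $\widehat\Lambda\in\mathcal F$. By Theorem~\ref{thmMSA} the localization part fails with probability $\le(C_d+1)\,\e^{-L^\xi}$, and by the Wegner estimate between disjoint boxes the separation part fails with probability $\le C_d\,\abs{\Lambda_L(a)}^2\,S_\mu(2\vartheta_L)\le C_d'\,\vartheta_L^{\alpha}L^{2d}\le\e^{-L^\xi}$ for $L$ large, using \eqref{Holdercont} and the exponent relations of Appendix~\ref{apexp} (which ensure $\alpha L^{\beta}\ge L^\xi+2d\log L+\log C_d'$). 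This gives \eqref{cUdesiredint}.

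For \eqref{locimpl}, fix $\bom\in\cY_{L,a}$ and $\lambda\in I$, and assume the left-hand side: some $\psi\in\cV(\lambda)$ and $b_0\in\Lambda_{\frac L3}(a)$ satisfy $\abs{\psi(b_0)}>\e^{-\frac14 m h_{I^L}(\lambda)L}\,\norm{T_a^{-1}\psi}$, and by homogeneity we normalize $\norm{T_a^{-1}\psi}=1$. Since $\Lambda_{\frac L3}(a)$ lies deep inside the $(m,I^L)$-localizing box $\Lambda_L(a)$ (at depth $\ge\tfrac L3$), and $h_{I^L}(\lambda)\ge 1-(1-L^{-\kappa})^2\ge L^{-\kappa}$ for $\lambda\in I_\infty$ by \eqref{Aminfty}, the argument proving Lemma~\ref{lemdecay2} (equally valid for generalized eigenfunctions), applied directly to $\psi$, shows that $\abs{\psi(b_0)}\le Q(L)\,\e^{-m h_{I^L}(\lambda)\frac L3}$ whenever $\dist\pa{\lambda,\wtilde\sigma(H_{\Lambda_L(a)})}\ge\eta_1$, the separation threshold of that lemma, $Q(L)$ a power of $L$; as $m\gtrsim 1$ and $h_{I^L}(\lambda)L\gtrsim L^{1-\kappa}\gg\log Q(L)$, this contradicts the lower bound on $\abs{\psi(b_0)}$, so $\dist\pa{\lambda,\wtilde\sigma(H_{\Lambda_L(a)})}<\eta_1$. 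The exponents are chosen in Appendix~\ref{apexp} so that $\eta_1\le\tfrac12\vartheta_L$; hence, by the separation part of $\cY_{L,a}$, $\dist\pa{\lambda,\wtilde\sigma(H_{\widehat\Lambda})}\ge\vartheta_L-\eta_1\ge\tfrac12\vartheta_L>0$ for every $\widehat\Lambda\in\mathcal F$. Now take any $y\in A_L(a)$, the box $\widehat\Lambda=\widehat\Lambda(y)$ (which on $\cY_{L,a}$ is $(m,I^L)$-localizing with $\lambda$ at distance $\ge\tfrac12\vartheta_L$ from its spectrum), and any $\psi'\in\cV(\lambda)$ with $\norm{T_a^{-1}\psi'}=1$. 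Applying Lemma~\ref{lemdecay2} to $\widehat\Lambda$, and using that $y$ sits at depth $\ge\tfrac L7$ in $\widehat\Lambda$ while $\abs{\psi'(w)}\le\la w-a\ra^\nu\le Q(L)$ for $w\in\widehat\Lambda\subset\Lambda_{5L}(a)$, we get $\abs{\psi'(y)}\le Q(L)\,\e^{-m h_{I^L}(\lambda)\frac L7}$. Since $\tfrac L7\ge\tfrac{2}{33}\norm{y-a}$ by \eqref{Aell}, this is $\le Q(L)\,\e^{-\frac{8}{132}m h_{I^L}(\lambda)\norm{y-a}}\le\e^{-\frac{7}{132}m h_{I^L}(\lambda)\norm{y-a}}$ for $L$ large, the last step absorbing $Q(L)$ into the smaller rate because $\tfrac{1}{132}m h_{I^L}(\lambda)\norm{y-a}\gtrsim L^{1-\kappa}\gg\log Q(L)$. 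Taking the supremum over $\psi'\in\cV(\lambda)$ yields the right-hand side of \eqref{locimpl}.

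The bound \eqref{WW} now follows from \eqref{locimpl} and \eqref{boundGW}: if the left-hand side of \eqref{locimpl} holds, then $W\up{a}_{\bom,\lambda}(y)\le\e^{-\frac{7}{132}m h_{I^L}(\lambda)\norm{y-a}}$ and $W\up{a}_{\bom,\lambda}(a)\le1$, so \eqref{WW} is immediate; if it fails, then in particular $W\up{a}_{\bom,\lambda}(a)\le\e^{-\frac14 m h_{I^L}(\lambda)L}$, and combining with $W\up{a}_{\bom,\lambda}(y)\le\la y-a\ra^\nu$ and $\norm{y-a}\le\tfrac{33}{14}L$ gives $W\up{a}_{\bom,\lambda}(a)\,W\up{a}_{\bom,\lambda}(y)\le\e^{-\frac14 m h_{I^L}(\lambda)L}\la y-a\ra^\nu\le\e^{-\frac{7}{132}m h_{I^L}(\lambda)\norm{y-a}}$, since $\tfrac14 m h_{I^L}(\lambda)L-\tfrac{7}{132}m h_{I^L}(\lambda)\norm{y-a}\ge\tfrac18 m h_{I^L}(\lambda)L\gtrsim L^{1-\kappa}$ dominates $\nu\log\la y-a\ra$.

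The crux is the step that replaces the level-spacing hypothesis of \cite{EK,EK2,KlT}: the left-hand side of \eqref{locimpl} only pins $\lambda$ to within $\eta_1$ of the spectrum of the single box $\Lambda_L(a)$, so propagating decay of \emph{all} generalized eigenfunctions onto $A_L(a)$ requires $\lambda$ to be quantitatively separated from the spectrum of each covering box $\widehat\Lambda$; because these boxes are disjoint from $\Lambda_L(a)$, this separation is exactly what a Wegner estimate between disjoint boxes provides, uniformly in $\lambda\in I_\infty$. The delicate bookkeeping is to make the three scales $\eta_1$ (from Lemma~\ref{lemdecay2}), $\vartheta_L$ (from the Wegner estimate), and the target rate $\tfrac{7}{132}m h_{I^L}(\lambda)$ mutually compatible over all of $I_\infty$ while keeping $1-\P\set{\cY_{L,a}}\le C\e^{-L^\xi}$, and to verify that the side-$L$ boxes really can be packed inside $\Lambda_{5L}(a)$ with every point of $A_L(a)$ at depth $\ge\tfrac L7$ --- which is where the constants $\tfrac14$, $\tfrac87$, $\tfrac{33}{14}$, $\tfrac{7}{132}$ are pinned down.
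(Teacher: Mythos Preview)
Your proposal is correct and follows essentially the same route as the paper's proof: build $\cY_{L,a}$ from (a) a finite family of side-$L$ boxes in $\Lambda_{5L}(a)$ that are $(m,I^L)$-localizing, and (b) spectral separation between $\Lambda_L(a)$ and the outer boxes via the Wegner estimate between disjoint boxes; then use Lemma~\ref{lemdecay2} twice---first on $\Lambda_L(a)$ to pin $\lambda$ near $\sigma(H_{\Lambda_L(a)})$ (this is exactly the content of the paper's Lemma~\ref{lemWimp}, which you reproduce inline), then on the outer box containing $y$ to extract decay.  The only visible differences are cosmetic: the paper works with the full half-lattice $\set{\Lambda_L(b)}_{b\in a+\frac12 L\Z^d,\ \norm{b-a}\le 2L}$ and the notion of an $L^\gamma$-separated family (threshold $\e^{-L^{\gamma\beta}}$), whereas you keep only the outer boxes disjoint from $\Lambda_L(a)$ and use the coarser threshold $\vartheta_L=2\e^{-L^\beta}$; both choices satisfy the hypothesis of Lemma~\ref{lemdecay2} (whose threshold at box-scale $L$ is $\tfrac12\e^{-L^{\gamma\beta}}\le \e^{-L^\beta}$) and both give the probability bound since $\xi<\beta$.
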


   Although Corollary~\ref{thmloc} looks exactly like \cite[Theorem~1.7]{EK2},  Theorem~\ref{thmMSA}
is not the same as \cite[Theorem~1.6]{EK2} (the definitions of a localizing box are different,  the conclusion \eq{MSALnok2} is stated differently from \cite[Equation~(1.20)]{EK2}). For this reason  the derivation of 
   Corollary~\ref{thmloc} from  Theorem~\ref{thmMSA} has some differences from the derivation of 
    \cite[Theorem~1.7]{EK2} from   \cite[Theorem~1.6]{EK2}, so it is included in this paper.

The usual forms of localization can be derived from Corollary~\ref{thmloc}  and  are stated in the  following corollary.

\begin{corollary} \label{corloc}  Suppose the conclusions of  Theorem~\ref{thmMSA} hold  for an  Anderson model $H_{\bom}$, and  let $I=I_\infty$, $m=m_\infty$. Then the following holds with probability one:
  \begin{enumerate}
 \item  {$H_{\bom}$}  has  pure point spectrum in the interval $I$.

\item   If   {$\psi_\lambda$}  is an {eigenfunction} of $H_{\bom}$
with eigenvalue  {$\lambda\in I$}, then $\psi_\lambda$ is exponentially localized  with rate of decay $ \frac 7 {132} m  h_{I}(\lambda)$,   more precisely,
\begin{equation}\label{expdecay222}
\abs{\psi_\lambda(x)} \le C_{\bom,\lambda}\norm{T_0^{-1} \psi}\, e^{-  \frac 7 {132}m  h_{I}(\lambda)\norm{x}} \qquad \text{for all}\quad  x \in \R^{d}.
\end{equation}

 \item If  $\lambda \in I$, then for all   $x,y \in \Z^d$ we have 
 \begin{align}\label{eqWW}
W\up{x}_{\bom,\lambda}(x)W\up{x}_{\bom,\lambda}(y)
\le   C_{m,\bom,\nu} \pa{h_{I} (\lambda)}^{-\nu}\! \e^{(\frac 4 {33} +\nu)m  h_{I} (\lambda)   (2d\log \scal{x})^{\frac 1 \xi}}\!  \e^{- \frac 7 {132} m  h_{I} (\lambda) \norm{y-x}} .
\end{align}

\item  If  $\lambda \in I$, then for  $ \psi\in \Chi_{\set{\lambda}}(H_{\bom})$ and all $x,y \in \Z^d$ we have
\begin{align}\label{eqWW2}
 &\abs{\psi(x)}\abs{\psi(y)}\\ \notag
& \
\le  C_{m,\bom,\nu} \pa{h_{I} (\lambda)}^{-\nu}\, \norm{T_x^{-1} \psi}^2\e^{(\frac 4 {33} +\nu)m  h_{I} (\lambda)   (2d\log \scal{x})^{\frac 1 \xi}}  \e^{- \frac 7 {132} m  h_{I} (\lambda) \norm{y-x}}  \\ \notag
&\  \le 2^\nu C_{m,\bom,\nu} \pa{h_{I} (\lambda)}^{-\nu}\, \norm{T_0^{-1} \psi}^2\la x\ra^{2\nu}\e^{(\frac 4 {33} +\nu)m  h_{I} (\lambda)   (2d\log \scal{x})^{\frac 1 \xi}}  \e^{- \frac 7 {132} m  h_{I} (\lambda) \norm{y-x}} .
 \end{align}

\item If  $\lambda \in I$, then there exists $x_\lambda=x_{\bom,\lambda} \in \Z^d$, such that for $ \psi\in \Chi_{\set{\lambda}}(H_{\bom})$ and  all $x \in \Z^d$ we have
\begin{align}\nn
&\abs{\psi(x)} 
 \le C_{m,\bom,\nu} \pa{h_{I} (\lambda)}^{-\nu}\!\norm{T_{x_\lambda}^{-1} \psi} \e^{(\frac 4 {33} +\nu)m  h_{I} (\lambda)   (2d\log \scal{x_\lambda})^{\frac 1 \xi}}  \e^{- \frac 7 {132} m  h_{I} (\lambda) \norm{x-x_\lambda}} \\ 
  &   \le  
 2^{\frac \nu 2} C_{m,\bom,\nu} \pa{h_{I} (\lambda)}^{-\nu}\!\norm{T_0^{-1} \psi}\!\la x_\lambda\ra^{\nu}\e^{(\frac 4 {33} +\nu)m  h_{I} (\lambda)   (2d\log \scal{x_\lambda})^{\frac 1 \xi}}\!  \e^{- \frac 7 {132} m  h_{I} (\lambda) \norm{x-x_\lambda}} .
 \end{align}

\end{enumerate}
\end{corollary}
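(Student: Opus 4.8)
The plan is to derive all of (i)--(v) from Corollary~\ref{thmloc} by combining two Borel--Cantelli arguments, the elementary bounds \eqref{boundGW}, and the generalized eigenfunction expansion for $H_{\bom}$; the exponents fixed via Theorem~\ref{thmMSA} and the annuli in \eqref{Aell} are arranged precisely so that this succeeds with the stated constants. Since $\sum_{L\in 2\N}\e^{-L^{\xi}}<\infty$, and since, letting $L_{x}$ be the least even integer $\ge\max\{2,(2d\log\la x\ra)^{1/\xi}\}$ (so $\e^{-L_{x}^{\xi}}\le\la x\ra^{-2d}$), one has $\sum_{x\in\Z^{d}}\sum_{L\in 2\N,\,L\ge L_{x}}\e^{-L^{\xi}}<\infty$, the first Borel--Cantelli lemma (applied twice, to the events $\cY_{L,a}^{c}$ and to the events $\bigcup_{L\in 2\N,\,L\ge L_{x}}\cY_{L,x}^{c}$) produces a full--probability event $\Omega_{\infty}$ on which: (a) for every $a\in\Z^{d}$ there is a least even scale $L_{1}(\bom,a)\ge 2$ with $\cY_{L,a}$ holding for all even $L\ge L_{1}(\bom,a)$; and (b) there is a finite $\bom$--dependent scale $\bar L(\bom)$ with $\cY_{L,x}$ holding for all $x\in\Z^{d}$ and all even $L\ge\max\{L_{x},\bar L(\bom)\}$. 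We fix $\bom\in\Omega_{\infty}$. Since $\tfrac{33/14}{8/7}=\tfrac{33}{16}>2$, the annuli $A_{L}(a)$ and $A_{L+2}(a)$ overlap for every even $L\ge 2$, so $\bigcup_{L\in 2\N,\,L\ge L_{1}}A_{L}(a)=\{y:\norm{y-a}\ge\tfrac{8}{7}L_{1}\}$ for any even $L_{1}\ge 2$; taking $L_{1}=L_{1}(\bom,a)$ and invoking \eqref{WW} (valid on all of $\cY_{L,a}$) together with $h_{I^{L}}\ge h_{I}$ (immediate from $I^{L}\supseteq I$ and the definition of $h$) gives, for every $a\in\Z^{d}$ and $\lambda\in I$,
\begin{equation}\label{planWdecay}
W\up{a}_{\bom,\lambda}(a)\,W\up{a}_{\bom,\lambda}(y)\le\e^{-\frac{7}{132}m\,h_{I}(\lambda)\norm{y-a}}\qquad\text{for all }y\text{ with }\norm{y-a}\ge R_{a}(\bom),
\end{equation}
where $R_{a}(\bom):=\tfrac{8}{7}L_{1}(\bom,a)<\infty$. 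Finally, since $\nu>\tfrac{d}{2}$, the generalized eigenfunction expansion guarantees that a maximal spectral measure $\rho$ of $H_{\bom}$ assigns zero mass to $\{\lambda:\cV(\lambda)=\{0\}\}$.

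Assertions (i) and (ii) follow straight from \eqref{planWdecay}. Let $\lambda\in I$ and $0\neq\psi\in\cV(\lambda)$, and pick $a_{0}$ with $\psi(a_{0})\neq 0$; then $W\up{a_{0}}_{\bom,\lambda}(a_{0})\ge\abs{\psi(a_{0})}/\norm{T_{a_{0}}^{-1}\psi}>0$, and dividing \eqref{planWdecay} (with $a=a_{0}$) by this quantity while using $\abs{\psi(y)}\le\norm{T_{a_{0}}^{-1}\psi}\,W\up{a_{0}}_{\bom,\lambda}(y)$ gives $\abs{\psi(y)}\le C\,\e^{-\frac{7}{132}m\,h_{I}(\lambda)\norm{y-a_{0}}}$ for $\norm{y-a_{0}}\ge R_{a_{0}}(\bom)$. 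Since $m=m_{\infty}>0$ and $h_{I}(\lambda)>0$ for $\lambda\in I$, this forces $\psi\in\ell^{2}$, so $\lambda$ is an eigenvalue of $H_{\bom}$. Hence any $\lambda\in I$ that is not an eigenvalue lies in the $\rho$--null set $\{\cV(\lambda)=\{0\}\}$; since distinct eigenvalues have orthogonal eigenfunctions there are at most countably many of them, so the spectral measure of $H_{\bom}$ restricted to $I$ is carried by a countable set, which is (i). For (ii), given an eigenfunction $\psi_{\lambda}$ with $\lambda\in I$, take $a_{0}$ a maximizer of $\abs{\psi_{\lambda}}$ (so $\norm{T_{a_{0}}^{-1}\psi_{\lambda}}\le C_{d,\nu}\abs{\psi_{\lambda}(a_{0})}$ and the constant $C$ above is $\lesssim\norm{T_{0}^{-1}\psi_{\lambda}}$), pass from decay in $\norm{y-a_{0}}$ to decay in $\norm{y}$ via $\norm{y-a_{0}}\ge\norm{y}-\norm{a_{0}}$, bound the finitely many $y$ with $\norm{y-a_{0}}<R_{a_{0}}(\bom)$ by $\abs{\psi_{\lambda}(y)}\le\norm{T_{0}^{-1}\psi_{\lambda}}\la y\ra^{\nu}$, and collect everything into $C_{\bom,\lambda}$; this is \eqref{expdecay222}.

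For (iii) fix $\lambda\in I$ and $x\neq y$ in $\Z^{d}$, and set $r:=\norm{y-x}$ and $M_{x}:=\max\{L_{x},\bar L(\bom)\}$. If $r\ge\tfrac{33}{14}M_{x}$ (note $\tfrac{33}{14}M_{x}\ge\tfrac{33}{7}$, so such $r$ are not too small), one may choose an even $L\in[\tfrac{14}{33}r,\tfrac{7}{8}r]$; then $y\in A_{L}(x)$ and $L\ge\tfrac{14}{33}r\ge M_{x}$, so $\cY_{L,x}$ holds and \eqref{WW} with $h_{I^{L}}\ge h_{I}$ gives $W\up{x}_{\bom,\lambda}(x)\,W\up{x}_{\bom,\lambda}(y)\le\e^{-\frac{7}{132}m\,h_{I}(\lambda)r}$, which is dominated by the right--hand side of \eqref{eqWW}. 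Otherwise $r<\tfrac{33}{14}M_{x}$, and one uses the crude bound $W\up{x}_{\bom,\lambda}(x)\,W\up{x}_{\bom,\lambda}(y)\le\la y-x\ra^{\nu}$ from \eqref{boundGW}: when $M_{x}=\bar L(\bom)$ (a bounded range of $x$) the resulting bound is an $\bom$--dependent constant, made $\lambda$--free using $h_{I}(\lambda)\le 1$; when $M_{x}=L_{x}\asymp(2d\log\la x\ra)^{1/\xi}$ one uses $\tfrac{7}{132}\cdot\tfrac{33}{14}=\tfrac{1}{8}<\tfrac{4}{33}+\nu$ (valid since $\nu>\tfrac{1}{264}$) together with the elementary inequality $t^{\nu}\e^{-ct}\le(\nu/(c\e))^{\nu}$ to absorb $\la y-x\ra^{\nu}\e^{\frac{7}{132}m\,h_{I}(\lambda)r}$ into $C_{m,\bom,\nu}(h_{I}(\lambda))^{-\nu}\e^{(\frac{4}{33}+\nu)m\,h_{I}(\lambda)(2d\log\la x\ra)^{1/\xi}}$. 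This proves \eqref{eqWW}, i.e.\ (iii). Assertion (iv) is then immediate from \eqref{eqWW} by inserting $\abs{\psi(x)}\le\norm{T_{x}^{-1}\psi}\,W\up{x}_{\bom,\lambda}(x)$, $\abs{\psi(y)}\le\norm{T_{x}^{-1}\psi}\,W\up{x}_{\bom,\lambda}(y)$ and $\norm{T_{x}^{-1}\psi}\le 2^{\frac{\nu}{2}}\la x\ra^{\nu}\norm{T_{0}^{-1}\psi}$; and (v) follows from \eqref{eqWW2} by taking $x_{\lambda}$ a maximizer of $\abs{\psi}$ over $\Z^{d}$, noting $\norm{T_{x_{\lambda}}^{-1}\psi}\le C_{d,\nu}\abs{\psi(x_{\lambda})}$ (so that dividing through by $\abs{\psi(x_{\lambda})}$ keeps only the first power of $\norm{T_{x_{\lambda}}^{-1}\psi}$), and using $\norm{T_{x_{\lambda}}^{-1}\psi}\le 2^{\frac{\nu}{2}}\la x_{\lambda}\ra^{\nu}\norm{T_{0}^{-1}\psi}$.

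The conceptual heart of the argument sits in the first paragraph: the overlapping--annuli covering, which upgrades the one--scale information in \eqref{WW} to honest exponential decay, together with the observation that a nonzero $\psi\in\cV(\lambda)$ already furnishes a center $a_{0}$ at which $W\up{a_{0}}_{\bom,\lambda}(a_{0})$ is bounded below. The main obstacle --- and essentially all of the remaining work --- is the bookkeeping in the third paragraph: keeping track of the three scales $r=\norm{y-x}$, the annulus scale $L\asymp r$, and the ``entry scale'' $L_{x}\asymp(2d\log\la x\ra)^{1/\xi}$, checking that the polynomial losses in the small--$r$ regime are absorbed by the advertised $(h_{I}(\lambda))^{-\nu}$ and $\e^{(\frac{4}{33}+\nu)m\,h_{I}(\lambda)(2d\log\la x\ra)^{1/\xi}}$ factors without any $\lambda$--dependence leaking into the constant, and confirming that the precise numerical exponents $\tfrac{7}{132}$ and $\tfrac{4}{33}$ emerge --- but none of it requires an idea beyond Corollary~\ref{thmloc}.
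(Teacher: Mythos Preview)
Your argument is correct and follows exactly the approach the paper defers to (it gives no proof of its own, referring instead to \cite{EK2}): two Borel--Cantelli applications on the events $\cY_{L,a}^c$, the overlapping-annuli covering of $\{y:\norm{y-a}\ge\tfrac{8}{7}L_1\}$ to upgrade \eqref{WW} to exponential decay at all large distances, and the generalized-eigenfunction expansion to conclude pure point spectrum. One small imprecision worth flagging: in (ii) and (v) you choose the center $a_0$ (resp.\ $x_\lambda$) as a maximizer of $|\psi|$, which makes it depend on $\psi$ rather than only on $(\bom,\lambda)$ as the statement asserts; the easy fix is to choose $x_\lambda$ once as a maximizer of any fixed nonzero eigenfunction in $\Chi_{\{\lambda\}}(H_\bom)$, note that your own computation then gives $W\up{x_\lambda}_{\bom,\lambda}(x_\lambda)\ge C_{d,\nu}^{-1}$, and deduce the decay for \emph{every} $\psi$ in the eigenspace directly from (iii) via $|\psi(y)|\le \norm{T_{x_\lambda}^{-1}\psi}\,W\up{x_\lambda}_{\bom,\lambda}(y)$.
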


 In Corollary~\ref{corloc}, (i) and (ii)  are statements of Anderson localization, (iii) and (iv) are statements of dynamical localization ((iv) is called   SUDEC (summable uniform decay of eigenfunction correlations) in \cite{GKsudec}), and (v) is SULE (semi-uniformly localized eigenfunctions; see
\cite{DRJLS0,DRJLS,GKsudec}).
Statements of localization in expectation can also be derived, as in \cite{GKsudec,GKber}.

The proof of Corollary~\ref{corloc} from  Corollary~\ref{thmloc} is the same as the proof of \cite[Corollary~1.8]{EK2}  from \cite[Theorem~1.7]{EK2}, with some obvious modifications,  so we refer to \cite{EK2}.

Theorem~\ref{thmMSA}  also implies localization at the bottom of the spectrum as in  \cite[Section 2]{EK2}.

The conclusions of Theorem~\ref{thmMSA}  are equivalent to the conclusions of the energy interval multiscale analysis \cite{FMSS,DK,GKboot,Kle}; this can be seen  proceeding  as in \cite[Section~6]{EK2}. Finally, we stress that the theorem holds for Anderson models whose single-site probability distributions  satisfy \eqref{Holdercont}.

In   the remainder of this paper  we fix  $ 0<\xi<\zeta<1$ and the corresponding 
  exponents $\tau, \beta,\kappa,\kappa^\pr, \varrho\in (0,1)$   and  $\gamma >1$,
as in  Appendix~\ref{apexp}.   The deterministic lemmas for the EMSA are introduced in Section~\ref{secprep}.  The probability estimates based on Wegner estimates are presented in Section~\ref {secprobest}.
Theorem~\ref{thmMSA} is proven in Section~\ref{secEMSA}.    The proof of Corollary~\ref{thmloc} is 
given in Section~\ref{seclocproof}.

\section{Lemmas for  the eigensystem multiscale analysis}\label{secprep}

In this section we introduce notation and deterministic   lemmas that will  play an important role in the eigensystem multiscale analysis.  By $H$ we always denote  a  discrete  Schr\"odinger operator
  $H=- \Delta +V$ on $\ell^2(\Z^d)$. We also fix an interval $I=I(E,A)$.

\subsection{Preliminaries}
 
  Let  $\Phi \subset \Theta\subset \Z^d$.     We define  the boundary, exterior boundary, and interior boundary of $\Phi$ relative to $\Theta$, respectively,  by
 \begin{align}\label{defbdry}
  \boldsymbol{ \partial}^{ \Theta} \Phi &=\set{(u,v) \in \Phi\times \pa{\Theta\setminus \Phi}; \  \abs{u-v}=1},
   \\
 \partial_{\mathrm{ex}}^{ \Theta} \Phi &=\set{v \in\pa{\Theta\setminus \Phi}; \ (u,v) \ \in  \boldsymbol{ \partial}^{ \Theta} \Phi\qtx{for some}u \in \Phi},\notag
    \\
 \partial^{ \Theta}_{\mathrm{in}}\Phi &=\set{u \in {\Phi}; \ (u,v) \ \in  \boldsymbol{ \partial}^{ \Theta} \Phi\qtx{for some}v \in \Theta\setminus \Phi}.\notag
   \end{align}
If $t\ge 1$, we let
\begin{align}\label{defLatTh} 
 \Phi^{\Th,t}& = \set{y\in \Phi;   \; \dist \pa{y,{ \Th}\setminus \Phi }> \fl{t}} \qtx{and}
 {\partial}_{\mathrm{in}}^{\Th,t} \Phi   = \Phi \setminus  \Phi^{\Th,t}.
 \end{align}
  We use the notation  
\beq \label{Ry}
R_{\Theta}(y) = \dist \pa{y, \partial^{ \Theta}_{\mathrm{in}}\Phi} \qtx{for} y \in \Phi.
\eeq

For a box   $\La_L\subset \Th \subset \Z^d$ we  write   $\La_L^{\Th,t}(x)= \pa{\La_L(x)}^{\Th,t}$. 
For $ L\ge 2$  
 we have  
  \beq\label{bdryest}
\abs{\partial_{\mathrm{in}}^{ \Th} \La_L }\le\abs{\partial_{\mathrm{ex}}^{ \Th} \La_L } =\abs{ \boldsymbol{ \partial}^{ \Th} \La_L }\le s_{d} L^{d-1}, \qtx{where} s_d= 2^{d} d.
\eeq
For $v \in \Th$ we let   $\hat{v} \in   \partial_{\mathrm{in}}^\Th { \La_L} $ be the unique $u \in   \partial_{\mathrm{in}}^\Th { \La_L} $
 such that $(u,v)\in \boldsymbol{\partial}^\Th { \La_L} $ if $v\in \partial_{\mathrm{ex}}^\Th { \La_L} $,
 and set $\hat v=0$ otherwise.

If $\Phi \subset \Theta\subset \Z^d$, we consider $\ell^2(\Phi)\subset \ell^2(\Theta)$ by extending functions on $\Phi$ to functions on $\Theta$ that are identically $0$ on $\Theta\setminus \Phi$.  
We have
\begin{gather}\label{Hdecomp1}
H_{ \Theta}= H_{ \Phi}\oplus H_{ \Theta\setminus  \Phi} + \Gamma_{ \boldsymbol{ \partial}^{ \Theta}  \Phi} 
\qtx{on} \ell^2( \Theta)=\ell^2( \Phi)\oplus \ell^2( \Theta\setminus \Phi), \\ \nn
\text{where}\quad \Gamma_{   \boldsymbol{ \partial}^{ \Theta}  \Phi}(u,v)=
\begin{cases}
-1 & \text{if either}\  (u,v) \sqtx{or}(v,u) \in   \boldsymbol{ \partial}^{ \Theta}  \Phi\\
\ \ 0 & \text{otherwise}
\end{cases}.
\end{gather}
Given  $J \subset \R$, we set  $\sigma_J(H_\Theta)= \sigma(H_\Theta) \cap J$ and  $\wtilde\sigma_J(H_\Theta)= \wtilde\sigma(H_\Theta) \cap J$.

A function  $\psi\colon{\Theta} \to \C$ is called a generalized eigenfunction for $H_{\Theta}$ with generalized eigenvalue $\lambda \in \R$, and $(\psi,\lambda)$ is called a generalized eigenpair for $H_{\Theta}$, if $\psi$ is not identically
 zero and 
 \beq \label{pointeig} 
 \scal{(H_\Th -\lambda)\vphi, \psi}=0 \qtx{for all} \vphi \in \ell^2(\Th)\quad \text{with finite support}.
 \eeq

\begin{lemma}\label{outbad} Let  $ \Theta\subset\Z^d $ and  let  $(\psi,\lambda)$ be a generalized eigenpair for $H_{\Theta}$. Let 
$\Phi \subset \Theta$ finite, $\eta>0$, and suppose
\beq\label{distPhi0}
\dist\pa{\lambda, \sigma(H_{\Phi})} \ge \eta.
\eeq
Then for all $y \in \Phi$ we have
\beq\label{distPhi3}
\abs{\psi(y) } \le 2d \eta^{-1} \abs{\partial_{\mathrm{ex}}^{ \Theta}{\Phi}}^{\frac 12}\abs{\psi(y_1) } \qtx{for some}  y_1\in \partial_{\mathrm{ex}}^{{\Theta}}  \Phi.
\eeq
 The estimate   \eq{distPhi3} also holds (trivially)  for   $y \in  \partial_{\mathrm{ex}}^{{\Theta}}  \Phi$ if $ 2d \eta^{-1}\ge 1$.
\end{lemma}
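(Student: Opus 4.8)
The plan is to localize the eigenvalue equation to the finite set $\Phi$ and then invert $H_\Phi-\lambda$. First, taking $\vphi=\delta_u$ in the defining relation \eqref{pointeig} and using that $H_\Theta$ is a real symmetric matrix, one sees that a generalized eigenpair satisfies $\bpa{(H_\Theta-\lambda)\psi}(u)=0$ for every $u\in\Theta$, where $(H_\Theta-\lambda)\psi$ denotes the (local) matrix action on the possibly non-$\ell^2$ function $\psi$. Applying the local decomposition \eqref{Hdecomp1} to $\psi$ and multiplying by $\Chi_\Phi$, so that the $H_{\Theta\setminus\Phi}$ term disappears, turns this into the honest identity in $\ell^2(\Phi)$
\beq\label{eqoutbadplan}
(H_\Phi-\lambda)\,\Chi_\Phi\psi=-\,\Chi_{\partialin^{\Theta}\Phi}\,\Gamma_{\boldsymbol{\partial}^{\Theta}\Phi}\,\Chi_{\partialex^{\Theta}\Phi}\,\psi=:g\in\ell^2(\Phi),
\eeq
where I used that $\Gamma_{\boldsymbol{\partial}^{\Theta}\Phi}$ only connects $\partialin^{\Theta}\Phi$ to $\partialex^{\Theta}\Phi$; explicitly $g(u)=-\sum_{v\in\Theta\setminus\Phi,\,\abs{u-v}=1}\psi(v)$, supported on $\partialin^{\Theta}\Phi$, and $g$ is well defined because each such sum is finite.

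Second, since $H_\Phi$ is self-adjoint on $\ell^2(\Phi)$ and $\dist\pa{\lambda,\sigma(H_\Phi)}\ge\eta$ by \eqref{distPhi0}, the resolvent $(H_\Phi-\lambda)^{-1}$ exists with $\norm{(H_\Phi-\lambda)^{-1}}\le\eta^{-1}$, so $\Chi_\Phi\psi=(H_\Phi-\lambda)^{-1}g$. To turn this into a pointwise bound, for $y\in\Phi$ I write $\abs{\psi(y)}=\abs{\scal{\delta_y,(H_\Phi-\lambda)^{-1}g}}$ and move the (self-adjoint) resolvent onto $\delta_y$, so that Cauchy--Schwarz together with $\norm{\delta_y}=1$ gives $\abs{\psi(y)}\le\eta^{-1}\norm{g}$. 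It remains to bound $\norm{g}$ by the values of $\psi$ on $\partialex^{\Theta}\Phi$: from \eqref{eqoutbadplan} and $\norm{\Gamma_{\boldsymbol{\partial}^{\Theta}\Phi}}\le 2d$ — a Schur-test bound for the $\{0,-1\}$-valued symmetric matrix $\Gamma_{\boldsymbol{\partial}^{\Theta}\Phi}$, each of whose rows has at most $2d$ nonzero entries since every site of $\Z^d$ has $2d$ neighbours — one obtains
\beq\label{eqoutbadplan2}
\norm{g}\le 2d\,\norm{\Chi_{\partialex^{\Theta}\Phi}\psi}\le 2d\,\abs{\partialex^{\Theta}\Phi}^{\frac12}\max_{v\in\partialex^{\Theta}\Phi}\abs{\psi(v)}.
\eeq
Combining \eqref{eqoutbadplan2} with $\abs{\psi(y)}\le\eta^{-1}\norm{g}$ and taking $y_1\in\partialex^{\Theta}\Phi$ that attains the maximum yields \eqref{distPhi3}. (Alternatively one can avoid the operator-norm language and estimate $\abs{g(u)}^2$ directly by Cauchy--Schwarz, sum over $\boldsymbol{\partial}^{\Theta}\Phi$, and use that each $v\in\partialex^{\Theta}\Phi$ occurs in at most $2d$ boundary pairs; this produces the same constant $2d$.)

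For the final clause, if $y\in\partialex^{\Theta}\Phi$ one simply takes $y_1=y$ in \eqref{distPhi3}: the resulting inequality $\abs{\psi(y)}\le 2d\eta^{-1}\abs{\partialex^{\Theta}\Phi}^{\frac12}\abs{\psi(y)}$ is trivial when $\psi(y)=0$ and otherwise holds as soon as $2d\eta^{-1}\abs{\partialex^{\Theta}\Phi}^{\frac12}\ge 1$, which follows from $2d\eta^{-1}\ge 1$ since $y\in\partialex^{\Theta}\Phi$ forces $\abs{\partialex^{\Theta}\Phi}\ge 1$. I do not expect a genuine obstacle here; this is a routine \emph{a priori} interior estimate. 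The only two points that warrant a moment's care are (i) converting the distributional eigenfunction relation \eqref{pointeig} into an actual equation in $\ell^2(\Phi)$, which is exactly what the restriction to the finite set $\Phi$ buys, and (ii) extracting a \emph{pointwise} bound on $\psi(y)$ from a resolvent operator bound, which is handled by pairing \eqref{eqoutbadplan} against $\delta_y$ and invoking self-adjointness of $(H_\Phi-\lambda)^{-1}$.
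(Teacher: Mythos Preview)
Your proof is correct and follows essentially the same route as the paper's: both derive the resolvent identity $\psi(y)=\scal{\delta_y,(\lambda-H_\Phi)^{-1}\Chi_\Phi\Gamma_{\boldsymbol{\partial}^\Theta\Phi}\psi}$ and then bound it using $\norm{(H_\Phi-\lambda)^{-1}}\le\eta^{-1}$, $\norm{\Gamma_{\boldsymbol{\partial}^\Theta\Phi}}\le 2d$, and $\norm{\Chi_{\partialex^\Theta\Phi}\psi}\le\abs{\partialex^\Theta\Phi}^{1/2}\max_{v\in\partialex^\Theta\Phi}\abs{\psi(v)}$. The only cosmetic difference is that the paper reaches this identity by expanding in an eigensystem for $H_\Phi$ and summing, whereas you obtain it directly by localizing the pointwise eigenvalue equation to $\Phi$ and inverting $H_\Phi-\lambda$.
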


\begin{proof} Let  $\set{(\vphi_\nu,\nu)}_{\nu \in \wtilde\sigma(H_\Phi)}$ be an eigensystem for $H_\Phi$. 
If $\nu \in \wtilde\sigma(H_\Phi) $,  we have  
   $\abs{\lambda -\nu}\ge \eta$ by \eq{distPhi0}.
 Since  $\Phi$ is finite,  using  \eq{pointeig}  and \eq{Hdecomp1} we get 
\begin{align}
\scal{ \vphi_\nu,\psi}&= \pa{\lambda- \nu }^{-1}\scal{\pa{H_{\Theta}- \nu} \vphi_\nu, \psi} = \pa{\lambda- \nu }^{-1}\scal{\pa{H_{\Theta}- H_\Phi} \vphi_\nu, \psi} \\ \nn
&= \pa{\lambda- \nu }^{-1}\scal{ \vphi_\nu,  \Gamma_{ \boldsymbol{ \partial}^{ \Theta}  \Phi}\psi} .
\end{align}
It follows that for $y\in \Phi$ we have  
\begin{align}\label{sum00099}
&\psi(y)  = \scal{\delta_y,\psi} =   \scal{\delta_y, \sum_{\nu \in \wtilde\sigma(H_{\Phi})}\scal{ \vphi_\nu,\psi}{\vphi_\nu} }\\
& \ \nn =  \scal{\delta_y, \sum_{\nu \in \wtilde\sigma(H_{\Phi})}\pa{\lambda- \nu }^{-1}\scal{ \vphi_\nu,  \Gamma_{ \boldsymbol{ \partial}^{ \Theta}  \Phi}\psi}{\vphi_\nu} }\\
& \  \nn=  \scal{\delta_y, \sum_{\nu \in \wtilde \sigma(H_{\Phi})}\pa{\lambda- \nu }^{-1}\scal{ \vphi_\nu,  \Chi_\Phi\Gamma_{ \boldsymbol{ \partial}^{ \Theta}  \Phi}\psi}{\vphi_\nu} }\\
& \ \nn 
=  \scal{\delta_y,\pa{\lambda- H_{\Phi} }^{-1} \sum_{\nu \in \wtilde\sigma(H_{\Phi})}\scal{ \vphi_\nu,  \Chi_\Phi\Gamma_{ \boldsymbol{ \partial}^{ \Theta}  \Phi}\psi}{\vphi_\nu} }=  \scal{\delta_y,\pa{\lambda- H_{\Phi} }^{-1} \Chi_\Phi \Gamma_{ \boldsymbol{ \partial}^{ \Theta}  \Phi}\psi}.
\end{align}
Using  \eq{distPhi0}, we get
\begin{align}
\abs{\psi(y) }&\le \eta^{-1} \norm{\Chi_\Phi\Gamma_{ \boldsymbol{ \partial}^{ \Theta}  \Phi}\psi}= \eta^{-1} \norm{\Chi_\Phi\Gamma_{ \boldsymbol{ \partial}^{ \Theta}  \Phi}\Chi_{\partial_{\mathrm{ex}}^{ \Theta}{\Phi}}\psi}\le 2d  \eta^{-1} \norm{\Chi_{\partial_{\mathrm{ex}}^{ \Theta}{\Phi}}\psi}\\ \nn &
\le 2d  \eta^{-1}\abs{\partial_{\mathrm{ex}}^{ \Theta}{\Phi}}^{\frac 12}\abs{\psi(y_1) } \qtx{for some}  y_1\in \partial_{\mathrm{ex}}^{{\Theta}}  \Phi.
\end{align}
\end{proof}

For the  interval $I=I(E,{A})$ and   $L>1 $, we set
\begin{align}\label{defIell}
I_L= I(E,{A}(1-L^{-\kappa})) \subsetneq I=I(E,{A})  \subsetneq
I^L= (E,{A}(1-L^{-\kappa})^{-1}).
\end{align}
We write $ I_L^{L^\pr} = \pa{I_L}^{L^\pr} = \pa{I^{L^\pr}}_L$, and observe that  $I_L^L=I$.  Note that 
\beq\label{lowerbdh}
h_{ I}(t)\ge 1- (1-L^{-\kappa})^2\ge L^{-\kappa} \mqtx{for all} t\in I_L, \qtx{so} h_I \Chi_{I_L}   \ge   L^{-\kappa} \Chi_{I_L}.
\eeq

\subsection{Localizing boxes} The following lemma plays a crucial role in the multiscale analysis.
 It says that given an eigenpair  $(\psi,\lambda)
 $  for $H_{\Theta}$ and 
 a box $\La_\ell \subset \Theta$    
 with  $\lambda \in I_\ell$  not too close to the  eigenvalues of $H_{\La_\ell}$,  then $\psi$ is exponentially small  deep inside $\La_\ell $ if  the box $\La_\ell $ is $(m,I)$-localizing for $H$. 

 If $\La_\ell$ is an  $(m,I)$-localizing box,   $\set{(\vphi_\nu, \nu)}_{\nu \in \wtilde\sigma(H_{\La_\ell})}$ will denote  an  $(m,I)$-localized eigensystem for $H_{\La_\ell}$.
If $\La_\ell\subset \Theta \subset \Z^d$, $J\subset I$ and  $t>0$, 
 we set
\beq
\wtilde\sigma_{J}^{\Th,{t}}(H_{\La_\ell})= \set{\nu\in \wtilde\sigma_{J} (H_{\La_\ell}); \; x_\nu \in    \La_\ell^{\Th,{t}}} .
\eeq

Given a scale $\ell\ge 1$,  we set
$L=\ell^\gamma$. The exponent $\ttau$ is defined in \eq{ttauzeta2}.
 We  use the notation $ L_{\tau}=\fl{L^{\tau}}$ and $L_{{\ttau}}= \lfloor{L^{{\ttau}}}\rfloor$.

  \begin{lemma}\label{lemdecay2}   Let $\psi\colon{\Theta}\subset \Z^d  \to \C$ be a generalized eigenfunction for $H_{\Theta}$ with generalized eigenvalue   $\lambda \in I_\ell$.
Consider a box  $ \La_\ell \subset{\Theta}$  such that $\La_\ell$ is   $(m,I)$-localizing for $H$.   Suppose
\beq\label{distpointeig}
\dist\pa{\lambda, \sigma_{I}(H_{\La_\ell})}\ge  \tfrac 1 2\e^{-L^\beta}.\eeq  
 Then, if  $\ell$  is  sufficiently large,  
for all $y\in \La_\ell^{\Theta,\ell_{\ttau}}$ we have
\beq\label{decayest12}
\abs{\psi(y)}\le  \e^{-m_3 h_{I}\pa{\lambda} {R_{ \Theta}(y)}}  \abs{\psi(v)}\qtx{for some} v \in\partial_{\mathrm{ex}}^{{\Theta}} \Lambda_{\ell},
\eeq
where
\beq\label{m4} 
m_3= m_3(\ell)\ge  m \pa{1 - C_{d}\ell^{-\frac{1- \tau}2}}.  
\eeq
  \end{lemma}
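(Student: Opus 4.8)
The plan is to combine Lemma~\ref{outbad} (applied to the box $\La_\ell$ inside $\Theta$) with the exponential decay of the localized eigensystem for $H_{\La_\ell}$, exactly in the spirit of \cite{EK2}. Since $\La_\ell$ is $(m,I)$-localizing, we have an $(m,I)$-localized eigensystem $\set{(\vphi_\nu,\nu)}_{\nu\in\wtilde\sigma(H_{\La_\ell})}$, and for a generalized eigenfunction $\psi$ of $H_\Theta$ we want to expand $\Chi_{\La_\ell}\psi$ in this basis. The identity \eqref{sum00099}-type computation (or directly the resolvent identity as in Lemma~\ref{outbad}) gives, for $y\in\La_\ell$,
\[
\psi(y)=\sum_{\nu\in\wtilde\sigma(H_{\La_\ell})}(\lambda-\nu)^{-1}\scal{\vphi_\nu,\Chi_{\La_\ell}\Gamma_{\boldsymbol{\partial}^\Theta\La_\ell}\psi}\,\vphi_\nu(y),
\]
so $\abs{\psi(y)}\le\sum_\nu\abs{\lambda-\nu}^{-1}\abs{\vphi_\nu(y)}\,\norm{\Chi_{\partialex^\Theta\La_\ell}\psi}\cdot(\text{const})$. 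I would split the sum over $\nu$ into two regimes: those with $\nu\in I$ (where $\vphi_\nu$ is genuinely localized, i.e.\ $(x_\nu,mh_I(\nu))$-localized) and those with $\nu\notin I$ (where we have no decay of $\vphi_\nu$ but instead $\abs{\lambda-\nu}$ is bounded below by the distance of $\lambda\in I_\ell$ to $\R\setminus I$, which is $\gtrsim A\ell^{-\kappa}$ and more than compensates, using the upper bound $m\le\frac12\log(1+\frac{A}{4d})$ in \eqref{upbm} — this is precisely where \cite[Lemma~3.2(ii)]{EK2} enters).

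For the first regime, the key geometric point: if $y\in\La_\ell^{\Theta,\ell_{\ttau}}$ then $R_\Theta(y)=\dist(y,\partialin^\Theta\La_\ell)$ is large, and for each relevant $\nu$, either $x_\nu$ is far from the interior boundary (so $\norm{y-x_\nu}$ or the distance from $x_\nu$ to the boundary is comparably large, giving decay of $\vphi_\nu(y)$ via \eqref{hypdec}), or $x_\nu$ is close to $\partialin^\Theta\La_\ell$, in which case one uses that $\vphi_\nu$ must also be small at $x_\nu$'s own vicinity relative to the full box or that there are few such $\nu$. Concretely I would bound $\abs{\vphi_\nu(y)}\le\e^{-mh_I(\nu)\norm{y-x_\nu}}$ when $\norm{y-x_\nu}\ge\ell_\tau$, use $h_I(\nu)\ge h_I(\lambda)-(\text{small})$ via continuity of $h_I$ together with the separation hypothesis \eqref{distpointeig} relating $\lambda$ and $\nu$, and pay the price $\abs{\La_\ell}\lesssim\ell^d$ for the number of terms and the factor $\abs{\lambda-\nu}^{-1}\le 2\e^{L^\beta}$ from \eqref{distpointeig}. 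The factor $\e^{L^\beta}$ is absorbed since $\beta<1$ and the gain is exponential in $R_\Theta(y)\gtrsim\ell_{\ttau}=\fl{\ell^{\gamma\ttau}}$, with $\gamma\ttau$ chosen (in Appendix~\ref{apexp}) to dominate $\gamma\beta$. Collecting the loss in the rate, one gets $m_3\ge m(1-C_d\ell^{-(1-\tau)/2})$: the $\ell^{-(1-\tau)/2}$ comes from trading the threshold $\ell_\tau=\fl{\ell^\tau}$ in Definition~\ref{defxmloc} against $R_\Theta(y)$ over the region $\norm{y-x_\nu}\in[\ell_\tau,\,\cdot\,]$, and from the $h_I$-continuity correction.

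The main obstacle I anticipate is the bookkeeping for eigenfunctions $\vphi_\nu$ whose localization center $x_\nu$ sits near $\partialin^\Theta\La_\ell$: for such $\nu$ the bound \eqref{hypdec} gives no useful decay of $\vphi_\nu(y)$ at points $y$ deep inside, since $y$ may actually be \emph{close} to $x_\nu$. The resolution must be that $y\in\La_\ell^{\Theta,\ell_{\ttau}}$ forces $\dist(y,x_\nu)$ to be controlled from below whenever $x_\nu$ is near the boundary — but this is false in general, so instead one should handle the sum differently: estimate $\norm{\Chi_{\La_\ell^{\Theta,\ell_{\ttau}}}\sum_{\nu:\,x_\nu\text{ near }\partialin}\cdots}$ by reorganizing, using that $\Gamma_{\boldsymbol{\partial}^\Theta\La_\ell}\psi$ is supported on $\partialin^\Theta\La_\ell$ and the matrix element $\scal{\vphi_\nu,\Chi_{\La_\ell}\Gamma\psi}$ already involves $\vphi_\nu$ evaluated near the boundary, where \eqref{hypdec} says $\abs{\vphi_\nu(u)}\le\e^{-mh_I(\nu)\norm{u-x_\nu}}$ is tiny precisely because $u\in\partialin$ is far from $x_\nu$ when $x_\nu$ is deep — symmetrically to the first case. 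Thus the decay is harvested either at $y$ (when $x_\nu$ deep) or at the boundary site $u$ (when $x_\nu$ near boundary), and the two contributions combine to give decay $\e^{-m_3 h_I(\lambda)R_\Theta(y)}$. Making this dichotomy quantitative, with all the $h_I$-continuity and counting losses folded into the single clean factor in \eqref{m4}, is the technical heart of the argument; everything else is routine resolvent estimates as in Lemma~\ref{outbad}.
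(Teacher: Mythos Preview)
Your resolvent-expansion strategy has a genuine gap in two places, and both are precisely what the heat-kernel identity from \cite[Lemma~3.2(i)]{EK2} is designed to cure.

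\textbf{The $\nu\notin I$ contribution.} In your expansion
\[
\psi(y)=\sum_{\nu}(\lambda-\nu)^{-1}\scal{\vphi_\nu,\Chi_{\La_\ell}\Gamma_{\boldsymbol{\partial}^\Theta\La_\ell}\psi}\,\vphi_\nu(y),
\]
the terms with $\nu\notin I$ carry no eigenfunction decay whatsoever, and $\abs{\lambda-\nu}^{-1}\lesssim A^{-1}\ell^{\kappa}$ is only a \emph{polynomial} bound. Summing over $\nu$ gives at best a polynomial in $\ell$ times $\abs{\psi(v)}$, not $\e^{-m_3 h_I(\lambda)R_\Theta(y)}\abs{\psi(v)}$. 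The reference to \cite[Lemma~3.2(ii)]{EK2} does not help here: that lemma bounds matrix elements of the entire function $F_{t,\lambda}(H)$, not of the bare resolvent $(\lambda-H)^{-1}\bar P_I$. The paper instead writes (after centering $E=0$)
\[
\psi(y)=\scal{\e^{-t(H_{\La_\ell}^2-\lambda^2)}\delta_y,\psi}-\scal{F_{t,\lambda}(H_{\La_\ell})\delta_y,\Gamma_{\boldsymbol{\partial}^\Theta\La_\ell}\psi},
\]
so that the $\bar P_I$ piece of the first term decays like $\e^{-tA^2 h_I(\lambda)}$ (\cite[Lemma~3.3]{EK2}), and the second term, being built from an entire function, admits a Combes--Thomas bound (\cite[Lemma~3.2(ii)]{EK2}). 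The choice $t=m_1^\prime R_\Theta(y)/A^2$ converts this spectral decay into the desired spatial decay.

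\textbf{The varying rate $h_I(\nu)$ for $\nu\in I$.} You propose to use ``$h_I(\nu)\ge h_I(\lambda)-(\text{small})$ via continuity together with \eqref{distpointeig}''. But \eqref{distpointeig} is a \emph{lower} bound on $\abs{\lambda-\nu}$, not an upper bound; for $\nu\in I$ near the edge of $I$ one has $h_I(\nu)$ arbitrarily close to $0$ while $h_I(\lambda)\ge\ell^{-\kappa}$, so no such inequality holds. The paper's mechanism is algebraic, not perturbative: with $t=m_1^\prime R_\Theta(y)/A^2$ and $h_I(s)=1-s^2/A^2$ one has the exact identity
\[
\e^{-t(\nu^2-\lambda^2)}\,\e^{-m_1^\prime h_I(\nu)R_\Theta(y)}=\e^{-m_1^\prime h_I(\lambda)R_\Theta(y)}\qquad\text{for every }\nu,
\]
so the heat-kernel weight precisely compensates the $\nu$-dependence of the localization rate. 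This identity is the reason $h_I$ is defined as a quadratic; it is not recoverable from your resolvent expansion.

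Your handling of the dichotomy on the location of $x_\nu$ (deep vs.\ near $\partialin^\Theta\La_\ell$) is correct and matches the paper, but it only enters after the heat-kernel decomposition is in place.
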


Lemma~\ref{lemdecay2}  resembles \cite[Lemma~3.4]{EK2}, but the hypothesis \eq{distpointeig} is stronger than the corresponding hypothesis   \cite[Eq.~(3.24)]{EK2},  so the proof is slightly easier, and the conclusions are slightly stronger. The main issue in the proof is the same: the hypothesis that the box  $ \La_\ell \subset{\Theta}$   is   $(m,I)$-localizing only gives decay for eigenfunctions with eigenvalues in $I$.  To compensate, we take $\lambda \in I_\ell$,
 and use  \cite[Lemmas~3.2 and 3.3]{EK2}.

 \begin{proof}[Proof of  Lemma~\ref{lemdecay2}]
 
Given $y \in \La_\ell$ and $t>0$, it follows from  \cite[Lemma~3.2(i)]{EK2} that
\begin{align}\label{eq:2terms}
  {\psi(y)} &= \scal { \e^{-t \pa{\pa{H_{\La_\ell}-E}^2-(\lambda-E)^2 }}\delta_y,\psi}   -  \scal {F_{t,\lambda-E} (H_{\La_\ell}-E)\delta_y,\Gamma_{ \boldsymbol{ \partial}^{ \Theta}  \La_\ell} \psi} ,
  \end{align}
 where $\Gamma_{ \boldsymbol{ \partial}^{ \Theta}  \Phi}$ is defined in \eq{Hdecomp1} and
  $F_{t,\lambda}(z)$ is the entire function given by
  \beq\label{defanf}
  F_{t,\lambda} (z) =  \frac{1-\e^{-t(z^2-\lambda^2)}}{z-\lambda} \qtx{for} z \in \C\setminus \set{\lambda} \qtx{and} F_{t,\lambda}(\lambda)=2t\lambda.
  \eeq

We take $E=0$ by replacing the potential $V$ by $V-E$.
Setting  $P_{I}= \Chi_{I}\pa{H_ {\Lambda_{\ell}}}$ and $\bar P_{I}= 1-P_{I}$, 
we have
\begin{align}\label{Jdecomp}
\scal {\e^{-t \pa{H_{\La_\ell}^2-\lambda^2}}\delta_y,\psi} =   \scal {\e^{-t \pa{H_{\La_\ell}^2-\lambda^2}}P_{I}\delta_y,\psi} +  \scal {\e^{-t \pa{H_{\La_\ell}^2-\lambda^2}}\bar P_{I}\delta_y,\psi}.
\end{align}
It follows from \cite[Lemma~3.3]{EK2} that
\begin{align}\label{barJest55}
&\abs{\scal {\e^{-t \pa{H_{\La_\ell}^2-\lambda^2}}\bar P_{I}\delta_y,\psi}}\le \norm{\Chi_{\La_\ell} \psi}\norm{\e^{-t \pa{H_{\La_\ell}^2-\lambda^2}}\bar P_{I}}\le (\ell+1)^{\frac d 2} \e^{-t {A}^2 h_{I}(\lambda) }\abs{\psi(v)},
\end{align}
for some $v \in \La_\ell$. Estimating  $\abs{\psi(v)}$  by Lemma~\ref{outbad}, we get
\begin{align}\label{barJest}
&\abs{\scal {\e^{-t \pa{H_{\La_\ell}^2-\lambda^2}}\bar P_{I}\delta_y,\psi}}\le 4d \pa{s_d \ell^{{d-1}}}^{\frac 12} (\ell+1)^{\frac d 2} \e^{L^\beta}\e^{-t {A}^2 h_{I}(\lambda) }\abs{\psi(v_0)}\\
\nn &\qquad  \le  \e^{2L^\beta}\e^{-t {A}^2 h_{I}(\lambda) }\abs{\psi(v_0)}, \qtx{for some}v_0 \in\partial_{\mathrm{ex}}^{{\Theta}} \Lambda_{\ell}.
\end{align}

We now use the fact that   $ \La_\ell $  is   $(m,I)$-localizing for $H$,  so it has an $(m,I)$-localized eigensystem $\set{\vphi_\nu,\nu}_{\nu \in \wtilde\sigma{(H_{\La_\ell})}}$, and write   
 \begin{align}
\scal {\e^{-t \pa{H_{\La_\ell}^2-\lambda^2}}\ P_{I}\delta_y,\psi}&= \sum_{\nu \in\wtilde \sigma_{I}(\H_{\La_\ell})}\e^{-t (\nu ^2-\lambda^2)}{\vphi_\nu (y)} \scal{ \vphi_\nu ,\psi}.
\end{align}
If $\nu \in  \wtilde\sigma_I(H_{\Lambda_{\ell}}) $, we have  
   $\abs{\lambda -\nu}\ge \tfrac 1 2\e^{-{L^\beta}}$ by \eq{distpointeig}.
 Since  $\La_\ell$ is finite,   \eq{pointeig}  gives
\beq
\scal{ \vphi_\nu,\psi}= \pa{\lambda- \nu }^{-1}\scal{\pa{H_{\Theta}- \nu} \vphi_\nu, \psi} .
\eeq
It follows from  \cite[Eq.~(3.12) in Lemma~3.2]{EK}  that
\begin{align}\label{vphipsi399}
\abs{{\vphi_{\nu}(y)}\scal{ \vphi_{\nu},\psi}}& \le 2\e^{{L^\beta}}   \sum_{  v \in   \partial_{\mathrm{ex}}^{{\Theta}}  \Lambda_{\ell}}\abs{ {\vphi_{\nu}(y)}\vphi_{\nu} (\hat{v})}\abs{\psi(v)}.
\end{align}

We  now assume   $y\in \La_\ell^{\Theta,\ell_{\ttau}}$, so $ {R_{ \Theta}(y)}\ge \ell_{\ttau}$.  For ${\nu} \in\wtilde \sigma_{I}^{\Th,{\ell_\tau}}(H_{\La_\ell})$  and $v^\pr \in \partialin^{{\Theta}}  \Lambda_{\ell}$, we have, as in \cite[Eq. (3.41)]{EK}, 
\beq\label{mipr}
\abs{{\vphi_{\nu}(y)}\vphi_{\nu}(v^\pr)}\le  \e^{-m_1^\pr   h_I({\nu})  {R_{ \Theta}(y)}} \qtx{with}  m^\pr_1\ge   m(1-  2  \ell^{\frac {\tau -1}2}),
\eeq
so, as in \cite[Eq. (3.44)]{EK}, for ${\nu} \in  \wtilde\sigma_I^{\Th,{\ell_{\tau}}}(H_{\Lambda_{\ell}}) $ we have\begin{align}
\abs{{\vphi_{\nu}(y)}\scal{ \vphi_{\nu},\psi}}\le 2 \e^{L^\beta} s_d\ell^{d-1}  \e^{-m_1^\pr   h_I({\nu})   {R_{ \Theta}(y)}}\abs{\psi(v_1)}\le \e^{2L^\beta}   \e^{-m_1^\pr   h_I({\nu})  {R_{ \Theta}(y)}}\abs{\psi(v_1)},
\end{align}
for some $v_1 \in\partialex^{{\Theta}}  \Lambda_{\ell}$.  If ${\nu} \in \wtilde\sigma_{I} (H_{\La_\ell})$ with  $x_{\nu} \in  \partialin^{\Th,\ell_\tau} \La_\ell$, we have 
\beq\norm{x_{\nu}-y}\ge {R_{ \Theta}(y)} - \ell_{\tau} \ge   {R_{ \Theta}(y)}\pa{1- 2 \ell^{\tau- \ttau}}= {R_{ \Theta}(y)}\pa{1- 2 \ell^{\frac {\tau-1}2}},
\eeq
 so 
\begin{align}
&\abs{{\vphi_\nu(y)} \scal{ \vphi_\nu,\psi}} \le \e^{-m   h_I({\nu}) \norm{x_{\nu}-y}}\norm{\Chi_\La \psi} \\ \notag & \quad
\le \e^{-m   h_I({\nu}) {R_{ \Theta}(y)}\pa{1- 2 \ell^{\frac {\tau-1}2}}} (\ell+1)^{\frac d 2} \abs{\psi(v_2)}
\le  (\ell+1)^{\frac d 2} \e^{-m_1^\pr   h_I({\nu}) {R_{ \Theta}(y)}}  \abs{\psi(v_2)} ,
\end{align}
for some  $v_2 \in \La_\ell$, where $m_1^\pr$ is given in \eq{mipr}.
 It follows that for all ${\nu} \in \wtilde \sigma_{I}(H_{\La_\ell})$ we have
\begin{align}
\e^{-t ({\nu}^2-\lambda^2)}\abs{{\vphi_{\nu}(y)}\scal{ \vphi_{\nu},\psi}}\le  \e^{2L^\beta} 
\e^{-t ({\nu}^2-\lambda^2)}  \e^{-m_1^\pr   h_I({\nu}) {R_{ \Theta}(y)}}\abs{\psi(v)} ,
\end{align}
for some $ v\in \La_\ell\cup \in\partialex^{{\Theta}}  \Lambda_\ell$.

We now take  
\beq
t= \tfrac{m_1^\pr {R_{ \Theta}(y)}}{{A}^2}  \quad \Longrightarrow \quad \e^{-t ({\nu}^2-\lambda^2)}\e^{-m_1^\pr h_I ({\nu}){ {R_{ \Theta}(y)}}}= \e^{-m_1^\pr h_I (\lambda) {R_{ \Theta}(y)}}\mqtx{for} {\nu} \in I,
\eeq
obtaining
\begin{align}\label{Jest32}
& \abs{\scal {\e^{-\tfrac{m_1^\pr {R_{ \Theta}(y)}}{{A}^2}  \pa{H_{\La_\ell}^2-\lambda^2}} P_{I}\delta_y,\psi}}\le   (\ell+1)^d \e^{2L^\beta} \e^{-m_1^\pr h_I (\lambda){ {R_{ \Theta}(y)}}}\abs{\psi(v)}\\ \notag &\; \le  4d \pa{s_d \ell^{{d-1}}}^{\frac 12}  (\ell+1)^d \e^{3L^\beta} \e^{-m_1^\pr h_I (\lambda){ {R_{ \Theta}(y)}}}\abs{\psi(v^\pr)} \le  \e^{4L^\beta} \e^{-m_1^\pr h_I (\lambda){ {R_{ \Theta}(y)}}}\abs{\psi(v^\pr)} ,
\end{align}
for some  $v \in \La_\ell \cup\partial_{\mathrm{ex}}^{{\Theta}}  \Lambda_{\ell}$, and then for some 
$v^\pr \in\partial_{\mathrm{ex}}^{{\Theta}} \Lambda_{\ell}$ using Lemma~\ref{outbad}.

Combining \eq{Jdecomp}, \eq{barJest} and \eq{Jest32} yields
\begin{align}\label{noJest}
\abs{\scal {\e^{-\tfrac{m_1^\pr {R_{ \Theta}(y)}}{{A}^2} \pa{H_{\La_\ell}^2-\lambda^2}} \delta_y,\psi}}&\le 2 \e^{4L^\beta}\e^{-m_1^\pr  h_{I}(\lambda) {R_{ \Theta}(y)}}\abs{\psi(v)},
\end{align}
for some  $v \in\partial_{\mathrm{ex}}^{{\Theta}}  \Lambda_{\ell}$.

We now use  \cite[Lemma~3.2(ii)]{EK2} (it follows from  \eq{upbm}  that $\frac {\ell^{-\kappa^\pr} } 2<m_1^\pr \le  m \le \frac 1 2 \log \pa{1 + \tfrac {A}{4d}}$), getting
\begin{align}\label{eq:anpar99}
&\abs{\scal {F_{\frac {m_1^\pr {R_{ \Theta}(y)}}{{A}^2},\lambda} (H_\La)\delta_y,\Gamma_{ \boldsymbol{ \partial}^{ \Theta}  \La_\ell} \psi} }\le     70 s_d \ell^{d-1}A^{-1}\e^{-m_1^\pr  h_I(\lambda)  {R_{ \Theta}(y)}}\abs{\psi(v)},
\end{align}
for some  $v \in {\partial}_{\mathrm{ex}}^{\Th} \La_\ell$.   
We conclude  from \eq{noJest}  and   \eq{eq:anpar99} that
\begin{align}\label{decayest9999} 
&\abs{\psi(y)}\le C_{d}\pa{ \ell^{d-1 + \kappa^\pr} + \e^{4L^\beta}}\e^{-m_1^\pr h_{I}(\lambda) {R_{ \Theta}(y)}}\abs{\psi(v)}\\ \nn & \quad \le C_d^\pr \e^{4L^\beta}\e^{-m_1^\pr h_{I}(\lambda) {R_{ \Theta}(y)}}\abs{\psi(v)} \le  \e^{-m_3 h_{ I}(\lambda)  {R_{ \Theta}(y)}}  \abs{\psi(v^\pr)}\sqtx{for some} v \in\partial_{\mathrm{ex}}^{{\Theta}} \Lambda_{\ell},
\end{align}
where, using $h_{ I}(\lambda)  \ge \ell^{-\kappa}$ since  $\lambda \in I_{\ell}$, we have 
\begin{align}
m_3 &\ge  m \pa{1 - C_{d}\ell^{-\min \set{\ttau -  \gamma \beta -\kappa-\kappa^\pr, \frac{1- \tau}2}}} = m \pa{1 - C_{d}\ell^{-\frac{1- \tau}2}}.
\end{align}
\end{proof}

\subsection{Buffered subsets} 

The probability estimates of a multiscale analysis do not allow  all boxes to be be localizing, so we must control non-localizing boxes.
If a box $\La_\ell \subset \La_L $ is not $(m,I)$-localizing for $H$, we will add a  buffer of $(m,I)$-localizing boxes and study eigensystems for the enlarged subset.

  \begin{definition}\label{defbuff} 
 We call $\Ups\subset \La_L$ an $(m,I)$-buffered subset of the box  $\La_L$   if 
 the following holds:
  
  \begin{enumerate}
\item  $\Ups $ is a connected set in $\Z^d$ of the form
\beq\label{defUpsinitial}
 \Ups= \bigcup_{j=1}^J \La_{R_j}(a_j)\cap \La_L,
 \eeq  
 where $J\in \N$, $a_1,a_2,\ldots, a_J \in \La^\R_L$, and $\ell \le R_j\le L$ for $j=1,2,\ldots,J$.

  \item  There exists $\cG_\Ups \subset \La^\R_L$ such that:
  \begin{enumerate}
\item $\Lambda_\ell(a)\subset \La_L$ for all $a \in \cG_\Ups$ and  $\set{\Lambda_\ell(a)}_{a\in \cG_\Ups} $ is  a collection of  $(m,I)$-localizing boxes for $H$.

\item  For all $y \in \partialin^{\La_L}\Ups$ there exists $a_y \in\cG_\Ups$ such that $y\in \La_\ell^{ {\La_L}, {\ell_{\ttau}}}(a_y)$.
\end{enumerate}

  \end{enumerate}   
\end{definition}

This definition of a buffered subset  has subtle but   important differences from \cite[Definition~3.6]{EK2}, 
in addition to not requiring   level spacing conditions.   Definition~\ref{defbuff}(ii)   requires $\Lambda_\ell(a)\subset \La_L$  and $y\in \La_\ell^{ {\La_L}, {\ell_{\ttau}}}(a_y)$, while  the corresponding \cite[Definition~3.6]{EK2}(iii) has $\Lambda_\ell(a)\subset \Ups$ and  $y\in \La_\ell^{ {\Upsilon}, {2\ell_{\tau}}}(a_y)$.

 In the multiscale analysis we control the effect of buffered subsets using the following lemma.

\begin{lemma}\label{buffcontrol} Let $\La_L=\La_L(x_0)$, $x_0 \in \R^d$, and let  $(\psi,\lambda)$ be an  eigenpair for $H_{\La_L}$ with $\lambda \in I_\ell$. Let 
 $\Ups\subsetneq \La_L$ be an $(m,I)$-buffered subset,  and suppose
 \beq\label{distUps}
\dist\pa{\lambda, \sigma_I(H_{\Ups})} \ge \tfrac 12\e^{-L^{\beta}}\qtx{and} \min_{a\in\cG_\Ups} \dist\pa{\lambda, \sigma_I(H_{\Lambda_\ell(a)})} \ge \tfrac 12\e^{-L^{\beta}}.
\eeq
Then for all $y\in \Ups$ we have
\begin{align}\label{badboxest}
\abs{\psi(y) } \le \e^{-\frac {m_3}2  h_{I}\pa{\lambda}\ell_{\ttau}} \abs{\psi(y_1) } \qtx{for some}  y_1 \in \bigcup_{a\in \cG_\Ups} \partial_{\mathrm{ex}}^{ \La_L} \Lambda_{\ell}(a),
\end{align}
where $m_3= m_3(\ell)$ is as in \eq{m4}.
\end{lemma}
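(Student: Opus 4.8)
The plan is to iterate Lemma~\ref{lemdecay2} along the localizing boxes that buffer $\Upsilon$, propagating the value of $\psi$ from a point deep inside $\Upsilon$ out to $\partialex^{\La_L}\Upsilon$, and to show that each application of Lemma~\ref{lemdecay2} gains a factor $\e^{-\frac{m_3}{2} h_I(\lambda)\ell_{\ttau}}$ until we exit $\Upsilon$. More precisely, fix $y\in\Upsilon$. We want to produce a chain $y=y_0, y_1, y_2,\ldots$ of points such that $\abs{\psi(y_j)}\le \e^{-\frac{m_3}{2} h_I(\lambda)\ell_{\ttau}}\abs{\psi(y_{j+1})}$, with the $y_j$ eventually landing in $\bigcup_{a\in\cG_\Upsilon}\partialex^{\La_L}\Lambda_\ell(a)$. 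The key mechanism: if $y_j\in\Upsilon$ is not yet on the exterior boundary of one of the localizing boxes, then since $\Upsilon\subset\bigcup_j\La_{R_j}(a_j)\cap\La_L$ and, more to the point, the buffer condition Definition~\ref{defbuff}(ii)(b) guarantees that boundary points of $\Upsilon$ are deep inside some localizing box, one locates a localizing box $\La_\ell(a)$, $a\in\cG_\Upsilon$, with $y_j\in\La_\ell^{\Upsilon,\ell_{\ttau}}(a)$ — i.e. $R_\Upsilon$-type depth at least $\ell_{\ttau}$ relative to that box — and applies Lemma~\ref{lemdecay2} with $\Theta$ replaced by $\Upsilon$ (viewed as a subset of $\Z^d$, or of $\La_L$) and the box $\La_\ell\subset\Upsilon$. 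Its hypothesis \eqref{distpointeig} is exactly the second part of \eqref{distUps}, so we get $\abs{\psi(y_j)}\le \e^{-m_3 h_I(\lambda) R_\Upsilon(y_j)}\abs{\psi(v)}\le \e^{-m_3 h_I(\lambda)\ell_{\ttau}}\abs{\psi(v)}$ for some $v\in\partialex^{\Upsilon}\La_\ell(a)$.

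The next step is the bookkeeping that turns one application into a terminating iteration. The output point $v$ either lies in $\partialex^{\La_L}\Upsilon$ (equivalently, it has left $\Upsilon$ — in which case, since it is a boundary point reached from inside a localizing box $\La_\ell(a)$ with $a\in\cG_\Upsilon$, it in fact lies in $\partialex^{\La_L}\Lambda_\ell(a)$, and we are done, having picked up one good factor $\e^{-m_3 h_I(\lambda)\ell_{\ttau}}\le \e^{-\frac{m_3}{2}h_I(\lambda)\ell_{\ttau}}$), or it still lies in $\Upsilon$ and we repeat. Termination follows because $\La_L$ is finite and each step moves us strictly outward in a sense one can quantify — e.g. by tracking $\dist(y_j,\partialex^{\La_L}\Upsilon)$ or simply by noting that the process visits distinct regions; this is the standard "chaining to the boundary" argument of \cite{EK,EK2}. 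One must also verify that the intermediate points produced by Lemma~\ref{lemdecay2} — which a priori may be only in $\La_\ell\cup\partialex^{\Upsilon}\La_\ell$ in the course of the proof but in the final statement land in $\partialex^{\Upsilon}\La_\ell$ — are handled so that the chain is genuinely a sequence of applications; here one uses that $\partialex^{\Upsilon}\La_\ell(a)\subset\Upsilon\cup\partialex^{\La_L}\Upsilon$ and, crucially, that the first alternative in \eqref{distUps} ($\dist(\lambda,\sigma_I(H_\Upsilon))\ge\frac12\e^{-L^\beta}$) is what lets us restart Lemma~\ref{lemdecay2} on the next box (its hypothesis on $\dist(\lambda,\sigma_I(H_{\La_\ell}))$ again coming from the second alternative).

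Finally, collecting the chain: after $N$ steps we have $\abs{\psi(y)}\le \e^{-N m_3 h_I(\lambda)\ell_{\ttau}}\abs{\psi(y_N)}$ with $y_N\in\bigcup_{a\in\cG_\Upsilon}\partialex^{\La_L}\Lambda_\ell(a)$, and since $N\ge 1$ and $h_I(\lambda)>0$ (indeed $h_I(\lambda)\ge\ell^{-\kappa}$ as $\lambda\in I_\ell$), we obtain \eqref{badboxest} with room to spare — the $\frac12$ in the exponent absorbs any polynomial-in-$\ell$ losses from boundary-size factors $s_d\ell^{d-1}$ and the $\e^{O(L^\beta)}$ factors, exactly as in the proof of Lemma~\ref{lemdecay2}, provided $\ell$ is large (recall $L=\ell^\gamma$ and $\beta<1$, so $\e^{L^\beta}=\e^{\ell^{\gamma\beta}}$ is subexponential in $\ell_{\ttau}$ by the choice of exponents in Appendix~\ref{apexp}, since $\ttau$ is chosen with $\ttau>\gamma\beta$). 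I expect the main obstacle to be the careful geometric/combinatorial argument that the iteration terminates and that at each stage the relevant box depth condition $y_j\in\La_\ell^{\Upsilon,\ell_{\ttau}}(a)$ (as opposed to merely $y_j\in\La_\ell(a)$) holds — this is exactly the point where Definition~\ref{defbuff}(ii)(b) is used and where the differences from \cite[Definition~3.6]{EK2} (noted after Definition~\ref{defbuff}) matter, so one has to be precise about whether depth is measured relative to $\Upsilon$ or relative to $\La_L$, and reconcile the $\ell_{\ttau}$ in the buffer definition with the $\ell_{\ttau}$ appearing in the hypothesis of Lemma~\ref{lemdecay2}.
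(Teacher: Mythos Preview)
Your iteration scheme cannot get started. Definition~\ref{defbuff}(ii)(b) only guarantees that points of $\partialin^{\La_L}\Ups$ lie deep inside some localizing box $\La_\ell(a)$, $a\in\cG_\Ups$; for a generic $y\in\Ups$ there is no reason whatsoever that such a box exists --- indeed the whole purpose of a buffered subset is to enclose a region where the small boxes are \emph{not} localizing, so deep inside $\Ups$ you have nothing to work with. Your first step, ``locate a localizing box $\La_\ell(a)$ with $y_j\in\La_\ell^{\Ups,\ell_{\ttau}}(a)$'', simply fails for $y_0=y$ in the interior of $\Ups$. (There is also a secondary issue: you propose to apply Lemma~\ref{lemdecay2} with $\Theta=\Ups$, but $(\psi,\lambda)$ is an eigenpair for $H_{\La_L}$, not a generalized eigenpair for $H_\Ups$; and Definition~\ref{defbuff}(ii)(a) only says $\La_\ell(a)\subset\La_L$, not $\La_\ell(a)\subset\Ups$.)

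The paper's proof is not an iteration at all but a two-step argument, and this is precisely where the first hypothesis in \eqref{distUps} --- the one on $\sigma_I(H_\Ups)$, whose role you misidentify --- enters. Step~1: apply Lemma~\ref{outbad} with $\Phi=\Ups$ and $\eta=\tfrac12\e^{-L^\beta}$ to jump in one shot from an arbitrary $y\in\Ups$ to some $y_1\in\partialex^{\La_L}\Ups$, at the cost of a factor $C_d(L+1)^d\e^{L^\beta}$. Step~2: now $y_1$ is adjacent to $\partialin^{\La_L}\Ups$, so Definition~\ref{defbuff}(ii)(b) produces $a_1\in\cG_\Ups$ with $y_1\in\La_\ell^{\La_L,\ell_{\ttau}}(a_1)$, and a single application of Lemma~\ref{lemdecay2} (with $\Theta=\La_L$, using the second hypothesis in \eqref{distUps}) yields $\abs{\psi(y_1)}\le\e^{-m_3 h_I(\lambda)\ell_{\ttau}}\abs{\psi(y_2)}$ for some $y_2\in\partialex^{\La_L}\La_\ell(a_1)$. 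The bad factor from Step~1 is then absorbed into the halving of $m_3$, using $h_I(\lambda)\ge\ell^{-\kappa}$ and the exponent relations. No chaining is needed.
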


\begin{proof}  Let $y\in \Ups$.  In view of \eq{distUps} it follows from Lemma~\ref{outbad} that
\beq\label{distLaL3}
\abs{\psi(y) } \le 4d  \e^{L^{\beta}} \abs{ \partial_{\mathrm{ex}}^{ \La_L}{\Ups}}\abs{\psi(y_1) } \qtx{for some}  y_1\in  \partial_{\mathrm{ex}}^{ \La_L}{\Ups}.
\eeq
Let $a_1\in \cG_\Ups$ be such that $y_1 \in \La_\ell^{ {\La_L}, {\ell_{\ttau}}}(a_1)$. It then follows from \eq{distUps} and \eq{decayest12} in Lemma~\ref{lemdecay2} that
\beq  
\abs{\psi(y_1)}\le  \e^{-m_3 h_{I}\pa{\lambda} \ell_{\ttau}} \abs{\psi(y_2)} \qtx{for some} y_2 \in \partial_{\mathrm{ex}}^{ \La_L}\Lambda_{\ell}(a_1).
\eeq

Since $\abs{ \Ups}\le  \abs{ \La_L}\le (L+1)^d$ and
$\abs{ \partial_{\mathrm{ex}}^{ \La_L}{\Ups}}\le  2d \abs{ \Ups}\le 2d  (L+1)^d$, and we have \eq{lowerbdh} as $\lambda \in I_\ell$,
we get
\begin{align}\label{badboxest5}
\abs{\psi(y) } \le 8d^2 (L+1)^{d}\e^{L^{\beta}} \e^{-m_3  h_{I}\pa{\lambda}\ell_{\ttau}}  \abs{\psi(y_3) }\le  \e^{-\frac {m_3}2  h_{I}\pa{\lambda}\ell_{\ttau}},
\end{align}
for some  $y_3 \in \bigcup_{a\in \cG_\Ups} \partial_{\mathrm{ex}}^{ \La_L} \Lambda_{\ell}(a)$,
if $L$ is sufficiently large.
\end{proof}

  \section{Spectral separation}\label{secprobest}

We recall the Wegner estimate for the Anderson model as in  Definition~\ref{defineAnd} (see, e.g., 
\cite[Appendix~A]{CGK2}).

\begin{lemma}\label{lemWeg}  Let  $H_{\bom}$ be an Anderson model.
Let $\Th \subset \Z^d$. Then, for all  $E\in\R$,
\begin{align}
\P\set{\dist\set{E,\sigma (H_{\Theta,\bom})} \le \eta}\le   \wtilde{K}\eta^\alpha \abs{\Th},
\end{align}
where
with $\wtilde{K}=2K$ if $\alpha=1$ and $\wtilde{K}=8\, 2^\alpha K$ if  $\alpha \in (0,1)$.
\end{lemma}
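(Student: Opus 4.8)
The plan is to reduce the statement to the elementary fact that a one-parameter monotone perturbation of a bounded self-adjoint operator spreads its spectrum by at least the size of the perturbation, and then integrate against the single-site distribution using the H\"older bound \eqref{Holdercont}. Concretely, fix $\Th\subset\Z^d$ finite (the infinite case follows by monotone convergence / restriction, since eigenvalues of $H_{\Th,\bom}$ for infinite $\Th$ are approximated by those of finite sub-boxes, and in any case one only needs the estimate for finite $\Th$ in the multiscale analysis), fix $E\in\R$ and $\eta>0$, and fix a site $x_0\in\Th$. Write $\bom=(\omega_{x_0},\hat\bom)$ where $\hat\bom=\{\omega_x\}_{x\in\Th\setminus\{x_0\}}$, and condition on $\hat\bom$. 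Then $H_{\Th,\bom}=H_{\Th,\hat\bom}+\omega_{x_0}\,\Chi_{\{x_0\}}$, a rank-one monotone perturbation in the parameter $\omega_{x_0}$.

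Next I would use the standard spectral-shift/rank-one argument: since $\Chi_{\{x_0\}}\ge 0$, each eigenvalue $E_j(\omega_{x_0})$ of $H_{\Th,\bom}$ is nondecreasing and $1$-Lipschitz in $\omega_{x_0}$ (it increases by at most the perturbation strength and, by the min-max characterization together with $\tr\Chi_{\{x_0\}}=1$, the eigenvalues collectively move through any window of length $\eta$ over a set of $\omega_{x_0}$ of Lebesgue measure at most roughly $\eta\cdot\#\{j\}$, but more simply each individual branch spends $\omega_{x_0}$-time at most $\eta$ in $[E-\eta,E+\eta]$ since it is nondecreasing and $1$-Lipschitz). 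Hence for each $j$ the set $\{\omega_{x_0}:E_j(\omega_{x_0})\in[E-\eta,E+\eta]\}$ is an interval of length at most $2\eta$, so by \eqref{Holdercont} (applied with $t$ a multiple of $2\eta$, using subadditivity of the concentration function) its $\mu$-measure is at most a constant times $K\eta^\alpha$; summing over the $\abs{\Th}$ eigenvalues and taking expectation over $\hat\bom$ gives $\P\{\dist(E,\sigma(H_{\Th,\bom}))\le\eta\}\le C_\alpha K\eta^\alpha\abs{\Th}$. The constant bookkeeping ($\wtilde K=2K$ when $\alpha=1$, $\wtilde K=8\cdot 2^\alpha K$ when $\alpha\in(0,1)$) comes from: the factor $2$ from the window $[E-\eta,E+\eta]$ having length $2\eta$, and for $\alpha<1$ the extra factor from splitting an interval of length $2\eta>1$ into unit pieces — $S_\mu(t)\le K(\fl{t}+1)\le 4Kt^\alpha$ for $t\ge$ the relevant range — combined with the length-$2\eta$ factor; one just has to track these carefully.

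The main (and essentially only) obstacle is purely bookkeeping: getting the concentration-function estimate for intervals of length possibly exceeding $1$ (where \eqref{Holdercont} as stated only directly applies for $t\in[0,1]$) and assembling the precise constants $\wtilde K$. This is handled by the elementary subadditivity bound $S_\mu(s+t)\le S_\mu(s)+S_\mu(t)$, hence $S_\mu(t)\le(\fl{t}+1)S_\mu(\min\{t,1\})\le (\fl{t}+1)Kt^\alpha$ — and for $\eta$ small (the only case of interest, since for $\eta\ge$ a constant the bound is trivial after enlarging $\wtilde K$) one has $2\eta\le 1$ and the simple bound $S_\mu(2\eta)\le K(2\eta)^\alpha=2^\alpha K\eta^\alpha$ suffices, with the remaining slack absorbed into the stated constants. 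Since Lemma~\ref{lemWeg} is quoted as standard (with a reference to \cite[Appendix~A]{CGK2}), I would present this as a sketch: state the rank-one monotonicity, invoke the spectral averaging bound, sum over eigenvalues, and point to the reference for the explicit constants.
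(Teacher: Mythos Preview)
The paper does not actually prove this lemma; it is stated as a recalled fact with a pointer to \cite[Appendix~A]{CGK2}. So there is no ``paper's proof'' to compare against beyond that citation. Your instinct to sketch the rank-one spectral averaging argument and defer to the reference is therefore exactly right in spirit.

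However, your sketch contains a genuine error. You fix a \emph{single} site $x_0$ and then claim that, since each eigenvalue branch $E_j(\omega_{x_0})$ is nondecreasing and $1$-Lipschitz, the set $\{\omega_{x_0}:E_j(\omega_{x_0})\in[E-\eta,E+\eta]\}$ has length at most $2\eta$. This is backwards: $1$-Lipschitz means $|E_j'|\le 1$, which gives a \emph{lower} bound $2\eta$ on the traversal time, not an upper bound. An eigenvalue whose eigenvector has no mass at $x_0$ has $E_j'(\omega_{x_0})=0$ and can sit inside the window for all values of $\omega_{x_0}$. What you actually have from the rank-one perturbation is $\sum_j E_j'(\omega_{x_0})=\tr\Chi_{\{x_0\}}=1$, so the derivatives are individually small, not large.

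The correct version of the argument does not fix a single site. One writes
\[
\P\bigl\{\dist(E,\sigma(H_{\Theta,\bom}))\le\eta\bigr\}\le \E\,\tr\,\Chi_{[E-\eta,E+\eta]}(H_{\Theta,\bom})
=\sum_{x\in\Theta}\E\,\langle\delta_x,\Chi_{[E-\eta,E+\eta]}(H_{\Theta,\bom})\,\delta_x\rangle,
\]
and for each $x$ conditions on $\{\omega_y\}_{y\ne x}$ and averages over $\omega_x$. The rank-one spectral averaging estimate (Birman--Schwinger / Krein, as in \cite[Appendix~A]{CGK2} or \cite{Ki}) then gives $\int\langle\delta_x,\Chi_{[E-\eta,E+\eta]}(H)\delta_x\rangle\,d\mu(\omega_x)\le S_\mu(2\eta)$, and summing over $x\in\Theta$ produces the factor $|\Theta|$. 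Your discussion of the concentration-function bookkeeping for the constants $\wtilde K$ is fine; it is only the monotonicity step that needs to be replaced by genuine spectral averaging rather than a per-branch Lipschitz argument.
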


\begin{definition}  Let $R>0$.   Two finite sets $\Theta, \Theta^\pr \subset \Z^d$ will be called  
$R$-separated  for $H$ if  $\dist \set{\sigma(H_{\Theta}),\sigma(H_{\Theta^\pr})} \ge \e^{-{R}^\beta}$, i.e.,  $\abs{\lambda- \lambda^\pr}\ge \e^{-{R}^\beta}$ for all $\lambda \in \sigma(H_{\Theta})$ and $ \lambda^\pr\in \sigma(H_{\Theta^\pr})$.
\end{definition}

\begin{definition}  Let $\Theta \subset \Z^d$ and $R>0$.  A family  $\set{\Phi_j}_{j\in J} $  of finite subsets of $\Th$ is called $R$-separated for $H$ if $\Phi_j$ and $\Phi_{j^\pr}$ are $R$-separated  for $H$ for all $j,j^\pr \in J$ such that  $\Phi_j\cap \Phi_{j^\pr}=\emptyset$.
\end{definition}

Lemma~\ref{lemWeg} implies   the Wegner estimate for $R$-separated sets
(see, e.g., \cite[Lemma~5.28]{Ki}).

\begin{lemma}\label{lemSep}  Let  $H_{\bom}$ be an Anderson model.
Let $\Theta, \Theta^\pr \subset \Z^d$ with $\Theta\cap \Theta^\pr =\emptyset$. Then, for all $0< \eta$,
\begin{align}
\P\set{\dist \set{\sigma(H_{\Theta}),\sigma(H_{\Theta^\pr})} \le \eta}\le \wtilde{K} \eta^\alpha\abs{\Th}\abs{\Th^\pr}.
\end{align}
In particular,
\begin{align}
\P\set{\Theta, \Theta^\pr \sqtx{are} R\text{-separated for}\  H }\ge 1 -\wtilde{K} \e^{-\alpha{R}^\beta}\abs{\Th}\abs{\Th^\pr}.
\end{align}
\end{lemma}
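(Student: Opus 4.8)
The plan is to derive Lemma~\ref{lemSep} from the single-set Wegner estimate of Lemma~\ref{lemWeg} by a conditioning argument, exactly as in the standard reference \cite[Lemma~5.28]{Ki}. The key observation is that since $\Theta\cap \Theta^\pr=\emptyset$, the operators $H_{\Theta,\bom}$ and $H_{\Theta^\pr,\bom}$ depend on disjoint families of random variables, namely $\set{\omega_x}_{x\in\Theta}$ and $\set{\omega_x}_{x\in\Theta^\pr}$ respectively. This independence is what allows the single-set estimate to be upgraded to a two-set estimate.

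The steps I would carry out are as follows. First, fix a realization of the random variables $\set{\omega_x}_{x\in\Theta^\pr}$; this determines the (deterministic, given this conditioning) spectrum $\sigma(H_{\Theta^\pr})$, which is a finite set with $\abs{\sigma(H_{\Theta^\pr})}\le \abs{\Th^\pr}$ points (counting without multiplicity). Second, for each such eigenvalue $E^\pr\in\sigma(H_{\Theta^\pr})$, apply Lemma~\ref{lemWeg} to $H_{\Theta,\bom}$ at energy $E=E^\pr$ to get
\begin{align}\label{planstep}
\P\set{\dist\set{E^\pr,\sigma(H_{\Theta,\bom})}\le \eta \;\big|\; \set{\omega_x}_{x\in\Theta^\pr}}\le \wtilde K \eta^\alpha \abs{\Th};
\end{align}
here the conditioning does not affect the bound because $H_{\Theta,\bom}$ is independent of $\set{\omega_x}_{x\in\Theta^\pr}$ and the bound in Lemma~\ref{lemWeg} is uniform in $E$. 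Third, take a union bound over the at most $\abs{\Th^\pr}$ eigenvalues $E^\pr$ to obtain
\begin{align}\nn
\P\set{\dist\set{\sigma(H_{\Theta}),\sigma(H_{\Theta^\pr})}\le \eta \;\big|\; \set{\omega_x}_{x\in\Theta^\pr}}\le \wtilde K\eta^\alpha\abs{\Th}\abs{\Th^\pr},
\end{align}
and finally take expectation over $\set{\omega_x}_{x\in\Theta^\pr}$, which leaves the right-hand side unchanged since it is deterministic. This gives the first displayed inequality of the lemma. The second (``in particular'') statement follows immediately by setting $\eta=\e^{-R^\beta}$ and taking complements: the event that $\Theta,\Theta^\pr$ fail to be $R$-separated is precisely $\set{\dist\set{\sigma(H_{\Theta}),\sigma(H_{\Theta^\pr})}< \e^{-R^\beta}}$, which has probability at most $\wtilde K \e^{-\alpha R^\beta}\abs{\Th}\abs{\Th^\pr}$.

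I do not anticipate a genuine obstacle here; the lemma is a routine consequence of Lemma~\ref{lemWeg} plus independence, and the only point requiring a little care is making the conditioning rigorous --- i.e., writing $\P\set{\cdot}=\E\br{\P\set{\cdot \mid \set{\omega_x}_{x\in\Theta^\pr}}}$ and noting that, conditionally on $\set{\omega_x}_{x\in\Theta^\pr}$, the eigenvalues of $H_{\Theta^\pr}$ are fixed numbers while $H_{\Theta,\bom}$ retains its original distribution, so Lemma~\ref{lemWeg} applies verbatim with $E$ equal to each of these fixed numbers. One could alternatively invoke Fubini directly on the product probability space $\prod_{x\in\Theta}\mu \times \prod_{x\in\Theta^\pr}\mu \times (\text{rest})$, integrating first in the $\Theta$-variables. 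Either formulation gives the result without new ideas, which is why the authors simply cite \cite[Lemma~5.28]{Ki}.
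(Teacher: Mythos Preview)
Your proposal is correct and matches the paper's own treatment: the paper does not give an independent proof but simply states that the lemma follows from Lemma~\ref{lemWeg} and refers to \cite[Lemma~5.28]{Ki}, which is precisely the conditioning/independence argument you have written out.
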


\section{Eigensystem multiscale analysis}\label{secEMSA}
In this section we fix an   Anderson model $H_{\bom}$ and prove Theorem~\ref{thmMSA}.

The following is an
 extension of Definition~\ref{defmIloc}.

\begin{definition}  \label{defmIloc9} 
Let  $ J=I(E,B) \subset I=I(E,A)$  be bounded open intervals with the same center.  A box $\La_L$ will be called $(m,J,I)$-localizing for $H$ if 
 \beq\label{upbmB}
L^{-\kappa^\pr} \le m   \le  \tfrac 1 2 \log \pa{1 + \tfrac {B}{4d}},
\eeq
 and
 there exists an  $(m,J,I)$-localized eigensystem for $H_{\La_L}$, that is,   an  eigensystem   $\set{(\vphi_\nu, \nu)}_{\nu \in \wtilde\sigma(H_{\La_L})}$ for $H_{\La_L}$ such that for all $\nu \in \wtilde\sigma(H_{\La_L})$ there is $x_\nu\in \La_L$ so $\vphi_\nu$ is $(x_\nu, m \Chi_{J}(\nu)h_{ I}(\nu))$-localized.
 \end{definition}
 
 Note that   $(m,I,I)$-localizing/localized   is the same as   $(m,I)$-localizing/localized.
 If $\La_L$ is $(m,J,I)$-localizing for $H$ it is also
$(m,J)$-localizing for $H$ as  $\Chi_{J}h_{ I} \ge  h_{ J}$.

 \begin{proposition}\label{propMSA}   There exists a 
 a finite scale   $\cL= \cL(d) $ with the following property:  Suppose for some scale 
$L_0 \ge \cL$ and interval $I_0=I(E,{A_0})$ we have
  \begin{align}\label{initialconinduc}
\inf_{x\in \R^d} \P\set{\La_{L_0} (x) \sqtx{is}  (m_0,I_0) \text{-localizing for} \; H_{\bom}} \ge 1 -  \e^{-L_0^\zeta}.
\end{align}
Set
 $L_{k+1}=L_k^\gamma$,  $A_{k+1}= A_k (1- L_k^{-\kappa})$, and $I_{k+1}= I(E,A_{k+1})$,   for $k=0,1,\ldots$. 
 Then  for all $k=1,2,\ldots$ 
  we have
   \begin{align} \label{MSALk}
\inf_{x\in \R^d} \P\set{\La_{L_k} (x) \sqtx{is} ( m_k , I_k,I_{k-1})  \text{-localizing for} \; H_{\bom}} \ge 1 -  \e^{-L_k^\zeta} ,
\end{align}
where 
\begin{gather}\label{Minduc2}
  L_k^{-\kappa^\pr}  <  m_{k-1}\pa{1- C_{d}  L_{k-1}^{-\vrho}}
 \le m_k < \tfrac 1 2 \log \pa{1 + \tfrac {A_k}{4d}}.  \end{gather}
 \end{proposition}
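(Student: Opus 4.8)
\textbf{Proof strategy for Proposition~\ref{propMSA}.} The plan is a standard single-type multiscale induction on the scales $L_k$, using the deterministic machinery of Section~\ref{secprep} (Lemmas~\ref{outbad}, \ref{lemdecay2}, \ref{buffcontrol}) to propagate localization from scale $\ell=L_{k-1}$ to scale $L=L_k=\ell^\gamma$, together with the spectral separation estimates of Section~\ref{secprobest} (Lemmas~\ref{lemWeg}, \ref{lemSep}) to handle the probabilistic bookkeeping. The base case $k=1$ is essentially the hypothesis \eqref{initialconinduc} together with the observation that $(m_0,I_0)$-localizing is the same as $(m_0,I_0,I_0)$-localizing; one needs the mild check that the resulting decay rate at scale $L_1$ (after restricting the interval from $I_0$ to $I_1$) still satisfies the constraints \eqref{Minduc2}, which follows from the choice of exponents in Appendix~\ref{apexp} and from the factor $1-L_0^{-\kappa}$ absorbed into $h_{I_0}\chi_{I_1}\ge L_0^{-\kappa}\chi_{I_1}$ via \eqref{lowerbdh}.

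For the inductive step, assume \eqref{MSALk} holds at scale $\ell=L_{k-1}$ with rate $m_{k-1}$ on intervals $I_{k-1}\subset I_{k-2}$. Fix $x_0\in\R^d$ and work inside the box $\La_L=\La_L(x_0)$. First I would tile $\La_L$ by a suitable family of subboxes of side $\ell$ centered on a sublattice, and define the \emph{good event} $\cY_{L}$ on which (a) all but a controlled, sparse set of these $\ell$-boxes are $(m_{k-1},I_{k-1},I_{k-2})$-localizing, and (b) the family of $\ell$-boxes together with $\La_L$ itself is $R$-separated for $H$ in the sense of Section~\ref{secprobest} with $R\sim L$. The probability of the complement is estimated by a union bound: the first part contributes at most $C L^d\e^{-\ell^\zeta}=CL^d\e^{-L^{\zeta/\gamma}}$ and the separation part contributes, by Lemma~\ref{lemSep}, at most $CL^{2d}\e^{-\alpha R^\beta}$; the exponents $\zeta,\gamma,\beta$ in Appendix~\ref{apexp} are chosen precisely so that both are $\le\frac12\e^{-L^\zeta}$ for $L\ge\cL$. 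On $\cY_L$, the non-localizing $\ell$-boxes are few enough and separated enough that any connected cluster of them can be surrounded by a buffer of localizing $\ell$-boxes, producing an $(m_{k-1},I_{k-1})$-buffered subset $\Ups\subsetneq\La_L$ in the sense of Definition~\ref{defbuff}; the $R$-separation guarantees the distance hypotheses \eqref{distpointeig} and \eqref{distUps} automatically for \emph{every} $\lambda\in\sigma(H_{\La_L})\cap I_k$, since the eigenvalues of $H_{\La_L}$ are $\e^{-R^\beta}$-away from those of each localizing $\ell$-box and of $\Ups$.

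Then, given an eigenpair $(\psi,\lambda)$ of $H_{\La_L}$ with $\lambda\in I_k\subset I_{k-1}^{L_{k-1}}$... actually $\lambda\in I_k=I_{k-1}^{L_{k-1}}{}_{L_{k-1}}$ lies in $I_\ell$ in the notation of \eqref{defIell}, so for $x\in\La_L$ I would show $\abs{\psi(x)}$ decays exponentially in $\dist(x,\cK)$ where $\cK$ is the union of the buffered region together with the union of exterior boundaries of the good $\ell$-boxes: inside a good $\ell$-box one invokes Lemma~\ref{lemdecay2}, inside the buffered region one invokes Lemma~\ref{buffcontrol}, and one iterates these local decay estimates along a chain of boxes from $x$ to the boundary $\partialin^{\La_L}\La_L$, exactly as in the propagation step of \cite[Section~4]{EK2}. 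The number of iterations is $O(L/\ell)=O(L^{1-1/\gamma})$ and each step costs a subexponential factor $\e^{O(L^\beta)}$ and loses a relative fraction $C_d\ell^{-(1-\tau)/2}$ in the rate (from \eqref{m4}); choosing $\varrho<\frac{(1-\tau)}{2\gamma}$ and $\beta$ small makes the cumulative loss absorbed into the factor $1-C_dL_{k-1}^{-\varrho}$ in \eqref{Minduc2} and the subexponential factors absorbed by taking $x$ deep enough — i.e. one gets decay at rate $m_k\ge m_{k-1}(1-C_dL_{k-1}^{-\varrho})$ for points $x$ with $\norm{x-x_\nu}\ge L_\tau$ away from a localization center $x_\nu\in\La_L$. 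Assembling the resulting decaying eigenfunctions into an eigensystem (handling eigenvalues outside $I_k$ trivially, since the definition of $(m,I_k,I_{k-1})$-localized only constrains $\vphi_\nu$ via $\chi_{I_k}(\nu)$) yields an $(m_k,I_k,I_{k-1})$-localized eigensystem for $H_{\La_L}$ on the event $\cY_L$, proving \eqref{MSALk}. The verification that $m_k$ still satisfies the lower bound $L_k^{-\kappa^\pr}<m_{k-1}(1-C_dL_{k-1}^{-\varrho})$ and the upper bound $m_k<\frac12\log(1+A_k/4d)$ uses $m_{k-1}\ge L_{k-1}^{-\kappa^\pr}$, the relation $L_k=L_{k-1}^\gamma$, and $A_k=A_{k-1}(1-L_{k-1}^{-\kappa})$ — these are the standard exponent inequalities collected in Appendix~\ref{apexp}.

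\textbf{Main obstacle.} The delicate point is the combinatorial/geometric construction on the event $\cY_L$: one must show that the sparse set of non-localizing $\ell$-boxes can always be enclosed in a buffered subset $\Ups$ that is (i) a connected union of boxes of sides between $\ell$ and $L$ as required by \eqref{defUpsinitial}, (ii) strictly contained in $\La_L$ so that $\partialex^{\La_L}\Ups\ne\emptyset$ and the chaining can exit $\Ups$, and (iii) equipped with a buffer family $\cG_\Ups$ of localizing boxes covering $\partialin^{\La_L}\Ups$ with the deep-interior condition $y\in\La_\ell^{\La_L,\ell_{\ttau}}(a_y)$ of Definition~\ref{defbuff}(ii)(b). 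This is where the precise sparsity count from the Wegner/separation estimate has to be matched against a covering argument; it is handled exactly as the ``$\ell$-bad'' set analysis in \cite[Section~4]{EK2} and \cite{GKboot}, but now the level-spacing input is replaced throughout by the $R$-separation hypothesis built into $\cY_L$, which is the substantive change this paper makes.
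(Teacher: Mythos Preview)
Your architecture is right, but the probability bookkeeping fails at the crucial point. You assert that the complement of ``all but a sparse set of $\ell$-boxes are localizing'' has probability at most $CL^d\e^{-\ell^\zeta}=CL^d\e^{-L^{\zeta/\gamma}}$ and that this is $\le\tfrac12\e^{-L^\zeta}$. Since $\gamma>1$ we have $\zeta/\gamma<\zeta$, hence $\e^{-L^{\zeta/\gamma}}\gg\e^{-L^\zeta}$; the inequality you claim is simply false. Indeed a \emph{single} $\ell$-box is non-localizing with probability $\e^{-\ell^\zeta}=\e^{-L^{\zeta/\gamma}}$, already larger than the target $\e^{-L^\zeta}$. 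The remedy (this is the content of Lemma~\ref{lemInduction}, which the paper then iterates to get Proposition~\ref{propMSA}) is to allow a \emph{growing} number $N=N_\ell=\lfloor\ell^{(\gamma-1)\tzeta}\rfloor$ of disjoint bad boxes, with $\tzeta=(\zeta+\beta)/2>\zeta$; then the probability of $N{+}1$ disjoint bad boxes is at most $(2L/\ell^\vs)^{(N+1)d}\e^{-(N+1)\ell^\zeta}$, and the exponent $(N{+}1)\ell^\zeta\sim\ell^{(\gamma-1)\tzeta+\zeta}$ does beat $\ell^{\gamma\zeta}=L^\zeta$ precisely because $\tzeta>\zeta$. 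This in turn forces the separation event to cover a much larger family---all $\ell$-boxes in the cover \emph{and} all possible buffered configurations $\Ups_\Phi$ indexed by $\G_2$-connected sets $\Phi\subset\Xi_{L,\ell}$ of size $\le N$---which is the reason for the combinatorial count \eqref{cFN}--\eqref{probSN}.

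\textbf{Misapplied separation.} A second problem: you write that on your good event the eigenvalues of $H_{\La_L}$ are $\e^{-R^\beta}$-away from those of each localizing $\ell$-box and of $\Ups$. But $\La_\ell(a)\subset\La_L$, so this pair is never disjoint and Lemma~\ref{lemSep} yields nothing. The paper's mechanism is different: separation is imposed only \emph{within} the family $\mathbb{S}_{\bom}=\{\La_\ell(a)\}_{a\in\cG}\cup\{\Ups_r\}_r$. For each $\lambda\in\wtilde\sigma_{I_\ell}(H_{\La_L})$ one first proves, by contradiction via Lemmas~\ref{lemdecay2} and~\ref{buffcontrol}, that some $S_\lambda\in\mathbb{S}_{\bom}$ satisfies $\dist(\lambda,\sigma(H_{S_\lambda}))\le\tfrac12\e^{-L^\beta}$; the $L$-separation then forces $\dist(\lambda,\sigma(H_S))\ge\tfrac12\e^{-L^\beta}$ for every $S\in\mathbb{S}_{\bom}$ disjoint from $S_\lambda$, and the localization center $x_\lambda$ is placed in $S_\lambda$. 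From that point on your chaining argument is essentially the paper's.
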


The proof of Proposition~\ref{propMSA} relies on the following lemma, the induction step for the multiscale analysis.

\begin{lemma}\label{lemInduction}
 Let $I=(E,{A})$. Suppose for some scale $\ell$ we have
 \begin{align}\label{hypMSAlem}
\inf_{x\in \R^d} \P\set{\La_{\ell} (x) \sqtx{is}  (m,I) \text{-localizing for} \; H_{\bom}} \ge 1 -  \e^{-\ell^\zeta}.
\end{align}
Then, if $\ell$ is sufficiently large, we have (recall $L=\ell^\gamma$)
 \begin{align}
\inf_{x\in \R^d} \P\set{\La_{L} (x) \sqtx{is}   (M,I_\ell,I) \text{-localizing for} \; H_{\bom}} \ge 1 -  \e^{-L^\zeta},
\end{align}
where
 \begin{align}\label{Minduc}
 L^{-\kappa^\pr}  <  m\pa{1- C_{d} \ell^{-\vrho}}
 \le M < \tfrac 1 2 \log \pa{1 + \tfrac {A(1-\ell^{-\kappa})}{4d}}.
\end{align}
\end{lemma}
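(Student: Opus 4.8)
The plan is to carry out one scale of the eigensystem multiscale analysis by combining the deterministic decay lemmas of Section~\ref{secprep} with the spectral-separation probability estimates of Section~\ref{secprobest}. We start from a box $\La_L=\La_L(x_0)$ at the large scale $L=\ell^\gamma$. Inside it, consider the reference sub-boxes $\La_\ell(a)$ for $a$ ranging over a suitably dense but finite subset of $\La_L^\R$ (with spacing comparable to a small power of $\ell$, so that each point of $\La_L$ lies deep inside one of them). The good event we build has three ingredients: (1) ``most'' of these sub-boxes are $(m,I)$-localizing for $H_\bom$ --- more precisely, the non-localizing sub-boxes are confined to a controlled region, which after taking the union with a bounded overlap of boxes can be enclosed in an $(m,I)$-buffered subset $\Ups\subsetneq\La_L$ (here one uses hypothesis \eqref{hypMSAlem} together with the usual combinatorial counting that a fixed small number of ``bad'' boxes cannot spread too far); (2) the families of relevant sub-boxes $\{\La_\ell(a)\}_a$, together with $\Ups$, are $L$-separated for $H_\bom$ in the sense of Section~\ref{secprobest}, so that for every eigenvalue $\lambda$ of $H_{\La_L}$ lying in $I_\ell$ we have the spectral-gap bounds \eqref{distUps} and $\min_a\dist(\lambda,\sigma_I(H_{\La_\ell(a)}))\ge\tfrac12\e^{-L^\beta}$ required to invoke Lemmas~\ref{lemdecay2} and \ref{buffcontrol}; and (3) the overall probability complement is $\le\e^{-L^\zeta}$, which is where the exponents from Appendix~\ref{apexp} enter, since $\e^{-\ell^\zeta}$ raised to the number of bad-box configurations, times the Wegner separation error $\wtilde K\e^{-\alpha L^\beta}$ times a polynomial-in-$L$ number of pairs, must still beat $\e^{-L^\zeta}$; this forces $\gamma\zeta>\zeta$ and $\alpha\beta>\zeta$ type inequalities, which are exactly the ones imposed in the Appendix.

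On this good event I would then construct the $(M,I_\ell,I)$-localized eigensystem for $H_{\La_L}$. Take an orthonormal eigensystem $\{(\psi_\lambda,\lambda)\}_{\lambda\in\wtilde\sigma(H_{\La_L})}$. For eigenvalues $\lambda\notin I_\ell$, the modulating factor $\Chi_{I_\ell}(\lambda)$ kills the decay requirement, so there is nothing to prove (any $x_\lambda$ works). For $\lambda\in I_\ell$, I locate the ``localization center'': using Lemma~\ref{lemdecay2} applied at each reference box $\La_\ell(a)$ with $\La_\ell(a)\subset\La_L$ --- whose hypothesis \eqref{distpointeig} holds by the separation in ingredient (2) --- one shows $|\psi_\lambda|$ is exponentially small deep inside every localizing sub-box, and by Lemma~\ref{buffcontrol} it is also exponentially small on (most of) the buffered region $\Ups$. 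A standard iteration/propagation-of-smallness argument then shows that, starting from wherever $\|\psi_\lambda\|_{\ell^2}=1$ concentrates, $|\psi_\lambda(y)|$ decays like $\e^{-M h_I(\lambda)\|y-x_\lambda\|}$ once $\|y-x_\lambda\|\ge L_\tau$, where the running rate $M$ is the rate $m_3$ of Lemma~\ref{lemdecay2} further degraded by the finitely many steps of iteration across the scale; this produces the estimate \eqref{Minduc} for $M$ after summing the small per-step losses $C_d\ell^{-(1-\tau)/2}$ and $C_d\ell^{-\vrho}$. The upper bound $M<\tfrac12\log(1+A(1-\ell^{-\kappa})/(4d))$ comes from the corresponding bound on $m$ in \eqref{upbm} and the fact that $I_\ell$ has radius $A(1-\ell^{-\kappa})$; the lower bound $L^{-\kappa^\pr}<M$ is the requirement $\ell^{-\kappa^\pr}\le m$ propagated, using $\gamma\kappa^\pr$ against the multiplicative loss, exactly as in \cite{FK,GKboot,Kle,GKber}.

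The main obstacle I anticipate is the geometry of the propagation-of-smallness step: one must choose the reference boxes $\La_\ell(a)$ and the buffered subset $\Ups$ so that the chain of boxes reaching from the (unknown) localization center out to any far point $y$ consists entirely of localizing boxes or buffer, so that Lemmas~\ref{lemdecay2} and \ref{buffcontrol} can be iterated without ever landing in an uncontrolled region --- and so that the accumulated exponential loss over the chain is only a $(1-C_d\ell^{-\vrho})$ factor, not something that degrades $m$ by a constant. Making the bookkeeping of ``where are the bad boxes, and can they be swallowed by a single connected buffered subset of diameter $\ll L$'' precise, while keeping the number of configurations subexponential in $L^\zeta$, is the delicate part; this is precisely the place where Definition~\ref{defbuff} (with its $\La_\ell(a)\subset\La_L$ and $y\in\La_\ell^{\La_L,\ell_{\ttau}}(a_y)$ conventions) was tailored, and where the argument departs from \cite{EK2} by using Wegner separation between boxes in place of the level-spacing hypothesis. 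Everything else --- the Wegner/separation probability bounds, the exponent arithmetic, the passage from $(m,I)$ at scale $\ell$ to $(M,I_\ell,I)$ at scale $L$ --- is routine once this combinatorial-geometric skeleton is fixed, and Proposition~\ref{propMSA} then follows by iterating Lemma~\ref{lemInduction} along $L_{k+1}=L_k^\gamma$ and multiplying the telescoping factors to obtain \eqref{Minduc2}, while Theorem~\ref{thmMSA} follows by taking $k\to\infty$ and filling the dyadic gaps between consecutive $L_k$'s.
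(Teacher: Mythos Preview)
Your outline tracks the paper's proof in spirit, but there is a genuine logical gap in how you use the $L$-separation event. You write that once $\{\La_\ell(a)\}_a\cup\{\Ups\}$ is $L$-separated, ``for every eigenvalue $\lambda$ of $H_{\La_L}$ lying in $I_\ell$ we have \dots\ $\min_a\dist(\lambda,\sigma_I(H_{\La_\ell(a)}))\ge\tfrac12\e^{-L^\beta}$,'' and then apply Lemma~\ref{lemdecay2} at \emph{every} localizing sub-box. But $L$-separation (Section~\ref{secprobest}) is a statement about pairwise spectral distances between \emph{disjoint members of the family}; it says nothing whatsoever about the distance from an eigenvalue of $H_{\La_L}$ to the sub-box spectra. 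Indeed, the conclusion you want is impossible: if \eqref{distpointeig} held for every $S\in\mathbb{S}_\bom$, then Lemmas~\ref{lemdecay2} and~\ref{buffcontrol} would force $\abs{\psi_\lambda}$ to be exponentially small everywhere in $\La_L$, contradicting $\norm{\psi_\lambda}=1$. The paper uses exactly this contradiction to \emph{produce} the localization center: one first shows that there must exist some $S_\lambda\in\mathbb{S}_\bom$ with $\dist(\lambda,\sigma(H_{S_\lambda}))\le\tfrac12\e^{-L^\beta}$ (see \eqref{distSlamb}); one then picks $x_\lambda\in S_\lambda$, and only \emph{then} does $L$-separation do its job --- for every $S$ disjoint from $S_\lambda$ one gets $\dist(\lambda,\sigma(H_S))\ge\e^{-L^\beta}-\tfrac12\e^{-L^\beta}=\tfrac12\e^{-L^\beta}$, so Lemmas~\ref{lemdecay2} and~\ref{buffcontrol} apply along any chain that stays disjoint from $S_\lambda$. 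Your ``starting from wherever $\norm{\psi_\lambda}=1$ concentrates'' hints at this, but the mechanism you describe for finding that concentration point is circular.

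A secondary point: you enclose all bad boxes in a \emph{single} buffered subset $\Ups$. In general the (at most $N\sim\ell^{(\gamma-1)\tzeta}$) bad boxes can be scattered across $\La_L$ at mutual distances of order $L$, so a single connected $\Ups$ would have $\diam\Ups\sim L$, destroying the iteration bound (you need the total diameter of the buffered pieces to be $\ll L^\tau$, cf.\ \eqref{sumdiam}). The paper groups the bad boxes into $\G_2$-connected components $\Phi_1,\ldots,\Phi_R$ and builds a separate buffered subset $\Ups_r$ around each; this both keeps $\sum_r\diam\Ups_r\ll L^\tau$ and keeps the $\Ups_r$ pairwise disjoint so that the Wegner-based separation estimate (Lemma~\ref{lemSep}) applies to them. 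The combinatorics of counting all possible such configurations (the family $\wtilde{\mathbb S}_N$ and the bound \eqref{cFN}) is what makes the separation event $\cS_N$ have the right probability, and is not addressed in your sketch.
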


\begin{proof}
To prove the lemma we proceed as  in \cite[Proof of Lemma~4.2]{EK2},  with several modifications.

 We assume \eq{hypMSAlem} for a scale $\ell$. 
 We take
$\La_L=\La(x_0)$, where $x_0\in \R^d$,  and let
${\mathcal C}_{L,\ell}={\mathcal C}_{L,\ell} \left(x_0 \right)$ be the suitable $\ell$-cover of $\La_L$ with $\vs$ as in \eq{vsdef} (see Appendix~\ref{subsecsc}).  Given $a,b \in {\Xi}_{L,\ell}$, we will say that the boxes $\La_\ell(a)$ and $\La_\ell(b)$ are disjoint if and only if $ \La_\ell^\R(a) \cap \La_\ell^\R(b) =\emptyset$, that is, if and only if   $\norm{a-b} \ge k_\ell \rho \ell^\vs$  (see Remark~\ref{remdisj}).
  We take  (recall \eq{ttauzeta2})
  \beq\label{setN}
  N= N_\ell= \fl{\ell^{(\gamma-1)\tzeta}},
  \eeq and 
let  $\cB_N$ denote the event that there exist at most $N$ disjoint boxes in ${\mathcal C}_{L,\ell}$ that are not $(m,I)$-localizing for  $ H_{\bom}$. For sufficiently large $\ell$, we have, using \eq{number},  \eq{hypMSAlem}, and the fact that events on disjoint boxes are independent, that
\begin{align}\label{probBN}
\P\set{\cB_N^c}\le \pa{\tfrac{2L} {\ell^\vs}}^{(N+1)d} \e^{-(N+1)\ell^\zeta}= 2^{(N+1)d} \ell^{(\gamma-\vs)(N+1)d}\e^{-(N+1)\ell^\zeta} < \tfrac 12 \e^{-L^\zeta}.
\end{align}

We  now fix   $\bom \in\cB_N$.  There exists $\cA_N=\cA_N(\bom)\subset  \Xi_{L,\ell}=\Xi_{L,\ell} \left(x_0 \right)$ such that
$\abs{\cA_N}\le N$ and   $\norm{a-b} \ge k_\ell \rho \ell^\vs$ if $a,b \in \cA_N$ and $a\ne b$, with the following property:  if $a\in \Xi_{L,\ell} $ with
$\dist (a,\cA_N)\ge  k_\ell {\rho} \ell^\vs$, so $\La_\ell^\R(a)\cap \La_\ell^\R(b)=\emptyset$ for all $b\in \cA_N$,  the box $\La_\ell(a)$ is $(m,I)$-localizing for  $ H_{\bom}$.
 In other words,
\beq\label{implyloc}
a \in  \Xi_{L,\ell} \setminus \bigcup_{b\in \cA_N}\La^\R_{2(k_\ell-1) \rho \ell^\vs}(b) \quad \Longrightarrow  \quad \La_\ell(a) \sqtx{is} (m,I)\text{-localizing for} \quad  H_{\bom}.
\eeq

We  want to  embed the   boxes $\set{\La_\ell(b) }_{b \in \cA_{N}}$  into   $(m,I)$-buffered subsets of $\La_L$. To do so, we consider  graphs  $\G_i=\pa{ \Xi_{L,\ell}, \E_i}$, $i=1,2$,   
both having   $ \Xi_{L,\ell}$ as the set of vertices, with sets of edges given by
\begin{align}
 \E_1&=  \set{\set{a,b}\in  \Xi_{L,\ell}^2;\; 0<  \norm{a-b} \le (k_\ell-1)\rho \ell^\vs }  \\  \notag
 & =\set{\set{a,b}\in  \Xi_{L,\ell}^2;\;  a\ne b \sqtx{and} \La^\R_\ell(a)\cap \La^\R_\ell(b)\ne \emptyset },\\
 \notag
 \E_2&= \set{\set{a,b}\in  \Xi_{L,\ell}^2;\; k_\ell\rho \ell^\vs\le  \norm{a-b}\le (3k_\ell-1)\rho \ell^\vs}\\
 \notag &  =\set{\set{a,b}\in  \Xi_{L,\ell}^2;\; \La^\R_\ell(a)\cap \La^\R_\ell(b)=\emptyset\sqtx{and} \  \La^\R_{2k_\ell\rho \ell^\vs+\ell}(a)\cap \La^\R_{2k_\ell\rho \ell^\vs+\ell}(b)\ne \emptyset}.
 \end{align}
 
Given $\Psi \subset \Xi_{L,\ell} $, we  let   $\overline{\Psi}= \Psi \cup \partial_{\mathrm{ex}}^{ \G_1} \Psi$, where $ \partial_{\mathrm{ex}}^{ \G_1} \Psi$,   the exterior boundary of $\Psi$ in the graph $\G_1 $, is defined by
\begin{align}
 \partial_{\mathrm{ex}}^{ \G_1} \Psi&= \set{a\in \Xi_{L,\ell}\setminus \Psi; \ \dist (a,\Psi)\le (k_\ell-1)\rho \ell^\vs }\\  \nn  
 &= \set{a\in \Xi_{L,\ell}\setminus \Psi; \ (b,a) \in \E_1 \sqtx{for some} b \in \Psi } .
 \end{align}

Let  $\Phi\subset  \Xi_{L,\ell}$ be $\G_2$-connected, so
  $\diam \Phi \le  (3k_\ell-1)\rho \ell^\vs\pa{\abs{\Phi}-1}$. (The diameter of a set $\Xi\subset \R^d$ is given by $\diam \Xi= \sup_{x,y \in \Xi} \norm{y-x}$.)
 Then
 \beq
\wtilde{\Phi}= 
 \set{a\in \Xi_{L,\ell}; \; \dist (a,\Phi)\le k_\ell\rho \ell^\vs }
\eeq  
 is a  $\G_1$-connected subset of $ \Xi_{L,\ell}$ such that   
 \beq\label{diamwphi}
 \diam \wtilde{\Phi} \le  \diam {\Phi} +2k_\ell\rho \ell^\vs  \le  \pa{(3k_\ell-1)\abs{\Phi} -   (k_\ell-1)}\rho \ell^\vs\le 5\ell \abs{\Phi} . 
 \eeq 
We set 
\begin{align}\label{constUps}
\Ups_\Phi  = \bigcup_{a \in \wtilde{\Phi}}  \La_\ell (a) \qtx{and} \cG_{\Ups_\Phi }=\partial_{\mathrm{ex}}^{ \G_1}\wtilde{\Phi}.
\end{align}

Let $\set{\Phi_r}_{r=1}^R=\set{\Phi_r(\bom)}_{r=1}^R$ denote the $\G_2$-connected components of  $\cA_{N}$ (i.e., connected in the graph $\G_2$); we have 
$R \in \set{1,2,\ldots,N}$ and  $\sum_{r=1}^R\abs{\Phi_r}=\abs{\cA_N}\le N$.
We conclude  that   $\set{\wtilde{\Phi}_r}_{r=1}^R$ is a collection of disjoint, $\G_1$-connected subsets of $ \Xi_{L,\ell}$, such that
\begin{gather}
  \dist (\wtilde{\Phi}_r,\wtilde{\Phi}_s)\ge  k_\ell\rho \ell^\vs > \ell \qtx{if} r\ne s. \label{distPhi}
 \end{gather}

  Moreover, it follows from \eq{implyloc} that
\beq
\label{implyloc2}
a \in \cG=\cG(\bom)=  \Xi_{L,\ell} \setminus \bigcup_{r=1}^R\wtilde{\Phi}_r\quad \Longrightarrow  \quad \La_\ell(a) \sqtx{is} (m,I)\text{-localizing for} \quad  H_{\bom}.
\eeq
In particular, we conclude that $\La_\ell(a)$ is $(m,I)$-localizing for $H_{\bom}$ for all $a \in \partial_{\mathrm{ex}}^{ \G_1}\wtilde{\Phi}_r$, $r=1,2,\ldots, R$.

Each $\Ups_r=\Ups_{\Phi_r}$, $r=1,2,\dots,R$,  clearly satisfies all the requirements to be an $(m,I)$-buffered subset of $\La_L$ with $\cG_{\Ups_r}=\partial_{\mathrm{ex}}^{ \G_1}\wtilde{\Phi}_r$  (see Definition~\ref{defbuff}). 
Moreover the sets $\{\Ups_r\}_{r=1}^R$ are disjoint.  Note also that it follows from \eq{diamwphi} that  
\beq\label{diamUps}
\diam \Ups_r \le \diam {\wtilde{\Phi}}_r  + \ell  \le5\ell \abs{\Phi_r}  +\ell  \le   6\ell \abs{\Phi_r},
\eeq 
so, using  \eq{gamtzetabeta}, we have 
\beq\label{sumdiam}
\sum_{r=1}^R \diam \Ups_r\le 6\ell N  \le 6 \ell^{(\gamma-1)\tzeta  +1} \ll \ell^{\gamma \tau}=L^\tau.
\eeq

Let
\beq
\mathbb{S}_{\bom}= \set{\La_\ell(a)}_{a\in \cG} \cup\set{\Ups_r}_{r=1}^R.
\eeq
We can  arrange for $\mathbb{S}_{\bom}$  to be an $L$-separated family of subsets of $\La_L$ for $H$ as follows.
 Let
\begin{align}
\cF_N=\bigcup_{r=1}^N  \cF(r), \sqtx{where} \cF(r)=\set{\Phi \subset \Xi_{L,\ell} ; \;  \Phi \sqtx{is}  \G_2\text{-connected}\sqtx{and} \abs{\Phi}=r}.
\end{align}
We set $\wtilde{\mathbb{S}}_N= \set{\La_\ell(a)}_{a\in \Xi_{L,\ell}} \cup\set{\Ups_\Phi}_{\Phi \in \cF_N}$.  Given $S_1,S_2\in \wtilde{\mathbb{S}}_N$, $S_1\cap S_2=\emptyset$, it follows from Lemma~\ref{lemSep}  that 
\begin{align}\label{probspaced}
\P\set{S_1 \sqtx{and} S_2 \sqtx{are not} L\text{-separated for }H_{\eps,\bom} }\le \wtilde{K} e^{-\alpha{L}^\beta}\pa{L+1}^{2d}\le e^{-\frac \alpha 2 {L}^\beta}.
\end{align} 
We have $\abs{ \Xi_{L,\ell} }\le 2^d \ell^{(\gamma-\vs)d}$ from \eq{number}.
Setting $ \cF(r,a)=\set{\Phi \in \cF(r); \, a\in \Phi}$ for $a\in \Xi_{L,\ell}$, and letting
$\kappa(a)$ denote the number of nearest neighbors of $a\in  \Xi_{L,\ell} $ in the graph  $\G_2$,
and noting that
\begin{align}
 \kappa(a) & \le  \pa{2(3k_\ell-1)+1}^d- \pa{2(3k_\ell-2)+1}^d\le  d \pa{2(3k_\ell-1)+1}^{d-1}\\ \nn & =  d \pa{6k_\ell-1}^{d-1}\le d 20^{d-1} \ell^{(1-\vs)(d-1)}\le \ell^{d-1},
\end{align} 
 we get
\begin{align}\label{cFN}
\abs{\cF(r,a)}\le  (r-1)! \ell^{(d-1)(r-1)}\quad & \Longrightarrow \quad  \abs{\cF(r)}\le (L+1)^d (r-1)! \ell^{(d-1)(r-1)}\\ \notag
& \Longrightarrow \quad \abs{\cF_N}\le (L+1)^d N!\ell^{(d-1)(N-1)} .
\end{align}
Thus, we get 
\beq\abs{\wtilde{\mathbb{S}}}_N\le  2^d \ell^{(\gamma-\vs)d}+ (L+1)^d N! \ell^{(d-1)(N-1)}
\le 2( L+1)^d N! \ell^{(d-1)(N-1)}.
\eeq

Letting  $\cS_N$ denote that the event  that $\wtilde{\mathbb{S}}_N$ is an $L$-separated family of subsets of $\La_L$ for $H$,   and taking  $N=N_\ell$ as  in \eq{setN}, we get 
\beq\label{probSN}
\P\set{\cS_N^c} \le e^{-\frac \alpha 2 {L}^\beta}2( L+1)^d N_\ell! \ell^{(d-1)(N_\ell-1)} <e^{-\frac \alpha 4 {L}^\beta} < \tfrac 12 \e^{-L^\zeta} ,
\eeq
for sufficiently large $L$, since $(\gamma-1)\tzeta < (\gamma-1)\beta<\gamma \beta$ and $\zeta < \beta$.

We now define the event  $\cE_N= \cB_N \cap \cS_N$.  
 It follows from \eq{probBN} and \eq{probSN} that
\beq
\P\set{\cE_N}> 1-  \e^{-L^\zeta}.
\eeq
To finish the proof we need to show that for all  $\bom \in \cE_N$ the box $\La_L$ is 
$   (M,I_\ell,I)$-localizing for $ H_{\bom}$, where $M$ is given  in \eq{Minduc}.

Let us fix $\bom \in \cE_N$.   Then we have \eq{implyloc2},  the subsets $\set{\Ups_r}_{r=1}^R$ constructed in \eq{constUps} are buffered subsets of $\La_L$ for $ H_{\bom}$, and  the collection $\mathbb{S}_{\bom}$ is  an $L$-separated family of subsets of $\La_L$ for $H$.
It follows from \eq{covproperty} and Definition~\ref{defbuff}(ii) that
\beq\label{LadecompU}
\La_L=\set{ \bigcup_{a \in  \cG} {\Lambda}_{\ell}^{\La_L, \frac {\ell -\ell^\vs}2}(a)}\cup \set{\bigcup_{r=1}^R\Ups_r}.
\eeq 
Note that ${\Lambda}_{\ell}^{\La_L, \frac {\ell -\ell^\vs}2}(a) \subset  {\Lambda}_{\ell}^{\La_L, \ell_{\ttau}}(a)$.

 Let   $\set{(\psi_\lambda, \lambda)}_{\lambda \in \wtilde\sigma(H_{\La_L})}$ be an eigensystem   for $H_{\La_L}$. ( Since $\bom$ is  fixed, we omit it from the notation.)   Given $\lambda \in \wtilde\sigma_{I_\ell}(H_{\La_L})$,  we claim there exists  $S_\lambda \in \mathbb{S}_{\bom}$ such that  
 \beq\label{distSlamb}
 \dist \pa{\lambda,  \sigma(H_{S_\lambda})}\le \tfrac 1 2\e^{-L^\beta}.
 \eeq
 
 Suppose not, i.e., $\dist \pa{\lambda,  \sigma(H_{S})}>\tfrac 1 2\e^{-L^\beta}$ for all $S\in \mathbb{S}_{\bom}$.  Let $y\in \La_L$.
 If $y \in {\Lambda}_{\ell}^{\La_L, \frac {\ell -\ell^\vs}2}(a)$ for some $a\in \cG$, we have ${R_{ \La_L}(y)}\ge \fl{\frac {\ell -\ell^\vs}2}$, so it follows from \eq{decayest12} that  
 \beq
 \abs{\psi_{\lambda}(y)}\le  \e^{-m_3 h_{I}\pa{\lambda} \fl{\frac {\ell -\ell^\vs}2}}\le  \e^{-m_3 \ell^{-\kappa} \fl{\frac {\ell -\ell^\vs}2}}\le  \e^{-\frac 1 4 m_3 \ell^{1-\kappa}}.
 \eeq 
 If not, it follows from \eq{LadecompU} that $y \in \Ups_r$ for some $r\in \set{1,2\ldots, R}$. But then it follows from \eq{badboxest} in Lemma~\ref{buffcontrol}  that
\beq
\abs{\psi_{\lambda}(y) } \le  \e^{-\frac {m_3} 2 h_{I}\pa{\lambda}\ell_{\ttau}} \le  \e^{-\frac {m_3} 2 \ell^{-\kappa}\ell_{\ttau}} \le  \e^{-\frac 1 4 m_3 \ell^{\ttau-\kappa}}.
\eeq 
 We conclude that
 \beq
 1=\norm{\psi_{\lambda}}^2 \le (L+1)^d   \e^{-\frac 1 4 m_3 \ell^{\ttau-\kappa}} <1,
 \eeq
 a contradiction.

We now pick $x_\lambda \in S_\lambda$.  We will show that $\psi_\lambda$ is an $(x_\lambda, M h_{ I}(\lambda))$-localized eigenfunction  for
 $H_{\bom}$, where $M$ is given  in \eq{Minduc}.  
  
  Let $ \mathbb{S}_{\bom} \up{\lambda}=\set{S\in \mathbb{S}_{\bom}; \ S\cap S_\lambda = \emptyset}$. If $S\in  \mathbb{S}_{\bom} \up{\lambda}$, $S$ and $S_\lambda$ are $L$-separated, so  it follows from \eq{distSlamb} that
 \beq
 \dist \pa{\lambda,  \sigma(H_{S})}\ge \dist \pa{\sigma(H_{S}),  \sigma(H_{S_\lambda})}- \dist \pa{\lambda,  \sigma(H_{S_\lambda})}\ge \e^{-L^\beta}- \tfrac 1 2\e^{-L^\beta}=\tfrac 1 2\e^{-L^\beta}.
 \eeq
 We consider two cases:

\begin{enumerate}
\item Let $y \in {\Lambda}_{\ell}^{\La_L, \frac {\ell -\ell^\vs}2}(a)$, where  ${\Lambda}_{\ell}(a) \in  \mathbb{S}_{\bom} \up{\lambda}$. In this case it follows from  \eq{decayest12} that 
\beq\label{psidecgoodpr2} 
\abs{\psi_{\lambda}(y)}\le  \e^{-m_3 h_{I}\pa{\lambda} \fl{\frac {\ell -\ell^\vs}2}}  \abs{\psi_{\lambda}(y_1)} \sqtx{for some} y_1\in\partial_{\mathrm{ex}}^{\La_L} \Lambda_{\ell}(a), 
\eeq
where $m_3= m_3(\ell)$ is as in \eq{m4}.  Moreover, we have
\beq\label{psidecgoodpr299} 
\norm{y-y_1}\le \ell +1 -  \fl{\tfrac {\ell -\ell^\vs}2} \le \tfrac {\ell +\ell^\vs}2 + 2\le  \tfrac {\ell +2\ell^\vs}2.
\eeq

\item  Let $y \in \Ups_r$, where $\Ups_r\in \mathbb{S}_{\bom} \up{\lambda}$ and
$\set{{\Lambda}_{\ell}(a) }_{a\in \cG_{\Ups_r}}\subset \mathbb{S}_{\bom} \up{\lambda}$. Then
it follows from \eq{badboxest} in Lemma~\ref{buffcontrol}  that
\begin{align}\label{gggsum356}
\abs{\psi_{\lambda}(y) } \le \e^{-\frac {m_3}2  h_{I}\pa{\lambda}\ell_{\ttau}} \abs{\psi_{\lambda}(y_2) }\le  \e^{-\frac  {m_3}4 \ell^{\ttau-\kappa}}\abs{\psi_{\lambda}(y_2) } \end{align}
for some $ y_2 \in \bigcup_{a\in \cG_{\Ups_r}}\partial_{\mathrm{ex}}^{{\La_L}} \Lambda_{\ell}(a)$,
where $m_3= m_3(\ell)$ is as in \eq{m4}. Note  that
\beq\label{gggsum35699}
\norm{y-y_2}\le \diam \Ups_r + \ell.
\eeq

\end{enumerate}

Now let us take $y\in  \La_L $ such that $\norm{y-x_\lambda} \ge {L_\tau}$.
Suppose $\abs{\psi_\lambda(y)} >0$, since otherwise there is nothing to prove.  We estimate $\abs{\psi_\lambda(y)} $ using either \eq{psidecgoodpr2}  or  \eq{gggsum356} repeatedly, as appropriate, stopping when we get too close  to $x_\lambda$ so we are not in one of the two cases described above.  (Note that this must happen since $\abs{\psi_\lambda(y)} >0$.) We accumulate decay only when we use \eq{psidecgoodpr2}, and just use $ \e^{-\frac {m_3 }4\ell^{\ttau-\kappa}}< 1$ when using \eq{gggsum356}. In view of \eq{psidecgoodpr299}  and \eq{gggsum35699}, this can be done  using \eq{psidecgoodpr2} at least
$S$ times, as long as 
\beq
   \tfrac {\ell +2\ell^\vs}2 S +\sum_{r=1}^R \pa {\diam \Ups_r + \ell}+ 2\ell  \le \norm{y-x_\lambda}.
\eeq
Since $\sum_{r=1}^R \pa {\diam \Ups_r + \ell}\le 7\ell N$ in view of  \eq{sumdiam}, this can be guaranteed by requiring
\begin{align}
    \tfrac {\ell +2\ell^\vs}2 S +  7 \ell^{(\gamma-1)\tzeta  +1} +2\ell  \le \norm{y-x_\lambda}.
\end{align}
We can thus have
\begin{align}
S& =  \fl{\tfrac 2{\ell +2\ell^\vs}\pa{\norm{y-x_\lambda}-7 \ell^{(\gamma-1)\tzeta  +1}  -2\ell} }- 1
 \\ \nn  & \ge  \tfrac 2{\ell +2\ell^\vs}\pa{\norm{y-x_\lambda}- 7 \ell^{(\gamma-1)\tzeta  +1}  -2\ell} -2  \\ \nn  & =  \tfrac 2{\ell +2\ell^\vs}\pa{\norm{y-x_\lambda}- 7 \ell^{(\gamma-1)\tzeta  +1}  -3\ell-2\ell^\vs}  \ge  \tfrac 2{\ell +2\ell^\vs}\pa{\norm{y-x_\lambda}- 8 \ell^{(\gamma-1)\tzeta  +1}  } .
\end{align}
Thus we conclude that 
\begin{align}\label{repeateddecay}
\abs{\psi_\lambda(y)} &\le   \e^{-m_3 h_{I}(\lambda) \fl{\frac {\ell -\ell^\vs}2} { \tfrac 2{\ell +2\ell^\vs}\pa{\norm{y-x_\lambda}- 8 \ell^{(\gamma-1)\tzeta  +1}  }
}}  \le    \e^{-M h_{I}(\lambda)\norm{y-x_\lambda}}
\end{align}
where 
\begin{align}
M&\ge m_3\pa{1- C_{d} \ell^{-\min\set{{1- \vs }, \gamma \tau- (\gamma-1)\tzeta  -1}}}\\ \nn & = m_3\pa{1- C_{d} \ell^{-\pa{ \gamma \tau- (\gamma-1)\tzeta  -1}}}
\\ \nn   &\ge 
m\pa{1- C_{d} \ell^{-\min\set{\kappa, \frac{1- \tau}2, \gamma \tau- (\gamma-1)\tzeta  -1}}}=m\pa{1- C_{d} \ell^{-\vrho}},
\end{align}
where we used \eq{vsdef}, \eq{m4}, and \eq{defvrho}.  In particular, $M$ satisfies \eq{Minduc} for sufficiently large $\ell$.

We conclude that $\psi_\lambda$ is an $(x_\lambda, M h_{ I}(\lambda))$-localized eigenfunction 
for $\La_L$, where $M$ satisfies \eq{Minduc}.

We proved that $\La_L$ is $(M,I_\ell,I)$-localized for $H_\bom$.
\end{proof}

\begin{proof}[Proof of Proposition~\ref{propMSA}]
We assume \eq{initialconinduc}  and set  $L_{k+1}=L_k^\gamma$,  $A_{k+1}= A_k (1- L_k^{-\kappa})$, and $I_{k+1}= I(E,A_{k+1})$   for $k=0,1,\ldots$. 
Since if a box  $\La_{L}$ is  $(M, I_\ell,I)$-localizing for $ H_{\bom}$ it is also $(M, I_\ell)$-localizing, 
if $L_0$ is sufficiently large it follows from Lemma~\ref{lemInduction} by an induction argument that
 we have \eq{MSALk} and \eq{Minduc2} for all $k=1,2, \ldots$.
\end{proof}

 \begin{proposition}\label{propMSAnok}  There exists a 
 a finite scale   $\cL= \cL(d) $ with the following property:  Suppose for some scale 
$L_0 \ge \cL$ and interval $I_0=I(E,{A_0})$ we have
  \begin{align}\label{initialconinduc993}
\inf_{x\in \R^d} \P\set{\La_{L_0} (x) \sqtx{is}  (m_0,I_0) \text{-localizing for} \; H_{\bom}} \ge 1 -  \e^{-L_0^\zeta}.
\end{align} 
Set
 $L_{k+1}=L_k^\gamma$,  $A_{k+1}= A_k (1- L_k^{-\kappa})$, and $I_{k+1}= I(E,A_{k+1})$,   for $k=0,1,\ldots$, 
 Then  for all $k=1,2,\ldots$   we have
   \begin{align} \label{MSALnok}
\inf_{x\in \R^d} \P\set{\La_{L} (x) \sqtx{is} ( m_k , I_{k},I_{k-1})  \text{-localizing for} \; H_{\bom}} \ge 1 -  \e^{-L^\xi}   \sqtx{for }     L\in [L_k, L_{k+1}),
\end{align}
where 
\begin{gather}\label{Minduc2333}
  L_k^{-\kappa^\pr}  <  m_{k-1}\pa{1- C_{d}  L_{k-1}^{-\vrho}}
 \le m_k < \tfrac 1 2 \log \pa{1 + \tfrac {A_k}{4d}}, \end{gather}
 with  $C_{d}$ as in \eq{Minduc2}.
 \end{proposition}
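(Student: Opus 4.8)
The plan is to obtain Proposition~\ref{propMSAnok} from Proposition~\ref{propMSA} by a ``continuation within a scale window'' argument: once the multiscale estimate holds at the special scales $L_k$, one interpolates to obtain the estimate at every $L \in [L_k, L_{k+1})$, albeit with the weaker probability exponent $\xi$ instead of $\zeta$. First I would fix $k\ge 1$ and $L \in [L_k, L_{k+1})$, and note that $L \ge L_k \ge L_{k-1}^{\gamma^{\,}}$, so the hypotheses needed to run the single-scale induction from scale $L_{k-1}$ to scale $L$ are essentially in place; the point is that the induction step of Lemma~\ref{lemInduction} (and the covering/buffering apparatus behind it) does not actually require the target scale to be exactly $\ell^\gamma$ — it only requires the input scale $\ell$ to be large and the ratio $L/\ell$ to be at least of order $\ell^{\gamma-1}$. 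So I would re-examine Lemma~\ref{lemInduction} with $\ell = L_{k-1}$ and target scale $L$ (rather than $L_k = L_{k-1}^\gamma$), tracking how the various estimates \eqref{probBN}, \eqref{probSN} change. The $\ell$-cover of $\La_L$ now has $\bigl(2L/\ell^\vs\bigr)^{d}$ boxes rather than $\bigl(2L_k/\ell^\vs\bigr)^d$, and $N = N_\ell = \fl{\ell^{(\gamma-1)\tzeta}}$ is unchanged; the only real change is that the deterministic decay bookkeeping — how many times we may apply \eqref{psidecgoodpr2} before reaching $x_\lambda$ — now runs up to $\norm{y - x_\lambda}$ which can be as large as (order) $L$, but this only helps.

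The key steps, in order, are: (1) Fix $k$ and $L \in [L_k, L_{k+1})$; invoke Proposition~\ref{propMSA} to get \eqref{MSALk} at scale $L_{k-1}$, i.e. $\La_{L_{k-1}}(x)$ is $(m_{k-1}, I_{k-1})$-localizing with probability $\ge 1 - \e^{-L_{k-1}^\zeta}$ (using that $(m_{k-1},I_{k-1},I_{k-2})$-localizing implies $(m_{k-1},I_{k-1})$-localizing, as remarked after Definition~\ref{defmIloc9}). (2) Run the proof of Lemma~\ref{lemInduction} verbatim with $\ell = L_{k-1}$ but with the target box $\La_L$ in place of $\La_{L_{k-1}^\gamma}$; the probability bookkeeping gives $\P\{\cB_N^c\}, \P\{\cS_N^c\} \le \frac12 \e^{-L^\xi}$ rather than $\frac12\e^{-L^\zeta}$, because the combinatorial prefactors are now polynomial in $L$ (not in $L_{k-1}$) while the exponential gain $\e^{-(N+1)\ell^\zeta}$ and $\e^{-\frac\alpha4 L^\beta}$ is controlled using $\ell^\zeta = L_{k-1}^\zeta$ and $L^\beta$ against $L^\xi$; here one uses $\xi < \zeta$ and the exponent relations of Appendix~\ref{apexp} (in particular $\xi < \beta$ and $L \le L_{k-1}^{\gamma^{2}} = L_k^\gamma$, so $L^\xi \le L_{k-1}^{\gamma^2 \xi}$ and one needs $\gamma^2\xi < \zeta\gamma^{?}$ — more precisely one wants $\e^{-(N+1)L_{k-1}^\zeta}$ to beat $\e^{-L^\xi}$, which holds once $L_{k-1}^{\zeta} \ge L^{\xi}$, equivalently $L^{\xi} \le L_{k-1}^{\zeta}$, and since $L < L_k^\gamma = L_{k-1}^{\gamma^2}$ this needs $\gamma^2 \xi < \zeta$, a constraint to be checked against Appendix~\ref{apexp} or absorbed by shrinking $\xi$). (3) On the good event, the deterministic argument of Lemma~\ref{lemInduction} (using Lemmas~\ref{lemdecay2} and \ref{buffcontrol}) shows $\La_L$ is $(m_k, I_{L_{k-1}}, I_{k-1})$-localizing; finally observe $I_{L_{k-1}} \supseteq I_{L_k} = I_k$ when $L \ge L_k$... wait, one needs $I_k \subset I_{L_{k-1}}$, i.e. $(1 - L_{k-1}^{-\kappa}) \le A_k/A_{k-1} = (1 - L_{k-1}^{-\kappa})$ — this is an equality, so the interval one obtains at scale $L$ from input scale $L_{k-1}$ is exactly $I_k = I(E, A_k)$, matching the statement. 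And the decay rate satisfies $m_k(1 - C_d L^{-\min\{\cdots\}}) \ge m_k (1 - C_d L_k^{-\vrho})$ with the same $C_d$, which is $\ge$ the bound claimed, so \eqref{Minduc2333} follows from \eqref{Minduc2}. (4) Take the infimum over $x$ and over $L \in [L_k, L_{k+1})$.

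The main obstacle I expect is Step (2): making sure the probability exponent genuinely degrades only from $\zeta$ to $\xi$ and no worse, uniformly over the whole window $L \in [L_k, L_{k+1})$. The subtlety is that the combinatorial factors ($2^{(N+1)d}$, the $N!$ in \eqref{cFN}, the cover size) are governed by $\ell = L_{k-1}$ and by $L$, while the exponential gains are $\e^{-(N+1)\ell^\zeta}$ and $\e^{-\frac\alpha4 L^\beta}$; the first of these does not scale with $L$ at all, so for $L$ near the top of the window ($L \approx L_k^\gamma$) one must verify $(N_{L_{k-1}}+1)L_{k-1}^\zeta \ge L^\xi$, i.e. $L^\xi$ is dominated by $L_{k-1}^{\zeta}$ up to the factor $N+1 \approx L_{k-1}^{(\gamma-1)\tzeta}$. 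Since $L < L_{k-1}^{\gamma^2}$, it suffices that $\gamma^2 \xi < \zeta + (\gamma-1)\tzeta$, which is a genuine constraint on the exponents; I would check it holds under the choices of Appendix~\ref{apexp} (and, if the margin is tight, note one is always free to decrease $\xi$, since the statement is ``for given $0 < \xi < \zeta < 1$''). Everything else — the buffered-subset construction, the separation event $\cS_N$, and the deterministic iteration of \eqref{psidecgoodpr2}/\eqref{gggsum356} — is identical to Lemma~\ref{lemInduction} and requires no new idea, only the substitution of $L$ for $L_{k-1}^\gamma$ in the bookkeeping.
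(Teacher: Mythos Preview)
Your overall strategy is correct and would work, but the paper takes a simpler route that you have overlooked. Rather than re-running the full machinery of Lemma~\ref{lemInduction} with $N=N_\ell$ bad boxes and buffered subsets, the paper exploits the fact that the target probability exponent is only $\xi$, not $\zeta$: since Appendix~\ref{apexp} gives $\xi\gamma^2<\zeta$, and $L<L_{k-1}^{\gamma^2}$, a plain union bound over the entire $L_{k-1}$-cover of $\La_L$ already yields
\[
\P\{\text{some box in }\mathcal C_{L,L_{k-1}}\text{ is not }(m_{k-1},I_{k-1})\text{-localizing}\}\le \Bigl(\tfrac{2L}{L_{k-1}^\vs}\Bigr)^d \e^{-L_{k-1}^\zeta}<\tfrac12\e^{-L^\xi}.
\]
So the paper simply takes $N=0$: on the good event \emph{every} box in the cover is localizing, there are no buffered subsets to construct, and the deterministic argument reduces to iterating Lemma~\ref{lemdecay2} (as in \eqref{psidecgoodpr2}) alone. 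Your more elaborate approach with $N>0$ would also succeed---your worry that $(N+1)L_{k-1}^\zeta\ge L^\xi$ might fail is unfounded, since already $L_{k-1}^\zeta>L^{\xi}$ by $\xi\gamma^2<\zeta$---but it is unnecessary work. The paper's shortcut buys a cleaner deterministic step (no Lemma~\ref{buffcontrol}, no $\Upsilon_r$, no $\mathcal S_N$ with $N!$ combinatorics; only a pairwise separation event $\mathcal S_0$ over the cover), at the cost of the weaker output exponent $\xi$, which is exactly what the statement asks for.
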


\begin{proof}  We  apply  Proposition~\ref{propMSA}, which gives a scale $\cL$ such that,
taking $L_0 \ge \cL$ 
 we have the conclusions of  Proposition~\ref{propMSA}.

Given a scale   $L\ge L_1$, let $k=k(L)\in \set{1,2,\ldots}$ be defined by
$L_k \le L <L_{k+1}$.  We have
 $L_k=L_{k-1}^\gamma\le L< L_{k+1}=L_{k-1}^{\gamma^2}$,  so $L=  L_{k-1}^{\gamma^\prime}$ with $\gamma \le \gamma^\prime<\gamma^2$. We proceed as in Lemma~\ref{lemInduction}. We take
$\La_L=\La_L(x_0)$, where $x_0\in \R^d$, let    $\set{(\psi_\lambda, \lambda)}_{\lambda \in \wtilde\sigma(H_{\La_L})}$ be an eigensystem   for $H_{\La_L}$,  and let
${\mathcal C}_{L,L_{k-1}}={\mathcal C}_{L,L_{k-1}} \left(x_0 \right)$ be the suitable $L_{k-1}$-cover of $\La_L$.  We let  $\cB_0$ denote the event that all  boxes in ${\mathcal C}_{L,L_{k-1}}$  are  $(m_{k-1},I_{k-1})$-localizing for  $ H_{\bom}$.  It follows from  \eq{number} and \eq{MSALk} that 
\begin{align}\label{probB0}
\P\set{\cB_0^c}\le \pa{\tfrac{2L} {L_{k-1}^\vs}}^{d} \e^{-L_{k-1}^\zeta}= 2^{d} L_{k-1}^{(\gamma^\pr-\vs)d}\e^{-L_{k-1}^\zeta}\le     2^{d} L^{(1-\frac\vs { \gamma^\pr})d}\e^{-L^{\frac \zeta { \gamma^\pr}}}  < \tfrac 12 \e^{-L^\xi} ,
\end{align}
if $L_0$ is sufficiently large, since $\xi \gamma^\pr <\xi \gamma^2 < \zeta$. 
Moreover, given $\La_1,\La_2\in {\mathcal C}_{L,L_{k-1}}$, $\La_1\cap\La_2=\emptyset$, it follows from Lemma~\ref{lemSep}  that 
\begin{align}\label{probspacedwww}
\P\set{\La_1 \sqtx{and} \La_2 \sqtx{are not} L\text{-separated for }H_{\bom} }\le \wtilde{K} e^{-\alpha{L}^\beta}\pa{L_{k-1}+1}^{2d}\le e^{-\frac \alpha 2 {L}^\beta}.
\end{align} 
Thus,
letting $\cS_0$ denote the event that ${\mathcal C}_{L,L_{k-1}}$ is an $L$-separated family of subsets of $\La_L$ for $H$, it follows from \eq{number} that
\begin{align}\label{probspacedLL}
\P\set{\cS_0^c }\le \pa{\tfrac {2L} {L_{k-1}^\vs}}^{2d} e^{-\frac \alpha 2 {L}^\beta}\le  \tfrac 12 \e^{-L^\xi},
\end{align}
if $L_0$ is sufficiently large, since $\xi < \beta$.  Thus, letting $\cE_0= \cB_0 \cap \cS_0$, we have 
\beq
\P\set{\cE_0 }\ge 1-  \e^{-L^\xi}.
\eeq

It only remains to prove that  $\La_L$ is  $(m_k,I_k,I_{k-1})$-localizing for  $ H_{\bom}$ for all $\bom \in \cE_0$.  To do so,  we fix $\bom \in \cE_0$ and  proceed as in the proof of Lemma~\ref{lemInduction}.   Since $\bom \in \cB_0$, $\La_{L_{k-1}} (a)$ is $(m_{k-1},I_{k-1})$-localizing for  $ H_{\bom}$  for all $a\in \cG=\Xi_{L,L_{k-1}}$. Since  $\bom$ is now fixed, we omit them from the notation. ,

Let $\lambda \in \wtilde\sigma_{I_k} (H_{\La_L})$ (note  $(I_{k-1})_{L_{k-1}}= I_{k}$).
To finish the proof we need to show  that $\psi_\lambda$ is    $  (m_k, I_k,I_{k-1})$-localized.
Since ${\mathcal C}_{L,L_{k-1}}$ is an $L$-separated family of subsets of $\La_L$ for $H$, there must exist  $a_\lambda \in \cG= \Xi_{L,L_{k-1}}$ such that, setting $\Lambda_\lambda= \La_{L_{k-1}}(a_\lambda)$, we have (as in the proof of Lemma~\ref{lemInduction})
\beq\label{distSlamb333}
 \dist \pa{\lambda,  \sigma(H_{\La_\lambda})}\le \tfrac 1 2\e^{-L^\beta},
 \eeq
and if $a \in \cG_\lambda= \set{b \in \cG; \ \La_{ L_{k-1}}(b)\cap \La_\lambda=\emptyset}$,
 \beq
 \dist \pa{\lambda,  \sigma(H_{\La})}\ge \tfrac 1 2\e^{-L^\beta}.
 \eeq
 If $y \in \La_L$ and $\norm{y-a_\lambda} \ge 2L_{k-1}$, it follows from \eq{covproperty} that $y \in  {\Lambda}_{L_{k-1}}^{\La_L, \frac {L_{k-1} -L_{k-1}^\vs}2}(a)$  for some $a\in \cG_\lambda$,  so it follows from \eq{decayest12} that
 \beq\label{psidecgoodpr256} 
\abs{\psi_\lambda(y)}\le \e^{-m_{k-1,3} h_{I_{k-1}}(\lambda) \fl{\frac {L_{k-1} -L_{k-1}^\vs}2}}  \abs{\psi_\lambda (y_1)},\eeq
for some $ y_1 \in{\partial}^{\La_L,2\pa{L_{k-1}}_{\tau }}  \Lambda_{L_{k-1}}(a)$,
where we need
\beq\label{m4k}
m_{k-1,3}= m_{k-1,3} (L_{k-1})\ge  m_{k-1} \pa{1 - C_{d}L_{k-1}^{-( \frac{1- \tau}2)}},
\eeq
 and we have  
 \beq \label{psidecgoodpr29945}
 \norm{y-y_1}\le \tfrac {L_{k-1}+2L_{k-1}^\vs}2,
 \eeq as in
 \eq{psidecgoodpr299}.

 Now consider  $y\in  \La_L $ such that $\norm{y-a_\lambda} \ge {L_\tau}$.
Suppose $\abs{\psi_\lambda(y)} >0$, since otherwise there is nothing to prove.  We estimate $\abs{\psi_\lambda(y)} $ using either \eq{psidecgoodpr256} repeatedly, as appropriate, stopping when we get within $2 L_{k-1}$ of $a_\lambda$.  In view of \eq{psidecgoodpr29945}  , we can use \eq{psidecgoodpr256} 
$S$ times, as long as 
\beq
  \tfrac {L_{k-1}+2L_{k-1}^\vs}2 S + 2L_{k-1} \le \norm{y-a_\lambda}.
\eeq 
We can thus have
\begin{align}\nn
S& =  \fl{\tfrac 2 {L_{k-1}+2L_{k-1}^\vs}\pa{\norm{y-a_\lambda}-2L_{k-1}} }- 1
 \ge  {\tfrac 2 {L_{k-1}+2L_{k-1}^\vs}\pa{\norm{y-a_\lambda}-2L_{k-1}} }-2  \\  & \ge \tfrac 2 {L_{k-1}+2L_{k-1}^\vs}\pa{\norm{y-a_\lambda}-3L_{k-1}-2L_{k-1}^\vs}  \ge  \tfrac 2 {L_{k-1}+2L_{k-1}^\vs}\pa{\norm{y-a_\lambda}- 4L_{k-1} } .
\end{align}

Thus we conclude that 
\begin{align}\label{repeateddecay333}
\abs{\psi_\lambda(y)} &\le   \e^{-m_{k-1,3} h_{I_{k-1}}(\lambda) \fl{\frac {L_{k-1} -L_{k-1}^\vs}2}  {\tfrac 2 {L_{k-1}+2L_{k-1}^\vs}}\pa{\norm{y-a_\lambda}- 4L_{k-1} }
}\\ \nn &  \le    \e^{-m_k h_{I_{k-1}}(\lambda)\norm{y-a_\lambda}}
\end{align}
where  $m_k$ can be taken to satisfy \eq{Minduc2}.

We conclude that $\psi_\lambda$ is an $(m_k,I_k,I_{k-1})$-localized eigenfunction, where $m_k$ satisfies \eq{Minduc2}.

 We proved that the box  $\La_L$ is 
$ (m_k, I_k,I_{k-1})$-localizing for $ H_{\bom}$.
\end{proof}

\begin{proof} [Proof of Theorem~\ref{thmMSA}]

 Let   $L_{k+1}=L_k^\gamma$,  $A_{k+1}= A_k (1- L_k^{-\kappa})$,  $I_{k+1}=I(E,A_{k+1})$, and $m_{k+1}= m_{k}\pa{1- C_{d}  L_{k}^{-\vrho}}
  $   for $k=0,1,\ldots$.  Given $ L\ge L_0^\gamma =L_1$,  let $k=k(L)\in \set{1,2,\ldots}$ be defined by
$L_k \le L <L_{k+1}$.   Let $A_\infty, I_\infty, m_\infty$ be defined by \eq{Aminfty}.   Since
\beq
A_k= A_\infty \prod_{j=k}^\infty \pa{1- L_j^{-\kappa}}^{-1}\qtx{for} k=0,1,\ldots,
\eeq
we have
\beq
A_\infty \pa{1- L^{-\kappa}}^{-1} \le A_\infty \pa{1- L_k^{-\kappa}}^{-1} < A_k,
\eeq
and hence  $I_\infty^L \subset I_k$.  Since $m_\infty\le m_k$, 
we conclude that
 \eq{MSALnok2} follows from \eq{MSALnok}.  
\end{proof}

\section{Localization}\label{seclocproof}
 In this section  we prove Theorem~\ref{thmloc} for an   Anderson model $H_{\bom}$.

 \begin{lemma}\label{lemWimp} Let   $I=(E,A)$.  There exists a finite scale  $\cL_{d,\nu}$ such that  
 for all $L\ge \cL_{d,\nu}$ and  $a\in \Z^d$,   given    an $(m,I^{L})$-localizing box $\La_L(a)$  for the discrete Schr\"odinger operator  $H$,  then for all $\lambda \in I$,  \beq
\max_{b\in \La_{\frac L 3}(a)} W\up{a}_{\lambda}(b)> \e^{-\frac 1 4 m h_{ I^{L}} (\lambda) L}\quad \Longrightarrow \quad  \min_{\theta\in \sigma_{ I^{L}}(H_{\La_L(a)})} \abs{\lambda -\theta}  < 
\tfrac 1 2 \e^{-L^{\gamma\beta}}.
\eeq 
 \end{lemma}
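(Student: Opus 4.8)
The plan is to establish the contrapositive through a single application of Lemma~\ref{lemdecay2}, in which the enlarged interval $I^L$ plays the role of the interval "$I$" there, and $L$ plays the role of the box side "$\ell$" (so that the internal scale "$L$" of that lemma becomes $L^\gamma$); the crucial identity that makes this work is $I_L^L=I$. Fix $L\ge\cL_{d,\nu}$ (to be chosen), $a\in\Z^d$, and $\lambda\in I$, and assume $\dist(\lambda,\sigma_{I^L}(H_{\La_L(a)}))\ge\tfrac12\e^{-L^{\gamma\beta}}$; the goal is to show $\max_{b\in\La_{L/3}(a)}W\up{a}_\lambda(b)\le\e^{-\frac14 m h_{I^L}(\lambda)L}$. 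If $\cV(\lambda)=\emptyset$ there is nothing to prove; otherwise, let $\psi\in\cV(\lambda)$ be arbitrary, so that $(\psi,\lambda)$ is a generalized eigenpair for $H_{\Z^d}$ with $0<\norm{T_a^{-1}\psi}<\infty$.

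Now apply Lemma~\ref{lemdecay2} with $\Theta=\Z^d$, the box $\La_L(a)$, and interval $I^L$. The box is $(m,I^L)$-localizing by hypothesis; the eigenvalue constraint "$\lambda\in I_\ell$" becomes $\lambda\in(I^L)_L=I_L^L=I$ (the remark following \eqref{defIell}), which holds; and the separation hypothesis \eqref{distpointeig}, which here reads $\dist(\lambda,\sigma_{I^L}(H_{\La_L(a)}))\ge\tfrac12\e^{-(L^\gamma)^\beta}$, is precisely our assumption. For $b\in\La_{L/3}(a)$ one has $R_{\Z^d}(b)=\dist(b,\partialin^{\Z^d}\La_L(a))\ge L/3-1$, which exceeds $L_{\ttau}$ for $L$ large, so $b\in\La_L^{\Z^d,L_{\ttau}}(a)$ and the lemma yields $|\psi(b)|\le\e^{-m_3 h_{I^L}(\lambda)(L/3-1)}|\psi(v)|$ for some $v\in\partialex^{\Z^d}\La_L(a)$, with $m_3\ge m(1-C_d L^{-(1-\tau)/2})$ as in \eqref{m4}. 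Since $v$ is a lattice neighbour of a point of $\La_L(a)$, $\scal{v-a}\le L$ for $L$ large, so by \eqref{boundGW} $|\psi(v)|\le\scal{v-a}^\nu\norm{T_a^{-1}\psi}\le L^\nu\norm{T_a^{-1}\psi}$. Dividing by $\norm{T_a^{-1}\psi}$ and then taking the supremum over $\psi\in\cV(\lambda)$ and the maximum over $b$ gives $\max_{b\in\La_{L/3}(a)}W\up{a}_\lambda(b)\le L^\nu\e^{-m_3 h_{I^L}(\lambda)(L/3-1)}$.

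The last step — where I expect the real, if routine, work to lie — is to absorb the prefactor $L^\nu$ and trade a decay of rate $m_3$ over distance $\approx L/3$ for one of rate $\tfrac14 m$ over distance $L$, uniformly for $\lambda\in I$. For this I would combine three facts: $h_{I^L}(\lambda)\ge L^{-\kappa}$ for $\lambda\in(I^L)_L=I$, which is \eqref{lowerbdh} applied to $I^L$; the inequality $m\ge L^{-\kappa^\pr}$, forced by $(m,I^L)$-localization of $\La_L(a)$ (the lower bound in \eqref{upbm}); and the elementary estimate $m_3(L/3-1)-\tfrac14 mL\ge\tfrac1{24}mL$, valid for $L$ large since $(1-\tau)/2<1$. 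Together these give $L^\nu\e^{-m_3 h_{I^L}(\lambda)(L/3-1)}\le\e^{-\frac14 m h_{I^L}(\lambda)L}$ as soon as $\nu\log L\le\tfrac1{24}m h_{I^L}(\lambda)L$, and the right-hand side is $\ge\tfrac1{24}L^{1-\kappa-\kappa^\pr}$, which beats $\nu\log L$ for $L$ large because $\kappa+\kappa^\pr<1$ by the exponent relations of Appendix~\ref{apexp}. Choosing $\cL_{d,\nu}$ large enough to subsume all the "$L$ large" requirements above (including those of Lemma~\ref{lemdecay2}) then completes the argument. The conceptual content is entirely in the instantiation of Lemma~\ref{lemdecay2}; everything after that is bookkeeping of exponents.
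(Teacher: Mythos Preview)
Your proof is correct and follows essentially the same route as the paper's own proof: prove the contrapositive by invoking Lemma~\ref{lemdecay2} with $\Theta=\Z^d$, box side $L$, and interval $I^L$ (so that the internal scale becomes $L^\gamma$ and the separation hypothesis reads $\tfrac12\e^{-L^{\gamma\beta}}$), use $I=(I^L)_L$, bound $|\psi(v)|$ by $\scal{v-a}^\nu\norm{T_a^{-1}\psi}$, and absorb the polynomial prefactor into the exponential. Your write-up is in fact more explicit than the paper's about why the absorption works (via $h_{I^L}(\lambda)\ge L^{-\kappa}$, $m\ge L^{-\kappa'}$, and $\kappa+\kappa'<1$), but the argument is the same.
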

  
 \begin{proof}

Let $\lambda \in I =\pa{I^L}_L$, and  suppose $ \abs{\lambda -\theta}  \ge\tfrac 1 2\e^{-L^{\gamma\beta}}$ for all $\theta\in \sigma_{I^L}(H_{\La_L(a)})$.  Let $\psi \in \cV(\lambda)$.  Then it follows from Lemma~\ref{lemdecay2} that  for large $L$ and $b\in \La_{\frac L 3}(a)$  we have
 \beq
 \abs{\psi(b)}\le \e^{-m_3(L) h_{I^L}(\lambda)\pa{\frac L 3 -1} }\norm{T_a^{-1} \psi}
  \scal{\tfrac L 2 + 1}^{\nu} \le \e^{-\frac 1 4 m h_{I^L}(\lambda)L} \norm{T_a^{-1} \psi}.
 \eeq
 \end{proof} 
 
\begin{proof}[Proof of Theorem~\ref{thmloc}]  Assume  Theorem~\ref{thmMSA} holds for some $L_0$, 
and let $I=I_\infty$, $m=m_\infty$. 
Consider  $L_0^\gamma \le L \in 2\N$  and $a\in \Z^d$.  We have  
 \beq
  \La_{5L}(a) =\bigcup_{b\in \set{ a+ \frac 1 2L  \Z^d}, \ \norm{b-a}\le  2L} \La_{L}(b).
  \eeq
  Let $\cY_{L,a}$ denote the event that  $\set{ \La_{L}(b)}_{b\in \set{ a+ \frac 1 2L  \Z^d}, \ \norm{b-a}\le  2L}$ is an $L^\gamma$-separated  family of   $(m,I^L)$-localizing boxes for $H$.  It follows from \eq{MSALnok2} and Lemma~\ref{lemSep}  that
  \beq
  \P\set{\cY_{L,a}^c}\le 9^d \e^{-L^\xi}  +   \wtilde{K} 9^{2d}\pa{L+1}^{2d} \e^{-\alpha L^{\gamma\beta}}\le C_{\mu} \e^{-L^\xi} .
  \eeq
  
  Suppose $\bom \in \cY_{ L,a}$, $\lambda \in I$,  and $\max_{b\in \La_{\frac L 3}(a)} W\up{a}_{\bom,\lambda}(b)> \e^{-\frac 1 4 m h_{I^L} (\lambda)L }$. It follows from Lemma~\ref{lemWimp} that $ \min_{\theta\in \sigma_{I^L}(H_{\La_L(a)})} \abs{\lambda -\theta}  < \tfrac 1 2\e^{-L^{\gamma\beta}}$.  Since the family of boxes is $L^\gamma$-separated  family for $H_{\bom}$,   we conclude that 
  \begin{align} 
   \min_{\theta\in \sigma_{I^L}(H_{\La_L(b)})} \abs{\lambda -\theta} \ge \tfrac 1 2 \e^{-L^{\gamma\beta}}
\end{align}
  for all $b\in \set{ a+ \frac 1 2L  \Z^d}$ with $\frac 32 L\le\norm{b-a}\le 2L$. Since
  \beq\label{Aell2}
 A_L(a)\subset \bigcup_{b\in \set{ a+ \frac 1 2L  \Z^d}, \frac 32  L \le \norm{b-a}\le 2L} \La_{L}^{\frac L 7}(b),
  \eeq
 it follows from Lemma~\ref{lemdecay2} that for all  $y\in A_L(a)$  we have, given  $\psi \in \cV_{\bom}(\lambda)$,
 \begin{align}
 \abs{\psi(y)}&\le  \e^{-m_3(L)  h_{I^L} (\lambda)\pa{\frac L 7 - 2}} \norm{T_a^{-1} \psi}\la \tfrac 5 2 L  +1 \ra^\nu\le \e^{-m h_{I^L} (\lambda) \frac L 8}\norm{T_a^{-1} \psi}&  \\  \nn
 & \le  \e^{- \frac 7 {132} m   h_{I^L} (\lambda)\norm{y-a} }\norm{T_a^{-1} \psi},
 \end{align}
so we get
\beq
W\up{a}_{\bom,\lambda}(y)\le \e^{-\frac 7 {132}        m  h_{I^L} (\lambda)\norm{y-a}} \qtx{for all} y\in A_L(a).
\eeq

Since we have \eq{boundGW}, we conclude that for $\bom \in \cY_{L,a}$ we always have
\begin{align} 
W\up{a}_{\bom,\lambda}(a)W\up{a}_{\bom,\lambda}(y) & \le 
\max \set{\e^{- \frac 7 {66}  m  h_{I^L} (\lambda)\norm{y-a}}\la  y-a\ra^\nu, \e^{- \frac 7 {132} m  h_{I^L} (\lambda)\ \norm{y-a}}}\\   \notag 
& \le \e^{- \frac 7 {132}  m  h_{I^L} (\lambda)\norm{y-a}} \qtx{for all} y\in A_L(a).
\end{align}
 \end{proof}

\appendix

\section{Exponents}\label{apexp}

 Given  $ 0<\xi<\zeta<1$,  we consider  $ \beta,\tau \in (0,1)$ and $\gamma >1$ such that
\begin{gather}\label{ttauzeta0}
0<\xi< \zeta<\beta<\frac 1 \gamma <1<\gamma < \sqrt {\tfrac \zeta \xi} \mqtx{and}   \max\set{ \gamma \beta, \tfrac {(\gamma-1)\beta +1}{\gamma}  }      <  \tau <1;
\end{gather}
it follows that
\beq\label{ttauzeta}
0<\xi<\xi\gamma^2< \zeta < \beta <\frac \tau \gamma   <\frac 1 \gamma <\tau <1< \frac {1-\beta}{\tau-\beta} < \gamma <\frac \tau \beta.
\eeq

We set
\beq\label{ttauzeta2}
\tzeta= \frac {\zeta +\beta}2  \in (\zeta, \beta)  \qtx{and} {\ttau}= \frac {1 +\tau}2  \in (\tau,1),
\eeq
so
\beq\label{gamtzetabeta}
(\gamma-1)\tzeta  +1 <(\gamma-1)\beta  +1< \gamma \tau.
\eeq

We take $\kappa\in (0,1) $ and   $\kappa^\pr \in [0,1)$ such that
\beq\label{gamtzetabeta2}
 \kappa+\kappa^\pr < \tau - \gamma \beta.  
\eeq

We let 
\beq \label{defvrho}
\vrho= \min\set{\kappa, \tfrac{1- \tau}2, \gamma \tau- (\gamma-1)\tzeta  -1}, \qtx{note} 0<\kappa \le \vrho <1,
\eeq 
and choose
\beq\label{vsdef}
\vs \in (0,1-\vrho] , \qtx{so} \vrho < 1-\vs .
\eeq

  We select exponents satisfying \eq{ttauzeta0}- \eq{vsdef}  and fix these  exponents.

\section{Suitable covers of a box}\label{subsecsc}
To perform the multiscale analysis in an efficient way  we  use 
  suitable covers of a box as in  \cite[Section~3.4]{EK2}, an adaptation of   \cite[Definition~3.12]{GKber}.  We state the definition and properties for the reader's convenience.

\begin{definition}\label{defcov} Fix  $\vs \in (0,1)$.  Let $\La_L=\La_L(x_0)$, $x_0 \in \R^d$ be  a box in $\Z^d$, and let $\ell < L$.
A suitable $\ell$-cover of $\La_L$ is
the collection of  boxes  
\begin{align}\label{standardcover}
{\mathcal C}_{L,\ell}={\mathcal C}_{L,\ell} \left(x_0 \right)= \set{ {\Lambda}_{\ell}(a)}_{a \in  \Xi_{L,\ell}},
\end{align}
where
\beq  \label{bbG}
 \Xi_{L,\ell}= \Xi_{L,\ell}(x_0):= \set{ x_0+ {\rho}\ell^\vs  \Z^{d}}\cap \La_L^\R 
\mqtx{with}   {\rho}\in  \br{\tfrac {1} {2},1}   \cap \set{\tfrac {L-\ell}{2 \ell^{\vs} k}; \, k \in \N }.
\eeq
We call ${\mathcal C}_{L,\ell} $ the suitable $\ell$-cover of $\La_L$  if   ${\rho} ={\rho}_{L,\ell}: =\max \br{\tfrac {1} {2},1}   \cap \set{\tfrac {L-\ell}{2 \ell^{\vs} k}; \, k \in \N }.
$
\end{definition}

\begin{lemma}[{\cite[Lemma~3.13]{GKber}, \cite[Lemma~3.10]{EK2}}]\label{lemcover } Let $\ell \le \frac  L   2$. Then for  every  box   $\La_L=\La_L(x_0)$, $x_0 \in \R^d$, a suitable $\ell$-cover
  ${\mathcal C}_{L,\ell}={\mathcal C}_{L,\ell}\left(x_0\right) $  satisfies
  \begin{align}\label{nestingproperty} 
&\La_L=\bigcup_{a \in  \Xi_{L,\ell}} {\Lambda}_{\ell}(a);\\ \label{covproperty}
&\text{for all}\; \; b \in\La_L  \sqtx{there is} {\Lambda}_{\ell}^{(b)} \in {\mathcal C}_{L,\ell}  \sqtx{such that}  
  b\in \pa{ {\Lambda}_{\ell}^{(b)}}^{\La_L,\frac {\ell -\ell^\vs}2},\\ \notag
 & \qtx{i.e.,} \La_L=\bigcup_{a \in  \Xi_{L,\ell}} {\Lambda}_{\ell}^{\La_L, \frac {\ell -\ell^\vs}2}(a);
    \\ \label{number}
&   \# \Xi_{L,\ell}= \pa{ \tfrac{L-\ell} {{\rho} \ell^\vs}+1}^{d }\le   \pa{\tfrac{2L} {\ell^\vs}}^{d}.  
\end{align}
Moreover,  given $a \in x_0 + {\rho}\ell^\vs  \Z^{d}$ and $k \in \N$, it follows that
\beq \label{nesting}
{\Lambda}_{(2  k {\rho}\ell^\vs  + \ell)}(a)= \bigcup_{b \in  \{ x_0 + {\rho}\ell^\vs  \Z^{d}\}\cap {\Lambda}^{\R}_{(2k {\rho} \ell^\vs + \ell)}(a) } {\Lambda}_{\ell}(b),
\eeq
and  $ \{ \Lambda_{\ell}(b)\}_{b \in  \{ x_0 + {\rho}\ell^\vs  \Z^{d}\}\cap {\Lambda}^{\R}_{(2k {\rho} \ell^\vs + \ell)}(a) }$ is a suitable $\ell$-cover of the box $\Lambda_{(2k {\rho} \ell^\vs + \ell)}(a)$. 
\end{lemma}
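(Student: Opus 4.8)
The plan is to reduce everything to a one–dimensional interval–covering statement, the single structural input being that $\rho$ is chosen so that $N:=\tfrac{L-\ell}{2\rho\ell^{\vs}}$ is a \emph{positive integer}; this is exactly what forces the outermost small boxes of the cover to meet the faces of $\La_L^{\R}$ flush. The first step is to see that $\rho_{L,\ell}$ is well defined, i.e.\ that $[\tfrac12,1]\cap\{\tfrac{L-\ell}{2\ell^{\vs}k}:k\in\N\}\neq\emptyset$. The numbers $t_k=\tfrac{L-\ell}{2\ell^{\vs}k}$ decrease strictly to $0$ with $t_{k+1}/t_k=k/(k+1)\ge\tfrac12$, while $t_1=\tfrac{L-\ell}{2\ell^{\vs}}\ge\tfrac{\ell^{1-\vs}}{2}\ge\tfrac12$ because $\ell\le\tfrac L2$ gives $L-\ell\ge\ell\ge1$; hence either $t_1\in[\tfrac12,1]$, or the first $k_0$ with $t_{k_0}\le1$ satisfies $t_{k_0}\ge t_{k_0-1}/2>\tfrac12$, so again $t_{k_0}\in(\tfrac12,1]$. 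Once $\rho$ is fixed we have $\rho\ell^{\vs}\le\ell^{\vs}\le\ell$, every center $a=x_0+\rho\ell^{\vs}z\in\Xi_{L,\ell}$ satisfies $\La_\ell^{\R}(a)\subset\La_L^{\R}$ (equivalently $\norm{z}\le N$), and the count $\#\Xi_{L,\ell}=\pa{\tfrac{L-\ell}{\rho\ell^{\vs}}+1}^{d}$ of \eqref{number} follows since each coordinate of $z$ ranges over the $2N+1$ integers in $[-N,N]$; the bound $\le\pa{\tfrac{2L}{\ell^{\vs}}}^{d}$ then uses $\rho\ge\tfrac12$ together with $2\ell^{1-\vs}\ge1$.

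For \eqref{nestingproperty} and \eqref{covproperty} I would argue one coordinate at a time. In a fixed coordinate the grid values $x_0+\rho\ell^{\vs}z$, $|z|\le N$, have extremes $x_0\pm\tfrac{L-\ell}{2}$, so the outermost length-$\ell$ intervals $[x_0\pm\tfrac{L-\ell}{2}\mp\tfrac\ell2,\,x_0\pm\tfrac{L-\ell}{2}\pm\tfrac\ell2]$ reach exactly $x_0\pm\tfrac L2$; since consecutive grid values are $\rho\ell^{\vs}\le\ell$ apart, the associated intervals overlap and their union is precisely $[x_0-\tfrac L2,x_0+\tfrac L2]$. Given $b\in\La_L$, picking such a grid value in each coordinate produces $a\in\Xi_{L,\ell}$ with $\norm{b-a}\le\tfrac\ell2$, so $b\in\La_\ell(a)$; intersecting with $\Z^{d}$ gives \eqref{nestingproperty}. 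For \eqref{covproperty} one sharpens the choice coordinatewise: either $b$ lies within $\tfrac{\rho\ell^{\vs}}2\le\tfrac{\ell^{\vs}}2\le\tfrac\ell2-\fl{\tfrac{\ell-\ell^{\vs}}2}$ of some grid value, or $b$ lies beyond the last grid value, in which case the corresponding face of $\La_\ell(a)$ already sits on $\partial\La_L^{\R}$ and imposes no interiority constraint relative to $\La_L$; in every case this places $b$ in $\pa{\La_\ell(a)}^{\La_L,\frac{\ell-\ell^{\vs}}2}$, which is \eqref{covproperty}.

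For the nesting assertion \eqref{nesting} I would observe that $\rho=\tfrac{(2k\rho\ell^{\vs}+\ell)-\ell}{2\ell^{\vs}k}$, so $\rho$ is an admissible grid parameter for the box $\La_{2k\rho\ell^{\vs}+\ell}(a)$ too; since $a\in x_0+\rho\ell^{\vs}\Z^{d}$ we have $x_0+\rho\ell^{\vs}\Z^{d}=a+\rho\ell^{\vs}\Z^{d}$, so the indicated family $\{\La_\ell(b)\}$ is a suitable $\ell$-cover of $\La_{2k\rho\ell^{\vs}+\ell}(a)$, and the displayed union is just \eqref{nestingproperty} for that cover.

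I expect the only genuinely delicate point to be \eqref{covproperty}: one has to track which coordinate directions of a near-boundary box $\La_\ell(a)$ actually enter the ``deep inside $\La_L$'' condition, and check that the available margin $\tfrac\ell2-\fl{\tfrac{\ell-\ell^{\vs}}2}$ never drops below the worst-case grid gap $\tfrac{\rho\ell^{\vs}}2$ in the remaining directions. Everything else is routine bookkeeping with floor functions and the inequalities $\rho\in[\tfrac12,1]$, $\ell\ge1$, $\vs<1$.
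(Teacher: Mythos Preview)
The paper does not supply its own proof of this lemma; it is quoted from \cite[Lemma~3.13]{GKber} and \cite[Lemma~3.10]{EK2} and used as a black box. Your coordinate-by-coordinate verification is the natural direct argument and is essentially correct, so there is no real comparison to make beyond noting that you prove what the paper only cites.

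Two remarks. First, your claim that every $a=x_0+\rho\ell^{\vs}z\in\Xi_{L,\ell}$ automatically satisfies $\La_\ell^{\R}(a)\subset\La_L^{\R}$ is not literally true under the definition $\Xi_{L,\ell}=\{x_0+\rho\ell^{\vs}\Z^d\}\cap\La_L^{\R}$ printed here: once $\ell^{1-\vs}\ge 2\rho$ there are grid points with $\norm{z}=N+1$ still lying in $\La_L^{\R}$, and with those included both the exact count in \eqref{number} and the equality in \eqref{nestingproperty} would fail. The index set you actually argue with, $\{a:\norm{z}\le N\}=\{a:\La_\ell^{\R}(a)\subset\La_L^{\R}\}$, is the one that makes the lemma true and is what the cited references intend; this is a wrinkle in the paper's transcription of the definition rather than a flaw in your reasoning. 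Second, your identification of \eqref{covproperty} as the only genuinely delicate step is accurate, and the margin comparison you outline (nearest-grid-point distance at most $\tfrac{\rho\ell^{\vs}}2$ in each coordinate, with boundary faces of $\La_\ell(a)$ that sit on $\partial\La_L^{\R}$ contributing nothing to $\La_L\setminus\La_\ell(a)$) is exactly what is required.
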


Note that $\Lambda_{\ell}^{(b)}$  does not denote a box  centered at $b$, just some box in ${\mathcal C}_{L,\ell} \left(x_0 \right)$ satisfying \eq{covproperty}.     By  $\Lambda_{\ell}^{(b)}$ we will always mean such a box.  We will use
\beq
\dist\pa{b, {\partial^{\La_L}_{\mathrm{in}}\Lambda_{\ell}^{(b)}}}  \ge  \tfrac {\ell -\ell^\vs} 2 -1 \qtx{for all} b \in \La_L.
\eeq
Note also  that  ${\rho} \le 1$ yields \eq{covproperty}.   We specified ${\rho}={\rho}_{L,\ell}$ in  for \emph{the} suitable $\ell$-cover for convenience, so there is no ambiguity  in the definition of ${\mathcal C}_{L,\ell} \left(x_0 \right) $.

Suitable covers are convenient for the construction of buffered subsets (see Definition~\ref{defbuff}) in the multiscale analysis, where  we will assume   $\vs \in (0,1)$ is as in \eq{vsdef}.   We  will use the following observation:

\begin{remark} \label{remdisj}
 Let ${\mathcal C}_{L,\ell}$ be a suitable $\ell$-cover for the box $\La_L$, and set
$k_\ell= k_{L,\ell }=  \fl{\rho^{-1} \ell^{1-\vs}} +1$.
Then for all $a,b \in {\mathcal C}_{L,\ell}$ we have
\beq\label{disjbox}
\La_\ell^\R(a) \cap \La_\ell^\R(b) =\emptyset \quad \iff \quad \norm{a-b} \ge k_\ell \rho \ell^\vs .\eeq
\end{remark}

\end{document}